\renewcommand{\bbbt}{\mathbb{T}}
\renewcommand{\bbbn}{\mathbb{N}}
\newcommand{\ce}{\colonequals}
\newcommand{\cce}{\coloncolonequals}
\newcommand{\extension}[1][]{\mathscr{E}^{#1}}
\newcommand{\pprotocols}[1][]{\envprotocols[{#1}]\times\agprotocols[{#1}]}
\newcommand{\gevents}[1][]{\mathit{GEvents}_{#1}}
\newcommand{\bevents}[1][]{\mathit{BEvents}_{#1}}
\newcommand{\fevents}[1][]{\mathit{FEvents}_{#1}}
\newcommand{\sevents}[1][]{\mathit{SysEvents}_{#1}}
\newcommand{\gtrueevents}[1][]{\overline{\mathit{GEvents}}_{#1}}
\newcommand{\gtrueactions}[1][]{\overline{\mathit{GActions}}_{#1}}
\newcommand{\gactions}[1][]{\gtrueactions[#1]}
\newcommand{\gtruethings}[1][]{\overline{\ghaps[#1]}}
\newcommand{\events}[1][]{{\mathit{Events}}_{#1}}
\newcommand{\actions}[1][]{{\mathit{Actions}}_{#1}}
\newcommand{\haps}[1][]{{\mathit{Haps}}_{#1}}
\newcommand{\ghaps}[1][]{{\mathit{GHaps}}_{#1}}
\newcommand{\gtruehaps}[1][]{\overline{\mathit{GHaps}}_{#1}}
\newcommand{\truethings}[1][]{\haps[#1]}
\newcommand{\agents}{\mathcal{A}}
\newcommand{\agentsof}[1]{\mathop{\mathcal{A}}\left(#1\right)}
\newcommand{\msgs}[1][]{\Msgs[#1]}
\newcommand{\Msgs}[1][]{\mathit{Msgs}^{#1}}
\newcommand{\glob}[2]{{\mathop{\mathit{global}}\left(\ifstrempty{#1}{\dots}{#1},\ifstrempty{#2}{\dots}{#2}\right)}}
\newcommand{\gsend}[4]{\mathop{\mathit{gsend}}(%
    \ifstrempty{#1}{\dots}{#1},%
    \ifstrempty{#2}{\dots}{#2},%
    \ifstrempty{#3}{\dots}{#3},%
    \ifstrempty{#4}{\dots}{#4}%
)}
\newcommand{\send}[2]{\mathop{\mathit{send}}(%
    \ifstrempty{#1}{\dots}{#1},%
    \ifstrempty{#2}{\dots}{#2}%
)}
\newcommand{\grecv}[4]{\mathop{\mathit{grecv}}(%
    \ifstrempty{#1}{\dots}{#1},%
    \ifstrempty{#2}{\dots}{#2},%
    \ifstrempty{#3}{\dots}{#3},%
    \ifstrempty{#4}{\dots}{#4}%
)}
\newcommand{\recv}[2]{\mathop{\mathit{recv}}(%
    \ifstrempty{#1}{\dots}{#1},%
    \ifstrempty{#2}{\dots}{#2}%
)}
\newcommand{\fakeof}[2]{{\mathop{\mathit{fake}}\left(#1,#2\right)}}
\newcommand{\localstates}[2][]{%
    {\mathscr{L}^{#1}_{#2}}%
}
\newcommand{\globalstates}{\mathscr{G}}
\newcommand{\globalinitialstates}[1][]{\mathscr{G}_{#1}(0)}
\newcommand{\failof}[1]{{\mathop{\mathit{fail}}\left(#1\right)}}
\newcommand{\sleep}[1]{{\mathop{\mathit{sleep}}\left(#1\right)}}
\newcommand{\hib}[1]{{\mathop{\mathit{hibernate}}\left(#1\right)}}
\newcommand{\tick}{\textbf{noop}}
\newcommand{\mistakefor}[2]{#1\mapsto#2}
\newcommand{\correct}[1]{correct_{#1}}
\newcommand{\faulty}[1]{faulty_{#1}}
\newcommand{\fake}[2][]{\mathit{fake}_{#1}\left(#2\right)}
\newcommand{\occurred}[2][]{\mathit{occurred}_{#1}(#2)}
\newcommand{\trueoccurred}[2][]{\overline{\mathit{occurred}}_{#1}(#2)}
\newcommand{\eventually}[1]{\lozenge{#1}}
\newcommand{\kstruct}[3]{(\Kstruct{#1},{#2},{#3})}
\newcommand{\unaware}[2]{{\mathit{unaware}(#1,#2)}}
\newcommand{\envprotocol}[1]{P_{\epsilon}\ifstrempty{#1}{}{\left(#1\right)}}
\newcommand{\agprotocol}[2]{{P_{#1}\ifstrempty{#2}{}{\left(#2\right)}}}
\newcommand{\joinprotocol}[1]{P\ifstrempty{#1}{}{\left(#1\right)}}
\newcommand{\envprotocols}[1][]{\mathscr{C}^{#1}_\epsilon}
\newcommand{\agprotocols}[1][]{\mathscr{C}^{#1}}
\newcommand{\failed}[3][]{{Failed_{#1}\left(#2\ifstrempty{#3}{}{,#3}\right)}}
\newcommand{\adversary}{adversary}
\newcommand{\alphae}[3][]{{%
    \alpha_{\epsilon_{#1}}^{#3}\ifstrempty{#2}{}{\left({#2}\right)}%
}}
\newcommand{\alphaag}[3]{{%
    \alpha_{#1}^{#3}\ifstrempty{#2}{}{\left(#2\right)}%
}}
\newcommand{\betae}[3][]{{%
    \beta_{\epsilon_{#1}}^{#3}\ifstrempty{#2}{}{\left({#2}\right)}%
}}
\newcommand{\betag}[3][]{{%
    \beta_{g_{#1}}^{#3}\ifstrempty{#2}{}{\left({#2}\right)}%
}}
\newcommand{\betab}[3][]{{%
    \beta_{b_{#1}}^{#3}\ifstrempty{#2}{}{\left({#2}\right)}%
}}
\newcommand{\betaf}[3][]{{%
    \beta_{f_{#1}}^{#3}\ifstrempty{#2}{}{\left({#2}\right)}%
}}
\newcommand{\betaout}[3][]{{%
    \overline{\beta}\strut^{#3}_{\epsilon_{#1}}\ifstrempty{#2}{}{\left({#2}\right)}%
}}
\newcommand{\betaag}[3]{{%
    \beta_{#1}^{#3}\ifstrempty{#2}{}{\left(#2\right)}%
}}
\newcommand{\filtere}[3][]{filter^{#1}_{\epsilon}\ifstrempty{#2}{}{\left(#2,#3\right)}}
\newcommand{\filterag}[4][]{filter^{#1}_{#2}\ifstrempty{#3}{}{\left(#3,#4\right)}}
\newcommand{\update}[2]{update\ifstrempty{#1}{}{\left(#1,#2\right)}}
\newcommand{\updatee}[2]{update_{\epsilon}\ifstrempty{#1}{}{\left(#1,#2\right)}}
\newcommand{\updateag}[4]{update_{#1}\ifstrempty{#2}{}{\left(#2,#3,#4\right)}}
\newcommand{\sigmaof}[1]{\sigma\ifstrempty{#1}{}{\bigl(#1\bigr)}}
\newcommand{\transition}[2][{\envprotocol{},\joinprotocol{}}]{\tau_{#1}\ifstrempty{#2}{}{\left({#2}\right)}}
\newcommand{\run}[3][]{r#1_{#2}\left(#3\right)}
\newcommand{\System}[1][]{R^{#1}}
\newcommand{\transitionExt}[3][{\envprotocol{},\joinprotocol{}}]{\tau^{#2}_{#1}\ifstrempty{#3}{}{\left({#3}\right)}}
\newcommand{\transitionfrom}[3][]{\tau^{#1}\ifstrempty{#2}{}{\left(#2\right)}\ifstrempty{#3}{}{\left(#3\right)}}
\newcommand{\tauprotocol}[3]{\tau^{#1}_{{#2},{#3}}}
\newcommand{\Admissibility}[1][b]{\Psi^{#1}}
\newcommand{\system}[1]{{R^{#1}}}
\newcommand{\EDel}{\mathit{EDel}}
\newcommand{\interpretation}[2]{\pi\ifstrempty{#1}{}{\left(#1\right)\left(#2\right)}}
\newcommand{\pwrelation}[2][]{\sim^{#1}_{#2}}
\newcommand{\Kstruct}[1]{\I^{#1}}
\newcommand{\K}[2]{K_{#1}\ifstrempty{#2}{}{#2}}
\newcommand{\everyoneK}[3][]{E^{#1}_{#2}#3}
\newcommand{\commonK}[2]{C_{#1}#2}
\newcommand{\intsys}{\I}
\newcommand{\adj}{\mathit{adj}}
\newcommand{\newfakerule}[3]{\mathit{PFake}_{#1}^{#2}\ifstrempty{#3}{}{\left({#3}\right)}}
\newcommand{\improvednewfakerule}[3]{{\mathit{BPFake}}_{#1}^{#2}\ifstrempty{#3}{}{\left({#3}\right)}}
\newcommand{\newfreezerule}[1]{\mathit{CFreeze}\ifstrempty{#1}{}{\left({#1}\right)}}
\newcommand{\newffreezerule}[2]{\mathit{BFreeze}_{#1}\ifstrempty{#2}{}{\left({#2}\right)}}
\renewcommand{\vDash}{\models}
\renewcommand{\nvDash}{\not\models}
\newcommand{\A}{\mathfrak{a}}
\newcommand{\E}{\mathfrak{e}}
\newcommand{\I}{\mathcal{I}}
\newcommand{\prop}{\mathit{Prop}}
\newcommand{\NSR}[1]{\mathit{NSR}\ifstrempty{#1}{}{\left(#1\right)}}
\newcommand{\nsr}[1]{\mathit{nsr_{#1}}}
\newcommand{\EDelC}{EDel_C}
\newcommand{\adm}{\mathbf{Adm}}
\newcommand{\jp}{\mathbf{JP}}
\newcommand{\ejp}{\mathbf{EnvJP}}
\newcommand{\efjp}{\mathbf{EvFJP}}
\newcommand{\efejp}{\mathbf{EvFEnvJP}}
\newcommand{\jpfb}{\mathbf{JP-AFB}}
\newcommand{\ejpfb}{\mathbf{EnvJP-AFB}}
\newcommand{\efjpfb}{\mathbf{EvFJP-AFB}}
\newcommand{\efejpfb}{\mathbf{EvFEnvJP-AFB}}
\newcommand{\others}{\mathbf{Others}}
\newcommand{\jpdc}{\mathbf{JP_{DC}}}
\newcommand{\ejpdc}{\mathbf{EnvJP_{DC}}}
\newcommand{\efjpdc}{\mathbf{EvFJP_{DC}}}
\newcommand{\efejpdc}{\mathbf{EvFEnvJP_{DC}}}
\newcommand{\othersdc}{\mathbf{Others_{DC}}}
\newcommand{\efejpdcmono}{\mathbf{EvFEnvJP_{DC\,mono}}}
\newcommand{\othersdcmono}{\mathbf{Others_{DC\,mono}}}
\newcommand{\ic}{\mathscr{I}}
\newcommand{\saf}{S}
\newcommand{\liv}{L}
\newcommand{\systrans}{\system{}}
\newcommand{\tracesaf}{\mathscr{T}}
\newcommand{\opsaf}{\mathscr{O}}
\newcommand{\prefset}{\mathit{PR}^{trans}}
\newcommand{\clock}{\text{\tiny\Taschenuhr}}
\newcommand{\ownsubsubsection}[1]{\medskip\noindent\textbf{#1}}
\begin{document}
\title{The Persistence of False Memory:\\ Brain in a Vat Despite Perfect Clocks%
}
%
%

\author{Thomas Schl\"ogl\orcidID{0000-0003-0037-0426} \and
Ulrich Schmid\orcidID{0000-0001-9831-8583} \and
Roman Kuznets\orcidID{0000-0001-5894-8724}}
\authorrunning{T. Schl\"ogl et al.}
%

%
\institute{TU Wien, Vienna, Austria
}
\maketitle              
\begin{abstract}
Recently, a detailed epistemic reasoning framework
for multi-agent systems with byzantine faulty asynchronous agents and possibly unreliable 
communication was introduced. We have developed a modular extension framework implemented on top of it, 
which allows to
encode and safely combine additional system assumptions commonly used in the modeling and 
analysis of fault-tolerant distributed systems, like reliable communication, time-bounded communication, multicasting, synchronous and lock-step synchronous agents and even agents with coordinated actions.
We use this extension framework for analyzing basic properties of synchronous and lock-step synchronous agents, such as the agents' local and global fault detection abilities. Moreover, we show that even
the perfectly synchronized clocks available in lock-step synchronous systems cannot be used to
avoid “brain-in-a-vat” scenarios.

\end{abstract}

\section{Introduction}
\label{sec:intro}

Epistemic reasoning is a powerful technique for modeling and analysis of distributed 
systems \cite{bookof4,HM90}, which has proved its utility also for fault-tolerant systems:
Benign faults, i.e., nodes (termed agents subsequently) that may crash and/or drop messages, have been
studied right from the beginning \cite{moses1986programming,MT88,dwork1990knowledge}. Recently,   
a comprehensive epistemic reasoning framework for agents that may even behave arbitrarily 
(``byzantine''~\cite{lamport1982byzantine}) faulty have been introduced in~\cite{KPSF19:FroCos,KPSF19:TARK}. Whereas it fully captures byzantine asynchronous systems, it is
currently not suitable for modeling and analysis of the wealth
of other distributed systems, most notably, synchronous agents
and reliable multicast communication.
 \looseness=-1

As we extend the framework~\cite{KPSF19:FroCos} in this paper, we briefly summarize its basic
notation. There is a finite set~$\agents=\{1,\dots,n\}$ (for $n \ge 2$) of \textbf{agents}, who do not have access to a global clock and execute a possibly non-deterministic joint \textbf{protocol}. In such a protocol,
agents can perform \textbf{actions}, e.g.,~send \textbf{messages} $\mu\in\msgs$, and witness \textbf{events}, in particular,~message deliveries: the action of sending a copy (numbered $k$) of a message $\mu \in \msgs$ to an agent $j\in\agents$ in a protocol is denoted by $\send{j}{\mu_k}$, whereas a receipt of such a message from  $i\in\agents$ is recorded locally as $\recv{i}{\mu}$. 
The set of all \textbf{actions} (\textbf{events}) available to an agent $i\in \agents$ is denoted by $\actions[i]$ ($\events[i]$), subsumed as \textbf{haps} $\haps[i] \ce \actions[i]\sqcup\events[i]$, with $\actions \ce \bigcup_{i\in \agents}\actions[i]$, $\events \ce  \bigcup_{i\in \agents}\events[i]$, and $\haps \ce \actions \sqcup \events$. 

The other main player in \cite{KPSF19:FroCos} is the \textbf{environment} $\epsilon$, which takes care of scheduling haps, failing agents, and resolving non-deterministic choices in the joint protocol. Since the notation above only describes the local view of agents, there is also a \textbf{global} syntactic representation of each hap, which is only available to the environment and contains additional information (regarding the time of a hap, a distinction whether a hap occurred in a correct or byzantine way, etc.) that will be detailed in Sect.~\ref{sec:model}.

The model utilizes a discrete time model, of arbitrarily fine resolution, with time 
domain $t\in \bbbt\ce\bbbn=\{0,1,\dots\}$.
All haps taking place after a \textbf{timestamp} $t\in\bbbt$ and no later than $t+1$ are grouped into a \textbf{round} denoted $t$\textonehalf{} and treated as happening simultaneously.
In order to prevent agents from inferring the global time by counting rounds, agents are generally unaware of a round, unless they perceive an event or are prompted to act by the environment.
The latter is accomplished by special system events $go(i)$, which are complemented by two more system events for faulty agents: $\sleep{i}$ and $\hib{i}$ signify a failure to activate the agent's protocol and differ in that the latter does not even wake up the agent. 
None of the \textbf{system events} $\sevents[i] \ce \{go(i), \sleep{i},\hib{i}\}$ is directly observable by agents.

Events and actions that can occur in each round, if enabled by $go(i)$, are determined by the protocols for agents and the environment, with non-deter\-min\-istic choices resolved by the \textbf{adversary} that is considered
part of the environment. A \textbf{run} $r$ is a function mapping a point in time $t$ to an $n+1$ tuple, consisting of the environment's history and local histories $r(t) = (r_{\epsilon}(t),r_1(t),\dots,r_n(t))$ representing the state of the whole system at that time $t$.
The \textbf{environment's history}~$r_\epsilon(t) \in \localstates{\epsilon}$ is a sequence of all haps that happened, in contrast to the local histories faithfully recorded in the global format.
Accordingly, $r_\epsilon(t+1) = X \colon r_\epsilon(t)$ for the set~$X \subseteq \ghaps$ of all haps from round~$t$\textonehalf{}.
The exact updating procedure including the update of the local \textbf{agent histories} is the result of a complex state transition consisting of several phases, which are described in Sect.~\ref{sec:model}.
Proving the correctness of a protocol for solving a certain distributed computing problem boils down to studying the set of runs that can be generated.\looseness=-1


In its current version, \cite{KPSF19:FroCos} only supports \emph{asynchronous} agents
and communication, where both agents and message transmission may be 
arbitrarily slow. 
Notwithstanding the importance of asynchronous distributed
systems in general~\cite{lynch1996distributed}, however, it is well-known that adding
faults to the picture renders important distributed computing problems like
consensus impossible \cite{FLP85}. There is hence a vast body of research that
relies on stronger system models that add additional assumptions. One prominent
example are \textbf{lockstep synchronous systems}, where agents take actions simultaneously 
at times $t\in\bbbn$, i.e., have
access to a perfectly synchronized global clock, and 
messages sent at time $t$ are received before time~$t+1$. It is
well-known that consensus can be solved in synchronous systems with $n \geq 3f+1$
nodes, if at most $f$ of those behave byzantine~\cite{lamport1982byzantine}.\looseness=-1

\ownsubsubsection{Related work.} 
Epistemic analysis has been successfully applied to synchronous systems
with both fault-free \cite{ben2014beyond} and benign faulty agents 
\cite{CGM14,GM18:PODC} in the past.
In~\cite{ben2014beyond}, Ben-Zvi and Moses considered the 
\emph{ordered response} problem in fault-free time-bounded distributed systems
and showed that any correct solution has to establish a certain
nested knowledge. They also introduced the syncausality relation,
which generalizes Lamport's happens-before relation \cite{Lam78}
and formalizes the knowledge gain due to ``communication-by-time''
in synchronous systems. This work was extended to tightly coordinated 
responses in \cite{BM13:ICLA,GM13:TARK}.

Synchronous distributed systems with agents suffering from benign
faults such as crashes or message send/receive omissions have already been studied
in~\cite{moses1986programming,MT88}, primarily in the
context of agreement problems \cite{dwork1990knowledge,halpern2001characterization}, 
which require some form of common knowledge. More recent results are
unbeatable consensus algorithms in synchronous systems
with crash faults \cite{CGM14}, and the discovery of the
importance of \emph{silent choirs} \cite{GM18:PODC}
for message-optimal protocols in crash-resilient systems.
By contrast, we are not aware of any attempt on the epistemic analysis of
fault-tolerant distributed systems with byzantine agents.

\ownsubsubsection{Main contributions.} In the present paper, we extend \cite{KPSF19:FroCos} by a 
modular \emph{extension framework}, which allows to encode and safely combine additional 
system assumptions typically used in the modeling and analysis of fault-tolerant 
distributed systems, like reliable communication, time-bounded communication, multicasting, 
synchronous and lock-step synchronous agents and even agents with coordinated actions.
We therefore establish the first framework that facilitates a rigorous epistemic modeling 
and analysis of general distributed systems with byzantine faulty agents.
We demonstrate its utility by analyzing some basic properties of the synchronous 
and lock-step synchronous agent extensions, namely, the agents' local and global fault 
detection abilities. Moreover, we prove that even the perfectly synchronized
clocks available in the lock-step synchronous extension cannot prevent 
a “brain-in-a-vat” scenario. 

\ownsubsubsection{Paper organization.} Additional details of the existing basic modeling framework~\cite{KPSF19:FroCos} required for our extensions are provided in Sect.~\ref{sec:model}.
Sect.~\ref{sec:synchronous_agents} presents the cornerstones of our synchronous extension and establishes some introspection results and the possibility of brain-in-a-vat scenarios.
Sect.~\ref{sec:extensions} provides an overview of our fully-fledged extension framework, the utility of which is demonstrated by investigating these issues in lock-step synchronous systems in Sect.~\ref{sec:lss_agents}.
Some conclusions in Sect.~\ref{sec:conclusions} round-off our paper.
All the material omitted from the main body of the paper due to lack of space is provided in Appendix~\ref{sec:appendix}.\footnote{%
Please note that the comprehensive appendix has been provided solely 
for the convenience of the reviewers; all material collected there 
can reasonably be omitted. It is/will of course be available in the 
existing publications and in an extended report.%
}

\section{The Basic Model}
\label{sec:model}


Since this paper extends the framework from~\cite{KPSF19:FroCos}, we first briefly 
recall the necessary details and aspects needed for defining our extension framework.
 
\ownsubsubsection{Global haps and faults.} As already mentioned in \cref{sec:intro}, there is a global version of every $\haps$ that provides
additional information that is only accessible to the environment. Among it is the timestamp $t$  of every correct action $a\in\actions[i]$, as initiated by agent $i$ in the local format, which is provided by  a one-to-one function $\glob{i,t}{a}$.
Timestamps are especially crucial for proper message processing with
$
	\glob{i,t}{\send{j}{\mu_k}} \ce \gsend{i}{j}{\mu}{id(i,j,\mu,k,t)}
$ 
for some one-to-one function $id \colon \agents \times \agents \times \msgs \times \bbbn \times \bbbt \to \bbbn$ that assigns each sent message a unique \textbf{global message identifier} (GMI). 
These GMIs enable the direct linking of send actions to their corresponding delivery events, most importantly used to ensure that only sent messages can be delivered (causality).
The resulting sets $\gtrueactions[i] \ce \{\glob{i,t}{a} \mid t\in \bbbt, a \in \actions[i]\}$ of correct actions in global format are pairwise disjoint due to the injectivity of $\mathit{global}$.
We set $\gtrueactions \ce \bigsqcup_{i\in\agents}\gtrueactions[i]$.

Unlike correct actions, correct events witnessed by agent $i$ are generated by the environment $\epsilon$, hence are already produced in the global format $\gtrueevents[i]$. 
We define $\gtrueevents \ce \bigsqcup_{i\in \agents} \gtrueevents[i]$ assuming them to be pairwise disjoint and $\gtruethings = \gtrueevents \sqcup \gtrueactions$.
A byzantine event is an event that was perceived by an agent despite not taking place.
In other words, for each correct event $E \in \gtrueevents[i]$, we use a faulty counterpart $\fakeof{i}{E}$ and will make sure that agent $i$ cannot distinguish between the two. 
An important type of correct global events is delivery $\grecv{j}{i}{\mu}{id}\in \gtrueevents[i]$ of message $\mu$ with GMI $id \in \bbbn$ sent from agent $i$ to agent $j$. 
The GMI must be a part of the global format (especially for ensuring causality) but cannot be part of the local format because it contains information about the time of sending, which should not be accessible to agents.
The stripping of this information before updating local histories is achieved by the function 
$
	\mathit{local} \colon  \gtruethings \longrightarrow  \haps
$
converting \textbf{correct} haps from the global into the local formats for the respective agents in such a way that $\mathit{local}$ reverses $\mathit{global}$, i.e., $\mathit{local}\bigl(\glob{i,t}{a}\bigr) \ce a$, in particular, $\mathit{local}{\bigl(\grecv{i}{j}{\mu}{id}\bigr)} \ce \recv{j}{\mu}$. 

To allow for the most flexibility regarding who is to blame for an erroneous action, faulty actions are modeled as byzantine events of the form $\fakeof{i}{\mistakefor{A}{A'}}$ where $A, A' \in \gtrueactions[i] \sqcup\{\tick\}$ for a special \textbf{non-action} $\tick$ in global format. 
These byzantine events are controlled by the environment and correspond to an agent violating its protocol by performing the action $A$ (in global format), while recording in its local history that it either performs $a' = \mathit{local}(A')\in \actions[i]$ if $A' \in \gactions[i]$ or does nothing if $A' = \tick$ (note that performing $A = \tick$ means not acting). 
The byzantine inaction $\failof{i}$ defined as $\fakeof{i}{\mistakefor{\tick}{\tick}}$ can be used to make agent $i$ faulty without performing any actions and without leaving a record in $i$'s local history. 
The set of all $i$'s~byzantine events, corresponding to both faulty events and actions, is denoted by $\bevents[i]$, with $\bevents \ce \bigsqcup_{i\in\agents} \bevents[i]$.
To summarize, $\gevents[i] \ce \gtrueevents[i] \sqcup \bevents[i] \sqcup \sevents[i]$ with 
$\gevents\ce \bigsqcup_{i\in\agents} \gevents[i]$, $\ghaps \ce \gevents \sqcup \gactions$.
Horizontal bars signify phenomena that are correct, as contrasted by those that may be correct or byzantine.

\ownsubsubsection{Protocols, state transitions and runs.} The events and actions that occur in each round are determined by protocols (for agents and the environment) and non-determinism (adversary).
Agent $i$'s \textbf{protocol} 
$\agprotocol{i}{} \colon \localstates{i} \to 2^{2^{\actions[i]}}\setminus\{\varnothing\}$ 
provides a range $\agprotocol{i}{r_i(t)}$ of sets of actions based on $i$'s current local state~$r_i(t) \in \localstates{i}$ at time $t$ in run $r$, from which the adversary non-deterministically picks one.
Similarly the environment provides a range of (correct, byzantine, and system) events via its protocol 
$\envprotocol{} \colon \bbbt \to 2^{2^{\gevents}}\setminus \{\varnothing\}$, 
which depends on a timestamp~$t\in\bbbt$ but \textbf{not} on the current state, in order to maintain its impartiality.
%
It is required that all events of round~$t$\textonehalf{} be mutually compatible at time~$t$, called $t$-coherent (for details see Appendix, Def.~\ref{def:t-coherence}). The set of all global states is denoted by~$\globalstates$.\looseness=-1

Agent $i$'s local view of the system after round $t$\textonehalf{} is recorded in $i$'s \textbf{local state}~$r_i(t+1)$, also called $i$'s \textbf{local history}, sometimes denoted $h_i$, which is 
agent~$i$'s share of the global state $h=r(t) \in \globalstates$.
$r_i(0)\in\Sigma_i$ are the \textbf{initial local states}, with $\globalinitialstates \ce \prod_{i\in \agents}\Sigma_i$. 
If a round contains neither  $go(i)$ nor any event to be recorded in  $i$'s local history, then the history $r_i(t+1)=r_i(t)$ remains unchanged, denying the agent knowledge that the round just passed. 
Otherwise, $r_i(t+1) = X \colon r_i(t)$, for the set $X \subseteq \haps[i]$ of all actions and events perceived by~$i$ in round~$t$\textonehalf{}, where $\colon$~stands for concatenation.  
Thus the local history~$r_i(t)$ is a sequence of all haps as perceived by~$i$ in rounds it was \textbf{active} in.

Given the \textbf{joint protocol}~$P\ce(P_1,\dots,P_n)$ and the environment's protocol~$P_\epsilon$, we focus on \textbf{$\transitionExt{}{}$-transitional runs}~$r$ that result from following these protocols and are built according to a \textbf{transition relation}~$\transitionExt{}{}\subseteq \globalstates \times \globalstates$.
Each such transitional run begins in some initial global state $r(0)\in\globalinitialstates$ and progresses, satisfying $(\run{}{t},\run{}{t+1}) \in \transitionExt{}{}$ for each timestamp $t\in\bbbt$. 
The transition relation $\transitionExt{}{}$ consisting of  five consecutive  phases is graphically represented in Appendix, Fig.~\ref{fig:trans_rel}  and described in detail below:

In the \textbf{protocol phase} a range 
$\envprotocol{t} \subset 2^{\gevents}$ 
of $t$-coherent sets of events is determined by the environment's protocol $P_\epsilon$.
Similarly for each $i\in\agents$, a range 
$\agprotocol{i}{\run{i}{t}}\subseteq 2^{\actions[i]}$ 
of sets of $i$'s actions is determined by the joint protocol~$P$.\looseness=-1

In the \textbf{adversary phase}, the adversary non-deterministically chooses a set $X_\epsilon \in \envprotocol{t}$ and one set $X_i \in \agprotocol{i}{\run{i}{t}}$ for each $i\in\agents$.

In the \textbf{labeling phase}, actions in the sets $X_i$ are translated into the global format: $\alphaag{i}{r}{t}\ce \{\glob{i,t}{a} \mid a \in X_i\}\subseteq \gtrueactions[i]$.

In the \textbf{filtering phase}, filter functions remove all unwanted or impossible attempted events from $\alphae{r}{t}\ce X_\epsilon$ and actions from $\alphaag{i}{r}{t}$. 
This is done in two stages:

\noindent
First, $\filtere{}{}$ filters out ``illegal'' events.
This filter will vary depending on the concrete system assumptions (in the byzantine asynchronous case, ``illegal'' constitutes receive events that violate causality).
The resulting set of events to actually occur in round $t$\textonehalf{} is $\betae{r}{t} \ce \filtere{\run{}{t}}{\alphae{r}{t}, \alphaag{1}{r}{t}, \dots, \alphaag{n}{r}{t}}$.
\begin{definition} \label{def:std_ac_filter}
	The \textbf{standard action filter} $\filterag[B]{i}{X_1, \dots, X_n}{X_\epsilon}$ for $i \in \agents$ either removes all actions from $X_i$ when $go(i) \notin X_\epsilon$ or else leaves $X_i$ unchanged.\looseness=-1
\end{definition}
\noindent
Second, $\filterag[B]{i}{}{}$ for each $i$ returns the sets of actions to be actually performed by agents in round $t$\textonehalf{}, i.e., $\betaag{i}{r}{t} \ce \filterag[B]{i}{\alphaag{1}{r}{t}, \dots, \alphaag{n}{r}{t} }{\betae{r}{t}}$.
Note that $\betaag{i}{r}{t} \subseteq \alphaag{i}{r}{t} \subseteq \gtrueactions[i]$ and $\betae{r}{t} \subseteq \alphae{r}{t} \subset \gevents$.

In the \textbf{updating phase}, the events $\betae{r}{t}$ and actions $\betaag{i}{r}{t}$ are appended to the global history $r(t)$.
For each $i \in \agents$, all non-system events from $\betae[i]{r}{t} \ce \betae{r}{t}\cap\gevents[i]$ and all actions $\betaag{i}{r}{t}$ as \textbf{perceived} by the agent are appended in the local form to the local history $r_i(t)$.
Note the local history may remain unchanged if no events trigger an update
(see Appendix, Def.~\ref{def:state-update} for more details).\looseness=-1
\begin{align} \label{eq:run_trans_env}
	\run{\epsilon}{t+1} &\ce \updatee{\run{\epsilon}{t}}{\quad\betae{r}{t},\quad \betaag{1}{r}{t},\quad \dots,\quad \betaag{n}{r}{t}} \\
	\run{i}{t+1} &\ce \updateag{i}{\run{i}{t}}{\quad\betaag{i}{r}{t}}{\quad\betae{r}{t}}. \label{eq:run_trans_ag}
\end{align}

The operations in the phases 2--5 (adversary, labeling, filtering and updating phase) are grouped into a \textbf{transition template} $\tau$ that yields a transition relation $\transitionExt{}{}$ for any joint and environment protocol $P$ and $P_\epsilon$.
Particularly, we denote as~$\tau^B$ the transition template utilizing $\filtere[B]{}{}$ and $\filterag[B]{i}{}{}$ (for all $i \in \agents$).\looseness=-1

As \textbf{liveness properties} cannot be ensured on a round-by-round basis, they are enforced by restricting the allowable set of runs via \textbf{admissibility conditions} $\Psi$, which are subsets of the set $\system{}$ of all \textbf{transitional runs}. 

A \textbf{context} $\gamma=(\envprotocol{},\globalinitialstates,\tau,\Psi)$ consists of an environment's protocol $\envprotocol{}$, a set of global initial states $\globalinitialstates$, a transition template $\tau$, and an admissibility condition $\Psi$.
For a joint protocol $\joinprotocol{}$, we call $\chi=(\gamma,\joinprotocol{})$ an \textbf{agent-context}.
A run $r \in \system{}$ is called \textbf{weakly $\chi$-consistent} if $r(0) \in \globalinitialstates$ and the run is $\transitionExt{}{}$-transitional.
A weakly $\chi$-consistent run $r$ is called \textbf{(strongly)} $\chi$-\textbf{consistent} if $r \in \Psi$.
The set of all $\chi$-consistent runs is denoted $\system{\chi}$.
An agent-context $\chi$ is called \textbf{non-excluding} if any finite prefix of a weakly $\chi$-consistent run can be extended to a $\chi$-consistent run.
(For more details see Appendix, Defs.~\ref{def:consistency}--\ref{def:nonexcluding}.)\looseness=-1

\ownsubsubsection{Epistemics.}\label{page:epistemics} \cite{KPSF19:FroCos} defines interpreted systems in this framework as Kripke models for multi-agent distributed environments \cite{bookof4}. 
The states in such a Kripke model are given by global histories $r(t')\in \globalstates$ for runs $r \in \system{\chi}$ given some agent-context $\chi$ and timestamps $t' \in \bbbt$.
A \textbf{valuation function} 
$\pi \colon \prop \to 2^{\globalstates}$ 
determines states where an atomic proposition from $\prop$ is true.
This determination is arbitrary except for a small set of \textbf{designated atomic propositions}:
For $\fevents[i]\ce \bevents[i]\sqcup\{\sleep{i},\hib{i}\}$, $i\in\agents$, $o \in \haps[i]$, and $t\in \bbbt$ such that $t \leq t'$,
\begin{compactitem} 
\label{page:occurred}
	\item $\correct{(i,t)}$ is true at $r(t')$ if{f} no faulty event happened to $i$ by timestamp $t$, i.e., no event from $\fevents[i]$ appears in $r_\epsilon(t)$,
	\item $\correct{i}$ is true at $r(t')$ if{f} no faulty event happened to $i$ yet, i.e., no event from $\fevents[i]$ appears in $r_\epsilon(t')$,
	\item $\fake[(i,t)]{o}$ is true at $r(t')$ if{f} $i$ has a \textbf{faulty} reason to believe  that $o\in\haps[i]$ occurred in round  $(t-1)$\textonehalf{}, i.e., $o \in r_i(t)$ because (at least in part) of some $O \in \bevents[i] \cap \betae[i]{r}{t-1}$,
	\item $\trueoccurred[(i,t)]{o}$ is true at $r(t')$ if{f} $i$ has a \textbf{correct} reason to believe $o\in\haps[i]$ occurred in round  $(t-1)$\textonehalf{}, i.e., $o \in r_i(t)$ because (at least in part) of $O \in (\gtrueevents[i] \cap \betae[i]{r}{t-1}) \sqcup  \betaag{i}{r}{t-1}$,
	\item $\trueoccurred[i]{o}$ is true at $r(t')$ if{f} at least one of $\trueoccurred[(i,m)]{o}$ for $1 \leq m \leq t'$ is; also $\trueoccurred{o}\ce \bigvee_{i\in\agents} \trueoccurred[i]{o}$,
	\item $\occurred[i]{o}$ is true at $r(t')$ if{f} either $\trueoccurred[i]{o}$ is or at least one of $\fake[(i,m)]{o}$  for $1 \leq m \leq t'$ is.
\end{compactitem}

The following terms are used to categorize agent faults caused by the environment's 
protocol~$\envprotocol{}$: agent~$i \in \agents$ is
\emph{fallible} if for any $X \in \envprotocol{t}$, $X \cup  \{\failof{i}\}  \in \envprotocol{t}$;
\emph{delayable} if     $X \in \envprotocol{t}$ implies $X \setminus \gevents[i]  \in \envprotocol{t}$;
\emph{gullible} if $X \in \envprotocol{t} $ implies that, for any $Y \subseteq  \fevents[i]$,  the set
   $Y \sqcup (X \setminus \gevents[i])  \in \envprotocol{t}$ whenever it is $t$-coherent.
Informally, fallible agents can be branded byzantine at any time;
delayable agents can always be forced to skip a round completely (which does not make them faulty);  
gullible agents can exhibit any faults in place of correct events. 
Common types of faults, e.g.,~crash or omission failures, can be obtained by restricting allowable sets~$Y$ in the definition of gullible agents.

An \textbf{interpreted system} is a pair $\I = (\system{\chi}, \pi)$.
The following BNF defines the \textbf{epistemic language} considered throughout this paper, for $p \in \prop$ and $i\in\agents$: $\varphi \cce p \mid \lnot \varphi \mid (\varphi \land \varphi) \mid K_i \varphi$ (other Boolean connectives are defined as usual; belief $B_{i} \varphi \ce K_{i} (\correct{i} \rightarrow \varphi)$ and hope $H_{i} \varphi \ce \correct{i} \to B_i \varphi$.
The interpreted systems semantics  is defined as usual with global states  $r(t)$ and $r'(t')$ indistinguishable for  $i$ if{f} $r_i(t)=r'_i(t')$ (see Appendix, Defs.~\ref{def:possible_world_relation}--\ref{def:semantics}  for the exact details).\looseness=-1


Unless stated otherwise, the global history $h$ ranges over $\globalstates$, and $X_\epsilon \subseteq \gevents$ and $X_i \subseteq \gactions[i]$ for each $i \in \agents$. The tuple $X_1,\dots,X_n$ is abbreviated $X_\agents$. We use $X_{[i,j]}$ for the tuple $X_i,\dots,X_j$ and $X'_\agents$ for $X'_1,\dots,X'_n$. For instance, $X_\agents = X_{[1,n]}$. Further, $\envprotocols$ and $\agprotocols$ are sets of all environment and joint protocols respectively.

\section{Synchronous Agents} \label{sec:synchronous_agents}

\emph{Synchronous agents}, i.e., agents who have access to a global clock that can be used to synchronize actions,
is a common type of distributed systems. All correct agents perform their actions at the same time here, with the time between consecutive actions left arbitrary and bearing no relation with message delays 
(except for lock-step synchronous agents, see Sect.~\ref{sec:lss_agents}).
A natural malfunction for such an agent is losing synch with the global clock, however,
so \emph{byzantine synchronous agents} can err by both lagging behind and running ahead of the global clock.
We implement this feature by means of \textbf{synced rounds}: correct agents act in a round $t$\textonehalf{} if{f} the round is synced, whereas a faulty agent may skip a synced round and/or act in between synced rounds. Note, however, that 
the agents do not \emph{a priori} know whether any given (past, current, or future) round is synced.
 \looseness=-1
\begin{definition}
	\label{def:virtRound}
	A round $t$\textonehalf{} is a \textbf{synced round} of a run $r \in \System$  if{f} $ \betag[i]{r}{t} \ne \varnothing$, where $\betag[i]{r}{t} = \betae[i]{r}{t} \cap \sevents[i]$, for all $i \in \agents$.
	We denote the number of synced rounds in $h \in \globalstates$ by $\NSR{h}$.
\end{definition}
In other words, a synced round requires from each agent~$i$  either $go(i)$ or one of two sync errors $\sleep{i}$ or $\hib{i}$.
Conversely, the permission $go(i)$ to act correctly should only be given during synced rounds, which we implement via the following event filter function: 
%
%
%
\begin{definition}
\label{def:filterSyn}
The \textbf{event filter function $\filtere[S]{h}{X_\epsilon, X_\agents}$ for the synchronous agents extension} outputs $X_\epsilon \setminus \{go(i) \mid i \in \agents\}$ if $\sevents[j] \cap X_\epsilon = \varnothing$ for some $j \in \agents$, or else leaves  $X_\epsilon$ unchanged.
\end{definition}
Since it is important for correct agents to be aware of synced rounds, we require agent protocols to issue the special internal action $\clock$ whenever activated:


\begin{definition}
\label{def:synch_ag_protocols}
 The set of \textbf{synchronous joint protocols} is
    \begin{align*}
        \agprotocols[S] \ce
		\left\{
			(\agprotocol{1}{},\dots \agprotocol{n}{}) \in \agprotocols \mid
			(\forall i \in \agents)(\forall h_i \in \localstates{i})(\forall D \in \agprotocol{i}{h_i})\  \clock \in D
		\right\}.
    \end{align*}
\end{definition}
The action $\clock$ enables correct agents to distinguish between an active round requiring no actions and a passive round with no possibility to act.\footnote{For the formal statement of this distinction, see Appendix, Lemma \ref{lem:synch_gbyz_indist}.}
The choices  behind our implementation of synchronicity will become clearer in Sect.~\ref{sec:extensions}.

\ownsubsubsection{Run modification} \cite{KPSF19:FroCos} is a crucial technique for proving  agents' ignorance of a fact, by creating an indistinguishable  run falsifying this fact. First, we define what it means for an agent to become byzantine. \looseness=-1
%
%
Given a run~$r$ and timestamp~$t$, a node $(i,t') \in \agents \times \bbbt$ belongs to 
the set $\failed{r}{t}$ of \textbf{byzantine nodes}, i.e., agent~$i$ is byzantine in $r$ by time $t'$, if{f} the global history $r(t')$ contains at least one  event from $\fevents[i]$.
%

\begin{definition}[Run modifications] \label{def:run_mod}
A function 
$\rho \colon  \system{\chi} \mapsto 2^{\gtrueactions[i]} \times 2^{\gevents[i]}$ 
is called an \textbf{$i$-inter\-ven\-tion for an agent-context $\chi$ and agent $i\in\agents$}. 
A \textbf{joint intervention} $B= (\rho_1,\dots,\rho_n)$ consists of $i$-interventions $\rho_i$ for each agent $i \in \agents$. 
An \textbf{adjustment}   $[B_t;\dots;B_0]$  (with \textbf{extent} $t$) is a sequence of joint interventions
$B_0,\dots,B_t$ 
to be performed at rounds $0$\textonehalf{}, \dots, $t$\textonehalf{} respectively.
\end{definition}
An  $i$-intervention $\rho(r)=(X, X_{\epsilon})$ applied to a round $t$\textonehalf{} of a given run~$r$ is intended to modify the results of this round for~$i$ in such a way that  $\betaag{i}{r'}{t}=X$ and $\betae[i]{r'}{t}=X_{\epsilon}$ in the artificially constructed new run~$r'$. We denote $\A\rho(r) \ce X$ and $\E\rho(r) \ce X_{\epsilon}$. Accordingly, a joint intervention $(\rho_1,\dots,\rho_n)$ prescribes actions $\betaag{i}{r'}{t}=\A\rho_i(r)$  for each agent~$i$ and events $\betae{r'}{t}=\bigsqcup_{i\in\agents} \E\rho_i(r)$ for the round in question. Thus, an adjustment $[B_t;\dots;B_0]$ fully determines actions and events in the initial $t+1$~rounds of the modified run $r'$:\looseness=-1

\begin{definition}
    \label{def:build_from}
    Let
    $\label{eq:generic_adj}
    \adj = \left[
    B_t;\dots;B_0
    \right]$  be an adjustment  with     $
    B_{m} = (\rho_1^{m}, \dots, \rho_n^{m})$
    for each $0 \leq m \leq t$ 
    and each $\rho_i^{m}$ be an $i$-intervention for an agent-context $\chi=\left((\envprotocol{},\globalinitialstates,\tau,\Psi),P\right)$.
The set $R({\transitionExt{}{}},{r},{\adj})$ consists of all runs~$r'$ \textbf{obtained from}  $r \in \system{\chi}$ \textbf{by adjustment}~$\adj$, i.e., runs $r'$ such that 
\begin{compactenum}[(a)]
\item           
	$\run[']{}{0} = \run{}{0}$,  
\item\label{clause:run_adj_ag}               
	$
            \run[']{i}{t'+1} = \mathit{update}_i\bigl(\run[']{i}{t'},\,\A\rho_i^{t'}(r),\,\bigsqcup_{i \in\agents}\E\rho_i^{t'}(r)\bigr)
        $ for all $i \in \agents$ and $t'\leq t$,
\item                
	$
            \run[']{\epsilon}{t'+1} =
            \mathit{update}_\epsilon\bigl(\run[']{\epsilon}{t'},\,\bigsqcup_{i \in\agents}\E\rho_i^{t'}(r),\, \A\rho_1^{t'}(r),\, \dots,\,\A\rho_n^{t'}(r)\bigr)$ for all $t'\!\leq\! t$,
\item                     
	$
            r'(T')\,\,\transitionExt{}{}\,\,r'(T'+1)$ for all $T' > t$.
\end{compactenum}
\end{definition}
The main interventions we use are as follows, where $\betab[i]{r}{t} = \betae[i]{r}{t} \cap \bevents[i]$:
\begin{definition} \label{def:bitv_interv}
    For $i\in\agents$ and  $r\in \system{}$, the interventions 
 $
        \newfreezerule{r} \ce (\varnothing, \varnothing)
        $  and
   $
        \newffreezerule{i}{r}\ce (\varnothing, \{\failof{i}\})
        $
         freeze agent~$i$ with and without fault respectively.\looseness=-1
    \begin{equation*} \label{eq:fakerule}
		\begin{gathered}
			\newfakerule{i}{t}{r}\qquad\ce\qquad \Bigl(\varnothing, \quad\betab[i]{r}{t} \ \cup \ \left\{\fakeof{i}{E} \mid E \in \betaout[i]{r}{t} \right\} \quad \cup \\
			\left\{\fakeof{i}{\mistakefor{\tick}{A}} \mid A\in \betaag{i}{r}{t}  \right\} \sqcup \left\{ \sleep{i}  \mid \betag[i]{r}{t} \in \{\{go(i)\}, \{\sleep{i}\}\}\right\} \sqcup \\
			  \{ \hib{i} \mid \betag[i]{r}{t} \notin \{\{go(i)\}, \{\sleep{i}\}\}\}\Bigr)
		\end{gathered}
    \end{equation*}
    turns all correct actions and events into indistinguishable byzantine events.
\end{definition}

Until the end of this section, we assume that $\envprotocol{}\in\envprotocols$ and  $\joinprotocol{}^S \in \agprotocols[S]$ are protocols for the environment and synchronous agents, that   $\chi = ((\envprotocol{}, \globalstates(0), \transitionfrom[S]{}{}, \system{}), \joinprotocol{}^S)$ is an agent-context where $\transitionfrom[S]{}{}$ uses the synchronous event filter from Def.~\ref{def:filterSyn} and the standard action filters from Def.~\ref{def:std_ac_filter}, and that $\I=(R^\chi,\pi)$ is an interpreted system. We additionally assume  that $P_\epsilon$ makes a fixed agent $i$, called the ``brain,'' gullible and all other agents $j\ne i$ delayable and fallible.

\begin{lemma}[Synchronous Brain-in-the-Vat Lemma]
\label{lem:synch_compose-fake-freeze}
	Consider the adjustment $\adj=[B_{t-1}; \dots ;B_0]$ with	$B_m = (\rho^m_1,\, \dots,\, \rho^m_n)$
	where 
%
%
    $
    \rho^{m}_i=\newfakerule{i}{m}{}{}$ for the ``brain''~$i$ and $\rho^{m}_j \in\{\newfreezerule{},\newffreezerule{j}{}\}$ for other $j\ne i$,  for $m=0, \dots,t-1$.
For any run $r \in \system{\chi}$,  all modified runs $r'\in R\bigl({\tauprotocol{S}{\envprotocol{}}{\joinprotocol{}^S}},{r},{\adj}\bigr)$ are $\tauprotocol{S}{\envprotocol{}}{\joinprotocol{}^S}$-transitional and satisfy the following properties:\looseness=-1
    \begin{compactenum}
       \item\label{lem:main_synch_compose-fake-freeze:i-same} 
       ``Brain'' agent $i$ cannot distinguish $r$ from $r'$: $\run[']{i}{m}=\run{i}{m}$ for all $m \le t$.
        \item\label{lem:main_synch_compose-fake-freeze:j-frozen} 
        Other agents $j\ne i$ remain in their initial states: $\run[']{j}{m}=\run[']{j}{0}$ for all $m \le t$.
		\item\label{lem:main_synch_compose-fake-freeze:i-bad} 
		Agent~$i$ is faulty from the beginning: $(i,m) \in \failed{r'}{t}$ for all $1 \leq m \leq t$.
		\item\label{lem:main_synch_compose-fake-freeze:only-i-failed} 
		Other agents $j \ne i$ are faulty by time $t$ if{f} $\rho^m_j = \newffreezerule{j}{}$ for some $m$.
    \end{compactenum}
\end{lemma}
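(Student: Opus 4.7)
The plan is to prove the statement by induction on the round index $m \in \{0,\dots,t-1\}$, simultaneously establishing that the adjusted prefix up to time $m+1$ is $\tauprotocol{S}{\envprotocol{}}{\joinprotocol{}^S}$-transitional and that the four enumerated properties hold at time $m+1$. The base case $r'(0)=r(0) \in \globalinitialstates$ is immediate from Def.~\ref{def:build_from}(a), and the continuation for $T' > t-1$ is handled directly by clause~(d) of that definition, so the real work is at each round $m$\textonehalf{} with $m \le t-1$.

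For the inductive step at round $m$\textonehalf{}, I would first exhibit adversary choices from $\envprotocol{m}$ that, after filtering, yield exactly the event set $\bigsqcup_{k\in\agents} \E\rho^m_k(r)$ and empty action sets. Starting from any set $X_0 \in \envprotocol{m}$, I apply delayability at each $j\ne i$ to strip all events in $\gevents[j]$, apply fallibility to insert $\failof{j}$ where the intervention is $\newffreezerule{j}{}$, and finally apply gullibility at the brain~$i$ to replace its correct events by the byzantine events, $\sleep{i}$, or $\hib{i}$ prescribed by $\newfakerule{i}{m}{r}$. The resulting $X_\epsilon$ contains no $go$-events at all, so $\filtere[S]$ from Def.~\ref{def:filterSyn} acts as the identity, while the standard action filter (Def.~\ref{def:std_ac_filter}) erases every action proposed by the joint protocol (including the mandatory $\clock$ from Def.~\ref{def:synch_ag_protocols}), delivering the empty $\A\rho^m_k(r)$ for every agent~$k$.

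The four properties then follow from the explicit form of the interventions combined with the local update rules. For~(\ref{lem:main_synch_compose-fake-freeze:i-same}), the brain perceives each $\fakeof{i}{E}$ identically to $\local{E}$ and each $\fakeof{i}{\mistakefor{\tick}{A}}$ identically to $\local{A}$, triggering the very same local update that occurred at round $m$\textonehalf{} in $r$; together with $r'_i(0)=r_i(0)$, induction gives $r'_i(m+1)=r_i(m+1)$. For~(\ref{lem:main_synch_compose-fake-freeze:j-frozen}), each freeze intervention leaves $j$ with no actions and with at most the invisible $\failof{j}$, which, being $\fakeof{j}{\mistakefor{\tick}{\tick}}$, produces no local record, so $r'_j(m+1)=r'_j(m)=\cdots=r'_j(0)$. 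For~(\ref{lem:main_synch_compose-fake-freeze:i-bad}), inspection of $\newfakerule{i}{m}{r}$ shows it always emits either $\sleep{i}$ or $\hib{i}$, both in $\fevents[i]$, so $(i,m')\in\failed{r'}{t}$ for every $1\le m'\le t$. Property~(\ref{lem:main_synch_compose-fake-freeze:only-i-failed}) is equally direct, since $\failof{j}\in\fevents[j]$ is added to $r'_\epsilon$ at round $m$\textonehalf{} precisely when $\rho^m_j=\newffreezerule{j}{}$ and never otherwise.

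The main obstacle is the transitionality verification, in particular checking that the intended joint event set actually lies in $\filtere[S]\bigl(X, X_\agents\bigr)$ for some $X \in \envprotocol{m}$; this is where gullibility of the brain and delayability plus fallibility of every other agent are all indispensable. A secondary subtlety is reconciling the mandatory presence of $\clock$ in every $D\in \agprotocol{k}{h_k}$ with the empty realised actions, which is handled by the observation above that the standard action filter erases everything once no $go$-event survives. Once these two points are in place, the four claimed properties are straightforward consequences of the explicit definitions of $\newfakerule$, $\newfreezerule$, and $\newffreezerule{j}{}$ together with the update rules.
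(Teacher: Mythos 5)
Your proof is correct and follows essentially the same route as the paper's: the paper simply cites the asynchronous Brain-in-the-Vat argument from~\cite{KPSF19:FroCos} and adds exactly the two observations you make for transitionality, namely that gullibility/delayability/fallibility let the environment protocol realize the prescribed event sets and that, since no intervention ever prescribes a $go$ event, the standard action filter erases all actions (including the mandatory $\clock$) while $\filtere[S]{}{}$ removes nothing. Your write-up merely unfolds the cited inductive argument explicitly, so there is nothing to object to.
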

\begin{proof}
	The proof is almost identical to that of the case  for 
	asynchronous agents from~\cite{KPSF19:FroCos}
. The only difference is in the proof that $r'$ is transitional.
The gullibility/delayability/fallibility assumptions ensure that the protocols can issue the sets of events prescribed by adjustment~$\adj$. By Def.~\ref{def:bitv_interv}, none  of the interventions $\newfakerule{i}{t}{}$, $\newfreezerule{}$, or $\newffreezerule{j}{}$
prescribes a $go$  before time~$t$. Thus, all actions are filtered out, and $\filtere[S]{}{}$ from Def.~\ref{def:filterSyn} does not remove any prescribed events.\looseness=-1 \qed
\end{proof}

\begin{lemma} \label{lem:synch_br_i_t_V_no_occur}
In the setting of Lemma~\ref{lem:synch_compose-fake-freeze}, no hap $o$ occurs correctly in any modified run $r' \in R\bigl({\tauprotocol{S}{\envprotocol{}}{\joinprotocol{}^S}},{r},{\adj}\bigr)$ before time~$t$, in other words, $(\I, r', m) \nvDash \trueoccurred{o}$ for all $m \leq t$.
\end{lemma}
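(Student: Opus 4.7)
The plan is to exploit the observation that none of the interventions composing $\adj$ produces any correct haps in the first $t$ rounds of $r'$, so the semantic clause for $\trueoccurred{o}$ cannot be triggered at any time $m \leq t$.

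First, I would unfold the semantics from page~\pageref{page:occurred}: $(\I, r', m) \vDash \trueoccurred{o}$ requires some agent $k \in \agents$ and some $1 \leq m' \leq m$ such that $o$ is recorded in $r'_k(m')$ because of a correct event $O \in \gtrueevents[k] \cap \betae[k]{r'}{m'-1}$ or a correct action $O \in \betaag{k}{r'}{m'-1}$ (note that all actions in $\betaag{k}{\cdot}{\cdot} \subseteq \gtrueactions[k]$ are correct by construction). Since $m \leq t$, the round $m'-1$ ranges over $\{0,\dots,t-1\}$, which is exactly the extent of the adjustment~$\adj$. Thus, it suffices to prove that for every agent~$k$ and every $m' \in \{0,\dots,t-1\}$, both $\betaag{k}{r'}{m'} = \varnothing$ and $\betae[k]{r'}{m'} \cap \gtrueevents[k] = \varnothing$.

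Next, by Def.~\ref{def:build_from}(\ref{clause:run_adj_ag}) we have $\betaag{k}{r'}{m'} = \A\rho_k^{m'}(r)$ and $\betae[k]{r'}{m'} = \E\rho_k^{m'}(r)$ for every $k \in \agents$, so it remains to inspect the three intervention types prescribed by~$\adj$ in light of Def.~\ref{def:bitv_interv}. The freeze rule $\newfreezerule{}$ yields $(\varnothing,\varnothing)$; the bad freeze rule $\newffreezerule{j}{}$ yields $(\varnothing,\{\failof{j}\})$, and $\failof{j} = \fakeof{j}{\mistakefor{\tick}{\tick}} \in \bevents[j]$; and the fake rule $\newfakerule{i}{m'}{}$ for the brain yields $\A = \varnothing$ together with an event set contained in $\bevents[i] \sqcup \{\sleep{i},\hib{i}\} \subseteq \bevents[i] \sqcup \sevents[i]$ (the pre-existing byzantine events $\betab[i]{r}{m'}$, the faked counterparts $\fakeof{i}{\cdot}$ of correct events and actions, and at most one $\sleep{i}$ or $\hib{i}$). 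In every case the action component is empty and the event component is disjoint from $\gtrueevents$.

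Combining these observations yields the required property for each round $m' < t$, and hence $(\I, r', m) \nvDash \trueoccurred{o}$ for all $m \leq t$ and all haps $o$. The only real care needed is the bookkeeping in the fake rule, where one must confirm that each summand—$\betab[i]{r}{m'}$, the $\fakeof{i}{\cdot}$ events, and the system sync-errors—lies outside $\gtrueevents$; all three are immediate from the partition $\gevents[i] = \gtrueevents[i] \sqcup \bevents[i] \sqcup \sevents[i]$.
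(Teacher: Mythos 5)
Your proof is correct and follows essentially the same route as the paper, whose own proof is just the one-line instruction to unfold Def.~\ref{def:bitv_interv} and the definition of $\trueoccurred{o}$; you have simply carried out that unfolding explicitly (empty action components and event components contained in $\bevents \sqcup \sevents$, hence disjoint from $\gtrueevents$, for every round covered by the adjustment).
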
 
\begin{proof} Follows directly by unfolding Def.~\ref{def:bitv_interv} of the interventions $\newfakerule{i}{t}{}$, $\newfreezerule{}$, and $\newffreezerule{i}{}$
and the  definition of $\trueoccurred{o}$ 
	on p.~\pageref{page:occurred}.
\qed
\end{proof}

\begin{theorem}\label{lem:synch_no-k-occurred}
If the agent-context $\chi$ is non-excluding, the ``brain''~$i$ cannot know  
\begin{compactenum}[(a)]
\item \label{item:syn_intro_hap}
that any hap occurred correctly, i.e., 
$ \I \vDash \lnot\K{i}{\trueoccurred{o}}$ for all haps~$o$;
\item \label{item:syn_intro_self}
that it itself is correct, i.e., $\I \vDash \lnot K_i \correct{i}$;
\item \label{item:syn_intro_other_fault}
that another agent $j \ne i$ is faulty, i.e., $\I \vDash \lnot  \K{i}{\faulty{j}}$;
\item \label{item:syn_intro_other_correct}
that another agent $j \ne i$ is correct, i.e., $\I \vDash \lnot  \K{i}{\correct{j}}$.
\end{compactenum}
\end{theorem}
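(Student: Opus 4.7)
The plan is to prove each of (a)--(d) via the standard ignorance-via-indistinguishability technique: given an arbitrary $r \in \system{\chi}$ and time $t \in \bbbt$, exhibit a $\chi$-consistent run $r'$ with $\run[']{i}{t}=\run{i}{t}$ in which the target formula fails at time~$t$. By Lemma~\ref{lem:synch_compose-fake-freeze}(\ref{lem:main_synch_compose-fake-freeze:i-same}), any run produced by the adjustment procedure of that lemma is indistinguishable from~$r$ for agent~$i$, so it suffices to tune the non-brain interventions $\rho^m_j$ to match each claim.

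For (a), (b), and (c) simultaneously, choose $\rho^m_i=\newfakerule{i}{m}{}$ and $\rho^m_j=\newfreezerule{}$ for every $j\ne i$ and every $m<t$. Lemma~\ref{lem:synch_br_i_t_V_no_occur} gives $(\I,r',t)\nvDash\trueoccurred{o}$ for every hap~$o$, settling (a). Lemma~\ref{lem:synch_compose-fake-freeze}(\ref{lem:main_synch_compose-fake-freeze:i-bad}) yields $(i,t)\in\failed{r'}{t}$, so $\correct{i}$ is false at $(r',t)$, settling (b). Lemma~\ref{lem:synch_compose-fake-freeze}(\ref{lem:main_synch_compose-fake-freeze:only-i-failed}) yields $(j,t)\notin\failed{r'}{t}$ for every $j\ne i$, so $\faulty{j}$ is false at $(r',t)$, settling (c). For (d), swap the intervention for the target~$j$ to $\rho^0_j=\newffreezerule{j}{}$ (keeping the remaining interventions as above); by Lemma~\ref{lem:synch_compose-fake-freeze}(\ref{lem:main_synch_compose-fake-freeze:only-i-failed}), agent~$j$ becomes faulty by time~$t$ in $r'$, so $\correct{j}$ is false at $(r',t)$.

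The only delicate point is that Lemma~\ref{lem:synch_compose-fake-freeze} guarantees $r'$ to be $\tauprotocol{S}{\envprotocol{}}{\joinprotocol{}^S}$-transitional, but not automatically a member of $\system{\chi}$, which additionally requires satisfying the admissibility condition~$\Psi$. This is precisely why the non-excluding hypothesis is assumed: the first $t+1$ global states of $r'$ are rigidly fixed by the adjustment and form a weakly $\chi$-consistent prefix, which can be extended to a fully $\chi$-consistent run. Since every property we used (indistinguishability at time~$t$, the byzantine status of each agent by time~$t$, and the absence of correct haps before time~$t$) depends only on this prefix, it is preserved by any such extension, and the Kripke semantics of knowledge then closes the argument.
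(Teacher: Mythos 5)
Your argument is the paper's own argument for the case $t>0$: the same adjustment from Lemma~\ref{lem:synch_compose-fake-freeze}, the same choice of $\rho^m_j=\newfreezerule{}$ versus $\rho^0_j=\newffreezerule{j}{}$ to separate statements \eqref{item:syn_intro_other_fault} and \eqref{item:syn_intro_other_correct}, and the same use of non-exclusion to upgrade the transitional modified run to a member of $\system{\chi}$ agreeing with it on the relevant prefix. That part is correct.

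The gap is at $t=0$, for statements \eqref{item:syn_intro_self} and \eqref{item:syn_intro_other_correct}. The theorem asserts validity, so the knowledge formulas must be refuted at \emph{every} point $(r,t)$, including $t=0$. For $t=0$ the adjustment $[B_{t-1};\dots;B_0]$ is empty, and Lemma~\ref{lem:synch_compose-fake-freeze}.\ref{lem:main_synch_compose-fake-freeze:i-bad} only gives $(i,m)\in\failed{r'}{t}$ for $1\le m\le t$, which is vacuous; indeed $\correct{i}$ and $\correct{j}$ are \emph{true} at time $0$ in every run, since no event from $\fevents[i]$ can appear in $r_\epsilon(0)$. So your construction cannot falsify $\correct{i}$ or $\correct{j}$ at any point of the form $(r',0)$, and you need a witness at a \emph{later} time that is $i$-indistinguishable from $(r,0)$. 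The paper supplies the missing idea: the ``brain''~$i$ is gullible, hence delayable, so there is a run in which $i$ perceives nothing in round~$0$\textonehalf{} and therefore cannot tell $t=0$ from $t=1$; this reduces the $t=0$ case to the already established $t=1$ case. (For statements \eqref{item:syn_intro_hap} and \eqref{item:syn_intro_other_fault} no such detour is needed, since $\trueoccurred{o}$ and $\faulty{j}$ are already false at $(r,0)$ itself.) Your proof as written silently treats all $t$ uniformly and therefore does not establish \eqref{item:syn_intro_self} and \eqref{item:syn_intro_other_correct} at $t=0$.
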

\begin{proof}
We need to show that all these knowledge statements are false  at $\kstruct{}{r}{t} $ for any $r \in R^\chi$ and any $t \in \bbbt$. 

For $t>0$,
	consider $\adj$ from Lemma~\ref{lem:synch_compose-fake-freeze} for this~$t$. It is possible to pick one modified run $r'\in R\bigl({\tauprotocol{S}{\envprotocol{}}{\joinprotocol{}^S}},{r},{\adj}\bigr)$ because $\chi$ is non-excluding. ``Brain'' $i$ cannot distinguish $r(t)$ from $r'(t)$ by Lemma~\ref{lem:synch_compose-fake-freeze}.\ref{lem:main_synch_compose-fake-freeze:i-same}. For Statement~\eqref{item:syn_intro_hap}, we have $(\I, r', t) \nvDash \trueoccurred{o}$ by Lemma~\ref{lem:synch_br_i_t_V_no_occur}. For Statement~\eqref{item:syn_intro_self}, $(\I, r', t) \nvDash \correct{i}$ by Lemma~\ref{lem:synch_compose-fake-freeze}.\ref{lem:main_synch_compose-fake-freeze:i-bad}. For Statement~\eqref{item:syn_intro_other_fault}, we have $(\I, r', t) \nvDash \faulty{j}$ if all $\rho^m_j=\newfreezerule{}$. Finally, for Statement~\eqref{item:syn_intro_other_correct}, we have $(\I, r', t) \nvDash \correct{j}$ if  $\rho^0_j=\newffreezerule{j}{}$. 
	
	It remains to consider the case of $t=0$. Here, $(\I, r, 0) \nvDash \trueoccurred{o}$ and $(\I, r, 0) \nvDash \faulty{j}$ trivially for the run $r$ itself, which completes the argument for  Statements~\eqref{item:syn_intro_hap} and~\eqref{item:syn_intro_other_fault}. However, since all agents are still correct at $t=0$, for Statements~\eqref{item:syn_intro_self} and~\eqref{item:syn_intro_other_correct}, we additionally notice ``brain''~$i$ is delayable because it is gullible. Since delaying~$i$ for the first round would prevent it from distinguishing whether $t=0$ or $t=1$, the case of $t=0$ is thereby reduced to the already considered case of $t>0$.
	\qed
%
\end{proof}

\begin{remark}
By contrast, 
 it is sometimes possible for synchronous agents to learn of their own defectiveness, i.e., $\I \nvDash \lnot K_i \faulty{i}$. This may happen, for instance, if there is a mismatch between actions recorded in the agent's local history and actions prescribed by the agent's protocol for the preceding local state.
\end{remark}

While the above limitations of knowledge apply to both asynchronous~\cite{KPSF19:FroCos} and synchronous agents, as we just showed, synchronous agents do gain awareness of the global clock in the following precise sense.

\begin{definition}
	For all $l \in \bbbn$, we add the following definition of truth for special propositional variables $\nsr{l}$ in interpreted systems $\intsys' = (\system{}', \interpretation{}{})$ (for $\system{}' \subseteq \system{}$):
$	(\intsys',r,t) \models \nsr{l}$ 
if{f}
$\NSR{r(t)} = l$.
\end{definition}

\begin{theorem}\label{lem:virtAware}
	A synchronous agent $k$ can always infer how many synced rounds elapsed from the beginning of a run under the assumption of its own correctness, i.e., for any run $r \in \system{\chi}$ and timestamp $t \in \bbbt$, we have
		$(\intsys,r,t) \models H_{k}{\nsr{\NSR{r(t)}}}$.
\end{theorem}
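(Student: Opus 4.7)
The plan is to reduce the hope statement to a combinatorial invariant of $k$'s local history — namely, the count of $\clock$ tokens it contains — and then use the synchronous filter and protocol discipline to equate this count with $\NSR{\cdot}$ whenever $k$ is correct.

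First I would unfold $H_k\,\nsr{l} \equiv \correct{k}\to K_k(\correct{k}\to\nsr{l})$ with $l=\NSR{r(t)}$. The case $(\intsys,r,t)\nvDash\correct{k}$ is vacuous, so assume $\correct{k}$ holds at $(r,t)$. For any $(r',t')$ with $r'_k(t')=r_k(t)$ and $(\intsys,r',t')\models\correct{k}$, it suffices to show $\NSR{r'(t')}=l$. Because indistinguishability makes any function of the local history invariant, this will follow once $\NSR{\bar r(\bar t)}$ is expressed purely in terms of $\bar r_k(\bar t)$ whenever $k$ is correct in $\bar r$ at time $\bar t$.

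The core is a counting lemma: if $k$ is correct in $\bar r$ at time $\bar t$, then the number of $\clock$ tokens in $\bar r_k(\bar t)$ equals $\NSR{\bar r(\bar t)}$. I would prove this by induction on $\bar t$, with the base case trivial. For the step, Def.~\ref{def:filterSyn} yields a dichotomy: either every agent has some system event in $\alphae{\bar r}{\bar t}$ — in which case $\betae{\bar r}{\bar t}=\alphae{\bar r}{\bar t}$ and every $\betag[j]{\bar r}{\bar t}\ne\varnothing$, so by Def.~\ref{def:virtRound} the round is synced — or some agent does not, in which case $\filtere[S]{}{}$ strips every $go(i)$ and at least one $\betag[j]{\bar r}{\bar t}$ becomes empty, so the round is non-synced (crucially, $\sleep$ and $\hib$ events are never stripped, which is what makes the two branches coincide with the two sides of the dichotomy). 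In the synced case, correctness of $k$ at $\bar t+1$ rules out $\sleep{k},\hib{k}\in\fevents[k]$, forcing $go(k)\in\betae{\bar r}{\bar t}$; Def.~\ref{def:std_ac_filter} then preserves $k$'s actions, and Def.~\ref{def:synch_ag_protocols} guarantees $\clock\in\betaag{k}{\bar r}{\bar t}$, which gets appended to the local history, so both the $\clock$-count and $\NSR{\cdot}$ grow by one. In the non-synced case $go(k)\notin\betae{\bar r}{\bar t}$, the standard action filter erases all of $k$'s actions, so no $\clock$ is appended and the $\clock$-count is unchanged; $\NSR{\cdot}$ is unchanged as well.

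Applying the lemma to both $(r,t)$ and $(r',t')$, the two $\NSR$ values coincide with the common $\clock$-count in $r_k(t)=r'_k(t')$, yielding $(\intsys,r',t')\models\nsr{l}$, as required. The main obstacle is the filter bookkeeping — in particular pinning down that the branches of Def.~\ref{def:filterSyn} correspond exactly to the synced/non-synced split of Def.~\ref{def:virtRound} — but this reduces to a short unfolding exercise because $\filtere[S]{}{}$ only ever removes $go$-events.
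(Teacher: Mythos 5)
Your proposal is correct and follows essentially the same route as the paper: unfold $H_k$ to reduce the claim to showing that, for a correct agent, the number of $\clock$ entries in its local history equals $\NSR{\cdot}$, and then transfer this count across the indistinguishable points $r_k(t)=r'_k(t')$. The paper merely asserts this counting fact as ``not hard though tedious,'' whereas you supply the inductive argument (including the correct observation that the two branches of Def.~\ref{def:filterSyn} coincide exactly with the synced/non-synced dichotomy of Def.~\ref{def:virtRound}), which is a faithful elaboration rather than a different approach.
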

\begin{proof}
	Since $H_{k}{\nsr{\NSR{r(t)}}} = \correct{k} \to  \K{k}{\big(\correct{k} \rightarrow \nsr{\NSR{r(t)}}\big)}$, we need to show that $(\intsys,r',t') \models  \nsr{\NSR{r(t)}}$ whenever $r_k(t) = r'_k(t')$ and agent~$k$ is correct both at $r(t)$ and $r'(t')$.
It is not hard though tedious to prove that, for a correct agent, the number of $\clock$ actions in its local history is equal to the number of synced rounds elapsed in the run. Thus, $\NSR{r(t)}$ is equal to the number of $\clock$'s in $r_k(t) = r'_k(t')$, which, in turn, equals $\NSR{r'(t')}$.
\qed
\end{proof}


\section{The Extension Framework}\label{sec:extensions}
In this section, we present a glimpse into our modular extension framework, which augments the asynchronous byzantine framework \cite{KPSF19:FroCos} and enables us to implement and combine a variety of  system assumptions and, consequently, extend the epistemic analysis akin to that just performed for synchronous agents.

\begin{definition}[Extension]
	\label{def:extension}
	Let $\extension[\alpha] \ce (\mathit{PP}^\alpha,\mathit{IS}^\alpha,\transitionfrom[\alpha]{}{},\Admissibility[\alpha])$ with nonempty sets
	$\mathit{PP}^\alpha \subseteq \pprotocols$, $\mathit{IS}^\alpha \subseteq 2^{\globalstates(0)}$, and $\Admissibility[\alpha] \subseteq \system{}$  and a transition template~$\transitionfrom[\alpha]{}{}$. An agent-context~$\chi=\left((\envprotocol{},\globalinitialstates[\chi],\tau,\Psi),P\right)$ is part of $\extension[\alpha]$, denoted $\chi\in\extension[\alpha]$,  if{f} 
		$(\envprotocol{},\joinprotocol{})\in \mathit{PP}^\alpha$, 
		$\globalinitialstates[\chi] \in \mathit{IS}^\alpha$, 
		$\transitionfrom{}{} = \transitionfrom[\alpha]{}{}$, 
		$\Psi = \Admissibility[\alpha]$, 
		and
		$\System[\chi]\neq\varnothing$.
%
We call $\extension[\alpha]$ a \textbf{$($framework$)$ extension} 
 if{f} there exists an agent-context $\chi$ such that $\chi \in \extension[\alpha]$.\looseness=-1
%
%
\end{definition}

\ownsubsubsection{Extension combination.}
Combining extension requires combining their constituent parts. Since 
allowable pairs of protocols,  runs, and collections of initial states are restricted by an extension,  combining two extensions naturally means imposing both restrictions, i.e.,~taking their intersection.
Combining the respective transition templates imposed by these extensions, on the other hand, warrants more explanation.
Transition templates differ from each other only in the filtering phase. Therefore, combining  transition templates, in effect, means combining their respective filter functions, which can be done in various ways. In this section we discuss \emph{filter composition}.

\begin{definition}[Basic Filter Property] \label{def:basic_filter_prop}
	We call a function $\filtere[\alpha]{}{}$ ($\filterag[\alpha]{i}{}{}$ for $i \in \agents$) an \textit{event (action) filter function}  if{f}  
		$\filtere[\alpha]{h}{X_\epsilon,X_\agents} \subseteq X_\epsilon$ and  
		$\filterag[\alpha]{i}{X_\agents}{X_\epsilon} \subseteq X_i$ (for the exact function typification  see Appendix, Def.~\ref{def:filter}).
\end{definition}

\begin{definition}
    \label{def:envfilter_intersection}
	Given event (action) filter functions $\filtere[\alpha]{}{}$ and $\filtere[\beta]{}{}$ ($\filterag[\alpha]{i}{}{}$ and $\filterag[\beta]{i}{}{}$ for the same $i \in \agents$), their
	\textbf{filter composition} is defined as
	\begin{equation*} \label{equ:circ_filter}
		\begin{aligned}
			\filtere[\beta \circ \alpha]{h}{X_\epsilon,\, X_\agents} &\ce \filtere[\beta]{h}{\filtere[\alpha]{h}{X_\epsilon,\, X_\agents},\, X_\agents}, \\
			\filterag[\beta \circ \alpha]{i}{X_\agents}{X_\epsilon} &\ce 
			\filterag[\beta]{i}{X_{[1,i-1]},\, \filterag[\alpha]{i}{X_\agents\,}{X_\epsilon},\, X_{[i+1,n]}}{X_\epsilon}.
		\end{aligned}
	\end{equation*}
\end{definition}

\begin{definition}
\label{def:extension_intersection}
    For two extensions $\extension[\dag]=(\mathit{PP}^\dag,\mathit{IS}^\dag,\transitionfrom[\dag]{}{},\Admissibility[\dag])$ with $\dag \in \{\alpha,\beta\}$, 
    we define their
	composition  $\extension[\alpha\circ\beta] \ce (\mathit{PP}^\alpha \cap \mathit{PP}^\beta,\mathit{IS}^\alpha \cap \mathit{IS}^\beta,\transitionfrom[\alpha \circ \beta]{}{},\Admissibility[\alpha]\cap\Admissibility[\beta])$,
	where in $\transitionfrom[\alpha \circ \beta]{}{}$ the filters of $\transitionfrom[\alpha]{}{}$ and $\transitionfrom[\beta]{}{}$ are combined via filter composition (resulting in $\filtere[\alpha \circ \beta]{}{}$ and $\filterag[\alpha \circ \beta]{i}{}{}$ for each $i \in \agents$).
\end{definition}

Since such a combination $\extension[\alpha \circ \beta]$ may not  be a valid extension,  we introduce the notion of extension compatibility.
Informally, extensions are  \textbf{compatible} if their combination can produce runs (see Appendix, Def.~\ref{def:ext_compat}  for the details).\looseness=-1

%

We recall the definition of  the conventional (trace-based)  safety properties.\looseness=-1
\begin{definition}
\label{def:prefset_saf_prop}
	Let
		$\prefset \ce \{r(t) \mid  r \in \systrans,  t \in \bbbt\}\subseteq \globalstates$. 
	A nonempty set $S' \subseteq \systrans \sqcup \prefset$ is a \textbf{safety property} if
	\begin{compactenum} [(I)]
		\item $S'$ is \textbf{prefix-closed} in that
			\begin{compactitem}
			\item $r(t) \in S'$  implies that $r(t') \in S'$ for $t' \le t$ and
			\item $r' \in S'$ implies that $r'(t'') \in S'$ for all $t'' \in \bbbt$;
			\end{compactitem}
		\item $S'$ is \textbf{limit-closed}, i.e., $r(t) \in S'$ for  all $t \in \bbbt$ implies that $r \in S'$.
	\end{compactenum}
\end{definition}

For the formal reasoning that any property $P^\alpha \subseteq \systrans$ can be written as intersection of a safety and liveness property, see Appendix, Defs.~\ref{def:saf_liv_prep} and~\ref{def:trace_saf} and Lemmas~\ref{lem:L_is_live} and~\ref{lem:P_is_L_cap_S}.
Since safety properties based on traces are inconvenient for reasoning on a round by round basis, we introduce an equivalent safety property representation, better suited for this task.
The fact that our alternative safety property definition is indeed equivalent to the trace safety property representation, follows from Appendix, Def.~\ref{def:bij_map_tr_to_op} and Lemma \ref{lem:f_is_bijective}.

\begin{definition}
	\label{def:safety_property}
	An \textbf{operational safety property} $\saf$ is defined as a function
		$\saf \colon \prefset \to 2^{2^{\gevents} \times 2^{\gtrueactions[1]} \times \ldots \times 2^{\gtrueactions[n]}}$,
	which satisfies the following two conditions, called \emph{operational safety property attributes}. ($[]$~represents the empty sequence.)
	\begin{compactenum}
		\item\label{item:op_saf_attr_1} $(\exists h \in \prefset)\ h_\epsilon = [] \ \wedge \ S(h) \ne \varnothing$; 
		\item\label{item:op_saf_attr_2} $(\forall h \in \prefset)\ h_\epsilon \ne [] \ \rightarrow \\
		\big(\big((\exists h' \in \prefset)(\exists X \in S(h'))\ h = \update{h'}{X}\big) \ \leftrightarrow \ S(h) \ne \varnothing$\big). 
	\end{compactenum}
	The set of all operational safety properties is denoted by $\opsaf$.
\end{definition}
	Informally, Condition~\ref{item:op_saf_attr_1} means that there exists at least one safe initial state.
	Condition~\ref{item:op_saf_attr_2} means that every non-initial state is safely extendable if and only if it is safely reachable.
From this point on, whenever we refer to a safety property, we mean the operational safety property 
(the trace safety property representation can always be retrieved if desired).
%
We have discovered that \textbf{downward closure}  of the safety property of an extension greatly improves its composability. Fortunately, it turned out  that a few real-life safety properties (e.g., time-bounded communication, at-most-$f$ byzantine agents) are, in fact, downward closed.\looseness=-1

\begin{definition}
	A safety property $S$ is \textbf{downward closed} if{f}
		for all $h \in \prefset$,   $(X_\epsilon, X_\agents) \in \saf(h)$,  $X'_\epsilon \subseteq X_\epsilon$, and $X'_i \subseteq X_i$ for $i \in \agents$, we have  $(X'_\epsilon, X'_\agents) \in \saf(h)$.
\end{definition}

\ownsubsubsection{Implementation classes.}
According to Def.~\ref{def:extension}, specific system assumptions can be implemented via extensions using a combination of altering the set of environment protocols, set of agent protocols, event/action filter functions, and the admissibility condition.
One crucial question arises:
\emph{if a particular property could be implemented using different combinations of these mechanisms, which one of them should be favored?} \looseness=-1
The answer to this question is informed by our goal to construct extensions in the most  modular and composable manner.
Indeed, while the compatibility of two extensions guarantees their composition to produce runs, these runs may violate the safety property of one of the extensions, thereby defying the purpose of their combination. Here are two examples:
\begin{example} \label{ex:intuitive_ext_comb}
A necessary event whose presence is ensured by the protocol restrictions of one extension may be removed by the event filter of the other extension.	
\end{example}
\begin{example}
\label{ex:smuggle_cause}
Consider composing $\filtere[B]{}{}$ from~\cite{KPSF19:FroCos}, the \emph{causal filter} that removes any receive event without a matching send (correct or fake) (see~\eqref{eq:filter_function} for the formal definition),  with the \emph{synchronous agents filter} $\filtere[S]{}{}$  (see Def.~\ref{def:filterSyn}).
If  $\filtere[S]{}{}$ is applied last, it may remove some $go(i)$ event, preventing agent~$i$ from  sending a message necessary to support the causality of some receive event.
Thus, one must first apply $\filtere[S]{}{}$, followed by $\filtere[B]{}{}$.\looseness=-1
\end{example}
Therefore, in this section, we provide a classification of extension implementations, which we call \textbf{implementation classes}, in order to analyze their composability and answer our posed question.

\begin{table}[h] 
	\caption{Implementation classes}
	\label{tab:impl_classes}
	\resizebox{\textwidth}{!}{
	\begin{tabular} {| >{\centering}p{2.4cm} | >{\centering\arraybackslash}m{1.2cm} | >{\centering\arraybackslash}m{1.2cm} | >{\centering\arraybackslash}m{1.2cm} | >{\centering\arraybackslash}m{1.2cm} | >{\centering\arraybackslash}m{1.2cm} | >{\centering\arraybackslash}m{1.2cm} | >{\centering\arraybackslash}m{1.2cm} | >{\centering\arraybackslash}m{1.2cm} | >{\centering\arraybackslash}m{1.2cm} |}
	\hline
		\tiny $\ic$	&\tiny\textbf{admiss. condition}	&\tiny\textbf{initial states}	&\tiny\textbf{joint protocols}	&\tiny\textbf{environ. protocols}	&\tiny\textbf{arbitrary event filter}	&\tiny\textbf{standard action filters 
		}	&\tiny\textbf{arbitrary action filters}	&\tiny\textbf{downward closed}	&\tiny\textbf{monotonic filters} \tabularnewline \hline
		\tiny$\adm$           	&x&x&&&&&&& \tabularnewline \hline
		\tiny$\jp$	            &x&x&x&&&&&& \tabularnewline \hline
		\tiny$\jpfb$            &x&x&x&&&x&&& \tabularnewline \hline
		\tiny$\ejp$	 	      	&x&x&x&x&&&&& \tabularnewline \hline
		\tiny$\ejpfb$	        &x&x&x&x&&x&&& \tabularnewline \hline
		\tiny$\efjp$	       	&x&x&x&&x&&&& \tabularnewline \hline
		\tiny$\efjpfb$	        &x&x&x&&x&x&&& \tabularnewline \hline
		\tiny$\efejp$	        &x&x&x&x&x&&&& \tabularnewline \hline
		\tiny$\efejpfb$	        &x&x&x&x&x&x&&& \tabularnewline \hline
		\tiny$\others$	        &x&x&x&x&x&&x&& \tabularnewline \hline
		\tiny$\jpdc$	       	&x&x&x&&&&&x& \tabularnewline \hline
		\tiny$\ejpdc$	        &x&x&x$^{1}$&x$^{1}$&&&&x& \tabularnewline \hline
		\tiny$\efjpdc$	        &x&x&x$^{1}$&&x$^{1}$&&&x& \tabularnewline \hline
		\tiny$\efejpdc$	        &x&x&x$^{1}$&x$^{1}$&x$^{1}$&&&x& \tabularnewline \hline
		\tiny$\othersdc$	   	&x&x&x$^{1}$&x$^{1}$&x$^{1}$&&x$^1$&x& \tabularnewline \hline
		\tiny$\efejpdcmono$	    &x&x&x$^{1}$&x$^{1}$&x$^{1,2}$&&&x&x \tabularnewline \hline
		\tiny$\othersdcmono$  	&x&x&x$^{1}$&x$^{1}$&x$^{1,2}$&&x$^{1,2}$&x&x \tabularnewline \hline
	\end{tabular}
	}
\tiny
1) such that the extension's safety property remains downward closed \\
2) such that the extension's filters are monotonic
\end{table}
\begin{definition} \label{def:impl_class}
	\textbf{Implementation classes} are sets of extensions 
	presented in Table~\ref{tab:impl_classes}, where the name of the implementation class is stated in the leftmost column and  parts manipulated  and properties satisfied 
 by this extension are marked by ``\textup{x}'' in its row.
	Note that the last seven~classes are subsets of other classes. We  consider them separately due to their altered attributes regarding composability.
	The set of all implementation classes is denoted by~$\ic$.%
\footnote{For a detailed definition see Appendix, Def.~\ref{def:impl_classes_extended}.}\looseness=-1
\end{definition}

To describe  implementation class \textbf{composability}, we introduce the forth and reverse composability relations.\looseness=-1

\begin{definition}
Two implementation classes $IC^\alpha, IC^\beta \in \ic$ are \textbf{forth} (\textbf{reverse}) \textbf{composable} if{f} for all extensions $\extension[\alpha] \in IC^\alpha$ and $\extension[\beta] \in IC^\beta$  compatible with respect to forth (reverse) composition $\alpha \circ \beta$ ($\beta \circ \alpha$), the extension $\extension[\alpha \circ \beta]$ ($\extension[\beta \circ \alpha]$) adheres to the safety property $S^\beta$ of $\extension[\beta]$.\footnote{Table~\ref{tab:composability_matrix} states  our composability results for various implementation class combinations.}
\looseness=-1
\end{definition}



Our synchronous agents introduced in Sect.~\ref{sec:synchronous_agents} correspond to the following extension from the class $\efjpfb$:
\begin{definition}
    \label{def:synchronous_a_extension}
    \index{$\extension[S]$}
    We denote by
		$\extension[S] \ce \left(\envprotocols\times\agprotocols[S], 2^{\globalstates(0)} \setminus \{\varnothing\}, \transitionfrom[S]{}{}, \System\right)$
	the \textbf{synchronous agents extension}, where the transition template $\transitionfrom[S]{}{}$ uses the synchronous agents event filter 
	and the standard action filters
	.\looseness=-1
\end{definition}

\begin{table}[t] 
	\caption{Composability matrix of implementation classes }
	\label{tab:composability_matrix}
	\resizebox{\textwidth}{!}{
	\begin{tabular} {| p{2.4cm} | >{\centering\arraybackslash}m{0.6cm} | >{\centering\arraybackslash}m{0.6cm} | >{\centering\arraybackslash}m{0.6cm} | >{\centering\arraybackslash}m{0.6cm} | >{\centering\arraybackslash}m{0.6cm} | >{\centering\arraybackslash}m{0.6cm} | >{\centering\arraybackslash}m{0.6cm} | >{\centering\arraybackslash}m{0.6cm} | >{\centering\arraybackslash}m{0.6cm} | >{\centering\arraybackslash}m{0.6cm} | >{\centering\arraybackslash}m{0.6cm} | >{\centering\arraybackslash}m{0.6cm} | >{\centering\arraybackslash}m{0.6cm} | >{\centering\arraybackslash}m{0.6cm} | >{\centering\arraybackslash}m{0.6cm} | >{\centering\arraybackslash}m{0.6cm} | >{\centering\arraybackslash}m{0.6cm} |}
	\hline
							&\tiny\textbf{Adm}	&\tiny\textbf{JP}	&\tiny\textbf{Env JP}	&\tiny\textbf{EvF JP}	&\tiny\textbf{EvF Env JP}	&\tiny\textbf{JP - AFB}	&\tiny\textbf{Env JP - AFB}	&\tiny\textbf{EvF JP - AFB}	&\tiny\textbf{EvF Env JP - AFB}	&\tiny\textbf{Oth ers}	&\tiny\textbf{JP $_{DC}$}	&\tiny\textbf{Env JP $_{DC}$}	&\tiny\textbf{EvF JP $_{DC}$}	&\tiny\textbf{EvF Env JP $_{DC}$}	&\tiny\textbf{Oth ers $_{DC}$}	&\tiny\textbf{EvF Env JP $_{DC}$ $_{mono}$}	&\tiny\textbf{Oth ers $_{DC}$ $_{mono}$} \tabularnewline
		\hline
		\tiny$\adm$           	&\cellcolor{green}c	&\cellcolor{green}c	&\cellcolor{green}c	&\cellcolor{green}c	&\cellcolor{green}c	&\cellcolor{green}c	&\cellcolor{green}c	&\cellcolor{green}c	&\cellcolor{green}c	&\cellcolor{green}c	&\cellcolor{green}c	&\cellcolor{green}c	&\cellcolor{green}c	&\cellcolor{green}c	&\cellcolor{green}c	&\cellcolor{green}c	&\cellcolor{green}c \tabularnewline \hline
		\tiny$\jp$	            &\cellcolor{green}c	&\cellcolor{green}c	&\cellcolor{green}c	&\cellcolor{green}c	&\cellcolor{green}c	&\cellcolor{green}c	&\cellcolor{green}c	&\cellcolor{green}c	&\cellcolor{green}c	&\cellcolor{green}c	&\cellcolor{green}c	&\cellcolor{green}c	&\cellcolor{green}c	&\cellcolor{green}c	&\cellcolor{green}c	&\cellcolor{green}c	&\cellcolor{green}c \tabularnewline \hline
		\tiny$\ejp$	            &\cellcolor{green}c	&\cellcolor{green}c	&\cellcolor{green}c	&\cellcolor{green}c	&\cellcolor{green}c	&\cellcolor{green}c	&\cellcolor{green}c	&\cellcolor{green}c	&\cellcolor{green}c	&\cellcolor{green}c	&\cellcolor{green}c	&\cellcolor{green}c	&\cellcolor{green}c	&\cellcolor{green}c	&\cellcolor{green}c	&\cellcolor{green}c	&\cellcolor{green}c \tabularnewline \hline
		\tiny$\efjp$	       	&\cellcolor{green}c	&\cellcolor{green}c	&\cellcolor{red!50}		&\cellcolor{cyan}r	&\cellcolor{red!50}	&\cellcolor{green}c	&\cellcolor{red!50}	&\cellcolor{cyan}r	&\cellcolor{red!50}	&\cellcolor{red!50}	&\cellcolor{green}c	&\cellcolor{green}c	&\cellcolor{green}c	&\cellcolor{blue}{\color{white}f}	&\cellcolor{blue}{\color{white}f}	&\cellcolor{green}c	&\cellcolor{green}c \tabularnewline \hline
		\tiny$\efejp$	        &\cellcolor{green}c	&\cellcolor{green}c	&\cellcolor{red!50}		&\cellcolor{cyan}r	&\cellcolor{red!50}	&\cellcolor{green}c	&\cellcolor{red!50}	&\cellcolor{cyan}r	&\cellcolor{red!50}	&\cellcolor{red!50}	&\cellcolor{green}c	&\cellcolor{green}c	&\cellcolor{green}c	&\cellcolor{blue}{\color{white}f}	&\cellcolor{blue}{\color{white}f}	&\cellcolor{green}c	&\cellcolor{green}c \tabularnewline \hline
		\tiny$\jpfb$	       	&\cellcolor{green}c	&\cellcolor{red!50}		&\cellcolor{red!50}		&\cellcolor{red!50}		&\cellcolor{red!50}	&\cellcolor{green}c	&\cellcolor{green}c	&\cellcolor{green}c	&\cellcolor{green}c	&\cellcolor{red!50}	&\cellcolor{green}c	&\cellcolor{green}c	&\cellcolor{green}c	&\cellcolor{green}c	&\cellcolor{blue}{\color{white}f}	&\cellcolor{green}c	&\cellcolor{green}c \tabularnewline \hline
		\tiny$\ejpfb$	        &\cellcolor{green}c	&\cellcolor{red!50}		&\cellcolor{red!50}		&\cellcolor{red!50}		&\cellcolor{red!50}	&\cellcolor{green}c	&\cellcolor{green}c	&\cellcolor{green}c	&\cellcolor{green}c	&\cellcolor{red!50}	&\cellcolor{green}c	&\cellcolor{green}c	&\cellcolor{green}c	&\cellcolor{green}c	&\cellcolor{blue}{\color{white}f}	&\cellcolor{green}c	&\cellcolor{green}c \tabularnewline \hline
		\tiny$\efjpfb$	        &\cellcolor{green}c	&\cellcolor{red!50}		&\cellcolor{red!50}		&\cellcolor{red!50}		&\cellcolor{red!50}	&\cellcolor{green}c	&\cellcolor{red!50}	&\cellcolor{cyan}r	&\cellcolor{red!50}	&\cellcolor{red!50}	&\cellcolor{green}c	&\cellcolor{green}c	&\cellcolor{green}c	&\cellcolor{blue}{\color{white}f}	&\cellcolor{blue}{\color{white}f}	&\cellcolor{green}c	&\cellcolor{green}c \tabularnewline \hline
		\tiny$\efejpfb$	        &\cellcolor{green}c	&\cellcolor{red!50}		&\cellcolor{red!50}		&\cellcolor{red!50}		&\cellcolor{red!50}	&\cellcolor{green}c	&\cellcolor{red!50}	&\cellcolor{cyan}r	&\cellcolor{red!50}	&\cellcolor{red!50}	&\cellcolor{green}c	&\cellcolor{green}c	&\cellcolor{green}c	&\cellcolor{blue}{\color{white}f}	&\cellcolor{blue}{\color{white}f}	&\cellcolor{green}c	&\cellcolor{green}c \tabularnewline \hline
		\tiny$\others$	        &\cellcolor{green}c	&\cellcolor{red!50}		&\cellcolor{red!50}		&\cellcolor{red!50}		&\cellcolor{red!50}	&\cellcolor{red!50}	&\cellcolor{red!50}	&\cellcolor{red!50}	&\cellcolor{red!50}	&\cellcolor{red!50}	&\cellcolor{green}c	&\cellcolor{green}c	&\cellcolor{green}c	&\cellcolor{blue}{\color{white}f}	&\cellcolor{blue}{\color{white}f}	&\cellcolor{green}c	&\cellcolor{green}c \tabularnewline \hline
		\tiny$\jpdc$	       	&\cellcolor{green}c	&\cellcolor{green}c	&\cellcolor{green}c	&\cellcolor{green}c	&\cellcolor{green}c	&\cellcolor{green}c	&\cellcolor{green}c	&\cellcolor{green}c	&\cellcolor{green}c	&\cellcolor{green}c	&\cellcolor{green}c	&\cellcolor{green}c	&\cellcolor{green}c	&\cellcolor{green}c	&\cellcolor{green}c	&\cellcolor{green}c	&\cellcolor{green}c \tabularnewline \hline
		\tiny$\ejpdc$	        &\cellcolor{green}c	&\cellcolor{green}c	&\cellcolor{green}c	&\cellcolor{green}c	&\cellcolor{green}c	&\cellcolor{green}c	&\cellcolor{green}c	&\cellcolor{green}c	&\cellcolor{green}c	&\cellcolor{green}c	&\cellcolor{green}c	&\cellcolor{green}c	&\cellcolor{green}c	&\cellcolor{green}c	&\cellcolor{green}c	&\cellcolor{green}c	&\cellcolor{green}c \tabularnewline \hline
		\tiny$\efjpdc$	        &\cellcolor{green}c	&\cellcolor{green}c	&\cellcolor{red!50}		&\cellcolor{cyan}r	&\cellcolor{red!50}		&\cellcolor{green}c	&\cellcolor{red!50}	&\cellcolor{cyan}r	&\cellcolor{red!50}	&\cellcolor{red!50}	&\cellcolor{green}c	&\cellcolor{green}c	&\cellcolor{green}c	&\cellcolor{blue}{\color{white}f}	&\cellcolor{blue}{\color{white}f}	&\cellcolor{green}c	&\cellcolor{green}c \tabularnewline \hline
		\tiny$\efejpdc$	        &\cellcolor{green}c	&\cellcolor{green}c	&\cellcolor{red!50}		&\cellcolor{cyan}r	&\cellcolor{red!50}		&\cellcolor{green}c	&\cellcolor{red!50}	&\cellcolor{cyan}r	&\cellcolor{red!50}	&\cellcolor{red!50}	&\cellcolor{green}c	&\cellcolor{green}c	&\cellcolor{green}c	&\cellcolor{blue}{\color{white}f}	&\cellcolor{blue}{\color{white}f}	&\cellcolor{green}c	&\cellcolor{green}c \tabularnewline \hline
		\tiny$\othersdc$	   	&\cellcolor{green}c	&\cellcolor{red!50}		&\cellcolor{red!50}		&\cellcolor{red!50}		&\cellcolor{red!50}		&\cellcolor{red!50}	&\cellcolor{red!50}	&\cellcolor{red!50}	&\cellcolor{red!50}	&\cellcolor{red!50}	&\cellcolor{green}c	&\cellcolor{green}c	&\cellcolor{green}c	&\cellcolor{blue}{\color{white}f}	&\cellcolor{blue}{\color{white}f}	&\cellcolor{green}c	&\cellcolor{green}c \tabularnewline \hline
		\tiny$\efejpdcmono$	    &\cellcolor{green}c	&\cellcolor{green}c	&\cellcolor{red!50}		&\cellcolor{cyan}r	&\cellcolor{red!50}		&\cellcolor{green}c	&\cellcolor{red!50}	&\cellcolor{cyan}r	&\cellcolor{red!50}	&\cellcolor{red!50}	&\cellcolor{green}c	&\cellcolor{green}c	&\cellcolor{green}c	&\cellcolor{blue}{\color{white}f}	&\cellcolor{blue}{\color{white}f}	&\cellcolor{green}c	&\cellcolor{green}c \tabularnewline \hline
		\tiny$\othersdcmono$  	&\cellcolor{green}c	&\cellcolor{red!50}		&\cellcolor{red!50}		&\cellcolor{red!50}		&\cellcolor{red!50}		&\cellcolor{red!50}	&\cellcolor{red!50}	&\cellcolor{red!50}	&\cellcolor{red!50}	&\cellcolor{red!50}	&\cellcolor{green}c	&\cellcolor{green}c	&\cellcolor{green}c	&\cellcolor{blue}{\color{white}f}	&\cellcolor{blue}{\color{white}f}	&\cellcolor{green}c	&\cellcolor{green}c \tabularnewline \hline
	\end{tabular}
	}
\end{table}
\noindent
The entries in Table \ref{tab:composability_matrix} are to be read as follows supposing $x$ is the content of the entry, $LC$ is the implementation class to the left and $TC$ is the implementation class on the top:
\begin{compactitem}
	\item $x=c$ means that $LC$ is both forth and reverse composable with $TC$.
	\item $x=f$ means that $LC$ is forth composable with $TC$.
	\item $x=r$ means that $LC$ is reverse composable with $TC$.
	\item An empty entry means that $LC$ can generally not be safely combined with $TC$ (we do not have a positive result stating the opposite).
\end{compactitem}


\section{Lock-step Synchronous Agents} \label{sec:lss_agents}

In lock-step synchronous distributed systems \cite{lynch1996distributed}, agents act synchronously in \emph{communication-closed} rounds.
In each such round, every correct agent sends a message to every agent, which is received in the same round, and finally processes all received messages, which happens simultaneously at all correct agents.
Thus, agents are not only synchronous, but additionally their communication is reliable, broadcast, synchronous (and causal).
Our \emph{lock-step round extension}  combines 5~different extensions corresponding to the aforementioned properties: the
(i)~byzantine agents \cite{KPSF19:FroCos}, (ii)~synchronous agents extension from Def.~\ref{def:synchronous_a_extension}, (iii)~reliable communication extension ensuring that every sent message is eventually delivered, (iv)~synchronous communication extension ensuring that every message is either received instantaneously or not at all, and (v)~broadcast communication extension ensuring that every correctly sent message is sent to all agents.\footnote{Formal definitions for~(i), (iii), (iv),~(v) can be found in  Appendix, starting from Def.~\ref{def:asynch_byz_ag_ext}. }
\looseness=-1

Since, by Lemma \ref{lem:synch_compose-fake-freeze}, even synchronous agents can be fooled by their own (faulty) imagination, it is natural to ask whether a brain-in-a-vat scenario is still possible in the more restricted lock-step synchronous setting.
The proof of the possibility of the brain-in-a-vat scenario from Lemma~\ref{lem:synch_compose-fake-freeze} in an asynchronous setting provided in~\cite{KPSF19:FroCos} suggests this not to be the case. However, by considering extension combinations more closely and in more detail
we were able to implement such a scenario despite the additional restrictions. The issue is that the $i$-intervention $\newfakerule{i}{t}{}$ from Def.~\ref{def:bitv_interv} makes it possible for byzantine actions of the dreaming ``brain'' to affect other agents. This possibility becoming a certainty due to the more reliable communication of lock-step synchronous agents is the obstacle preventing the complete isolation of the brain. \looseness=-1
 This effect can be avoided by  modifying $\newfakerule{i}{t}{}$ to make all byzantine actions entirely imaginary. This new $i$-intervention $\improvednewfakerule{i}{t}{}$ is obtained from Def.~\ref{def:bitv_interv} by replacing $\betab[i]{r}{t}$ in the modified events with:
\begin{gather*} 
		\left\{\fakeof{i}{E} \mid \fakeof{i}{E} \in \betab[i]{r}{t}  \right\} \quad \cup \\
		\left\{\fakeof{i}{\mistakefor{\tick}{A}} \mid (\exists A' \in \gactions[i] \sqcup \{\tick\})  \fakeof{i}{\mistakefor{A'}{A}} \in \betab[i]{r}{t}  \right\}.
\end{gather*}
(The full reformulation of Lemma~\ref{lem:synch_compose-fake-freeze} for this case with a short proof can be found in  Appendix, Lemma \ref{lem:lock_step_synch_compose-fake-freeze}.) Therefore, even perfect clocks and communication-closed rounds do not exclude the ``brain-in-a-vat'' scenario, with the consequence that most (negative) introspection results for synchronous systems also hold for lock-step synchronous systems:
\begin{theorem}
Replacing $\newfakerule{i}{t}{}$   with $\improvednewfakerule{i}{t}{r}$ in Lemma~\ref{lem:synch_compose-fake-freeze} extends the latter's ``brain-in-the-vat'' properties to lock-step synchronous system.
\end{theorem}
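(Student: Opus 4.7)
The plan is to mirror the proof of Lemma~\ref{lem:synch_compose-fake-freeze}, replacing $\newfakerule{i}{m}{}$ with $\improvednewfakerule{i}{m}{}$ for the brain while keeping $\newfreezerule{}$ or $\newffreezerule{j}{}$ for the other agents, and then verifying that the four properties (indistinguishability for~$i$, freezing of $j\ne i$, faultiness of~$i$ from the start, and controlled faultiness of the $j\ne i$) still hold under the composed lock-step transition template $\transitionfrom[LS]{}{}$. The intervention $\improvednewfakerule{i}{m}{}$ differs from $\newfakerule{i}{m}{}$ only in restricting the byzantine events to those that are \emph{purely imaginary}: byzantine events of the form $\fakeof{i}{E}$ without an associated correct action, and $\fakeof{i}{\mistakefor{\tick}{A}}$ in place of any $\fakeof{i}{\mistakefor{A'}{A}}$.

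First, I would confirm indistinguishability for~$i$. Since the local update function  records $A$ in $i$'s local history from either $\fakeof{i}{\mistakefor{A'}{A}}$ or $\fakeof{i}{\mistakefor{\tick}{A}}$, and fake events $\fakeof{i}{E}$ are carried over unchanged, the brain's local history evolves exactly as in the original proof; thus Property~\ref{lem:main_synch_compose-fake-freeze:i-same} of Lemma~\ref{lem:synch_compose-fake-freeze} is preserved. Properties~\ref{lem:main_synch_compose-fake-freeze:i-bad} and~\ref{lem:main_synch_compose-fake-freeze:only-i-failed} are immediate because the set of byzantine events for~$i$ is still nonempty in every round and the $j$-interventions are unchanged. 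The decisive improvement concerns Property~\ref{lem:main_synch_compose-fake-freeze:j-frozen}: since $\improvednewfakerule{i}{m}{}$ emits no element of $\gtrueactions[i]$, no global send action of the brain reaches the updating phase, so the broadcast/reliable/synchronous communication constraints can no longer inject messages into the local histories of $j\ne i$, and the $\newfreezerule{}/\newffreezerule{j}{}$ interventions succeed in keeping those agents in their initial states.

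The main obstacle is the transitionality check under the combined filters of $\transitionfrom[LS]{}{}$, which composes $\filtere[S]{}{}$ with the causal filter $\filtere[B]{}{}$ from~\cite{KPSF19:FroCos} and the filters enforcing reliable, synchronous, and broadcast communication. I would argue that none of these filters strips a prescribed event: $\filtere[S]{}{}$ does not remove any $go(i)$ because the adjustment prescribes none in rounds $0$\textonehalf{}\,--\,$(t{-}1)$\textonehalf{}; $\filtere[B]{}{}$ leaves the imaginary fake events untouched since, by construction, they are byzantine events requiring no causal support; and the reliable, synchronous, and broadcast communication filters constrain only \emph{correct} sends and receives, of which there are none in the modified run. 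Combined with the downward-closedness of the safety properties of the constituent extensions (as per the classification in Sect.~\ref{sec:extensions}), this shows that the modified run is admissible in the composed extension, and the proof of Lemma~\ref{lem:synch_compose-fake-freeze} then transfers verbatim. \qed
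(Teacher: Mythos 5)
Your proposal is correct and follows essentially the same route as the paper's proof (Appendix, Lemma~\ref{lem:lock_step_synch_compose-fake-freeze}): both arguments reduce to the observation that, before time~$t$, the modified adjustment produces no $go$ events, no correct events, and no actually-performed (fake) sends, so the causal and synchronous event filters act vacuously and the reliable/synchronous/broadcast constraints — which, as a small point of precision, are implemented as an admissibility condition, an environment-protocol restriction, and a joint-protocol restriction rather than as filters — are never triggered. The one loose step is attributing the isolation of the agents $j \ne i$ to $\improvednewfakerule{i}{m}{}$ emitting no element of $\gtrueactions[i]$ (neither intervention does); the operative point is that it emits no byzantine event whose \emph{performed} component is a send, which is precisely what would otherwise activate $\EDel_{\agents^2}$ and force a delivery to some $j \ne i$.
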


But besides the fact that our lock-step round extension was instrumental for identifying the subtle improvements in implementing the brain-in-the-vat scenario, it does have positive consequences for the fault-detection abilities of the agents as well: using a weaker epistemic notion of the hope modality $H$, we have shown that in a lock-step synchronous context  it is possible to design  agent protocols to detect faults of other agents.

\begin{theorem} \label{thm:lock_step_synch_gl_int_K_faulty}
	There exists an agent context $\chi \in \extension[\mathit{LSS}]$, where $\chi = \bigl((\envprotocol{}^{\mathit{SC}_{\agents^2}}, \allowbreak \globalinitialstates, \allowbreak \transitionfrom[B \circ S]{}{}, \allowbreak \EDel_{\agents^2}),\widetilde{\joinprotocol{}}^{\mathit{SMC}_{\mathit{BCh}}}\bigr)$,
	and a run $r \in \system{\chi}$, such that
	for agents $i,j \in \agents$, where $i \ne j$,
	some timestamp $t \in \mathbb{N}$,
	and a $\chi$-based interpreted system $\intsys = (\system{\chi}, \interpretation{}{})$
	\begin{equation*} \label{equ:lock_step_synch_gl_int_K_faulty}
		\kstruct{}{r}{t} \models H_{i}{\faulty{j}}.
	\end{equation*}
\end{theorem}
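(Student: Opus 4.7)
The plan is to construct a concrete lock-step synchronous agent-context and a short run in which $i$'s local view forces $j$ to be faulty under the assumption $\correct{i}$. I would choose $\widetilde{\joinprotocol{}}^{\mathit{SMC}_{\mathit{BCh}}}$ so that, in every local state, each agent $k$'s protocol deterministically outputs an action set containing $\clock$ together with a send of a fixed message $\mu$ to every other agent. Thus every correctly activated agent broadcasts $\mu$ in every synced round. The filters composed inside $\transitionfrom[B \circ S]{}{}$, together with the reliable, broadcast, and synchronous communication filters that complete $\extension[\mathit{LSS}]$, enforce that every correct $\gsend{k}{\ell}{\mu}{id}$ issued in a synced round produces a matching $\grecv{k}{\ell}{\mu}{id}$ at $\ell$ in the same round.

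Next, I would exhibit the witness run $r$ whose round $0.5$ is the unique synced round up to time $1$: in it, every agent except $j$ is activated by $go$, while $j$ receives $\sleep{j}$. The round satisfies Def.~\ref{def:virtRound} because each agent has an event in $\sevents$. The resulting local history $r_i(1)$ therefore contains $\clock$ followed by $\recv{k}{\mu}$ for every $k \notin \{i, j\}$ but no $\recv{j}{\mu}$. Existence of such an $r$ in $\system{\chi}$ requires the composed extension to be non-excluding; the tail of $r$ past time $1$ can be chosen to keep $j$ silently faulty forever, so that no eventual-delivery constraint in $\EDel_{\agents^2}$ is violated, and the compatibility of the component extensions is read off from Table~\ref{tab:composability_matrix}.

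Finally, I would verify $\kstruct{}{r}{1} \models H_i \faulty{j}$. Unfolding to $\correct{i} \to K_i(\correct{i} \to \faulty{j})$, consider any $r' \in \system{\chi}$ with $r'_i(1) = r_i(1)$ and $(\intsys, r', 1) \models \correct{i}$. By Theorem~\ref{lem:virtAware}, the single $\clock$ in $r_i(1)$ forces $\NSR{r'(1)} = 1$, so round $0.5$ of $r'$ is its unique synced round. Since $i$ is correct there, $go(i) \in \betag[i]{r'}{0}$, and the lock-step filters guarantee that $i$ would have received the message of every correctly activated sender in that round. The absence of $\recv{j}{\mu}$ from $r'_i(1)$ therefore rules out $j$ being correctly activated and performing its mandated send; hence $r'_\epsilon(1)$ contains an event in $\fevents[j]$, which means $(\intsys, r', 1) \models \faulty{j}$.

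The main obstacle is justifying the ``forced delivery'' direction: $\filtere[B]{}{}$ alone can discard $\grecv{}{}{}{}$ events without a matching send but cannot by itself guarantee delivery. That guarantee must come from the reliable and synchronous communication filters of $\extension[\mathit{LSS}]$, and one must verify that the filter composition order prescribed by Example~\ref{ex:smuggle_cause} and the entries of Table~\ref{tab:composability_matrix} preserves the $\grecv{}{}{}{}$ events corresponding to correct broadcasts issued in the same synced round. Once this is established, the introspection argument reduces cleanly to counting $\clock$'s in $r_i(1)$ and invoking Theorem~\ref{lem:virtAware}.
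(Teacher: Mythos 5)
Your proof takes essentially the same route as the paper's: a joint protocol mandating a broadcast in every activated round, combined with Lemmas~\ref{lem:lss_virtProt} and~\ref{lem:lss_ag_send_inst_recv} to force same-round delivery from every correctly activated sender, so that a missing $\recv{j}{\mu}$ in $i$'s local history certifies $\faulty{j}$ under the assumption $\correct{i}$. The only notable difference is that you anchor the synced round via the $\clock$ count and Theorem~\ref{lem:virtAware}, whereas the paper uses $i$'s receipt of its own broadcast; both work, and your version of the indistinguishability step is if anything more explicit than the paper's. One small repair: your protocol must send $\mu$ to \emph{every} agent (including the sender itself), since membership in $\agprotocols[\mathit{MC}_{\mathit{BCh}}]$ requires the recipient set of any sent message to equal $\agents$; sending only to ``every other agent'' would fall outside the broadcast extension.
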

\begin{proof}
	See Appendix, Theorem \ref{thm:lock_step_synch_gl_int_K_faulty_appendix}.
\end{proof}
\section{Conclusions}
\label{sec:conclusions}

We substantially augmented the epistemic reasoning framework for byzantine distributed systems \cite{KPSF19:FroCos} 
with extensions, which allow to incorporate additional system assumptions in a modular fashion.
By instantiating our extension framework for both synchronous and lock-step synchronous systems,
we proved that even adding perfect clocks and communication-closed rounds cannot circumvent the possibility of a brain-in-the-vat 
scenario and resulting negative introspection results, albeit they do enable some additional fault detection capabilities.\looseness=-1

%
%
%
\bibliographystyle{abbrvurl}
\bibliography{bib}

\newpage
\appendix
\section{Appendix} \label{sec:appendix}

\begin{definition}
\label{def:filter}
    \index{${\filtere{}{}}$}
	We define an \textbf{event filter function}
	\begin{equation*}
		\filtere{}{}\colon\globalstates \times 2^{\gevents} \times 2^{\gtrueactions[1]} \times \dots \times 2^{\gtrueactions[n]} \longrightarrow  2^{\gevents}.
	\end{equation*}

    In addition, we define  \textbf{action filter functions} for agents $i \in\agents$
	\begin{equation*}
		\filterag{i}{}{} : 2^{\gtrueactions[1]} \times \dots \times 2^{\gtrueactions[n]} \times2^{\gevents} \longrightarrow 2^{\gtrueactions[i]}.
	\end{equation*}
\end{definition}

\begin{figure*}[h]
    \begin{center}
        \scalebox{0.50}{

\tikzset{nodes0/.style={anchor=west},}
\tikzset{nodes1/.style={text width=1.5cm,anchor=west},}
\tikzset{nodes2a/.style={text width=0.5cm,anchor=west},}
\tikzset{nodes2/.style={text width=0.8cm,anchor=west},}
\tikzset{nodes3/.style={text width=1.2cm,anchor=west},}
\tikzset{nodes4/.style={text width=1.3cm,anchor=west},}

\begin{tikzpicture}
    \node[nodes0] at (0,1) (0) {$\run{}{t}$};

    \node[nodes1] at (2.5,3)  (11) {$\agprotocol{n}{\run{n}{t}}$};
    \node[nodes1] at (2.5,2)  (01) {$\dots$};
    \node[nodes1] at (2.5,1)  (1) {$\agprotocol{1}{\run{1}{t}}$};
    \node[nodes1] at (2.5,-2) (2) {$\envprotocol{t}$};

    \draw[->,fill=white] (0) -- (11) node[midway,fill=white]{$\agprotocol{n}{}$};
    \draw[->] (0) -- (1) node[midway,fill=white]{$\agprotocol{1}{}$};
    \draw[->] (0) -- (2) node[midway,fill=white]{$\envprotocol{}$};

    \node[nodes2a] at (7,3)  (13a) {$X_n$};
    \node[nodes2a] at (7,2)  (03a) {$\dots$};
    \node[nodes2a] at (7,1)  (3a) {$X_1$};
    \node[nodes2a] at (7,-2) (4a) {$X_\epsilon$};
    
    \draw[->] (11) -- (13a) node[midway,fill=white] {$\adversary{}$};
    \draw[->] (1) -- (3a) node[midway,fill=white] {$\adversary{}$};
    \draw[->] (2) -- (4a) node[midway,fill=white] {$\adversary{}$};
    
    \node[nodes2] at (9.5,3)  (13) {$\alphaag{n}{r}{t}$};
    \node[nodes2] at (9.5,2)  (03) {$\dots$};
    \node[nodes2] at (9.5,1)  (3) {$\alphaag{1}{r}{t}$};
    \node[nodes2] at (12.5,3)  (13m) {$\alphaag{n}{r}{t}$};
    \node[nodes2] at (12.5,2)  (03m) {$\dots$};
    \node[nodes2] at (12.5,1)  (3m) {$\alphaag{1}{r}{t}$};

    \node[nodes2] at (9.5,-2) (4) {$X_\epsilon= \alphae{r}{t}$};
    
    \draw[->] (13a) -- (13) node[midway,fill=white] {$\mathit{global}$};
    \draw[->] (3a) -- (3) node[midway,fill=white] {$\mathit{global}$};
    \draw[double] (4a) -- (4) node[midway] {};
    \draw[double] (4) -- (4) node[midway] {};
    \node[nodes3] at (16,3) (15) {$\betaag{n}{r}{t}$};
    \node[nodes3] at (16,2) (05) {$\dots$};
    \node[nodes3] at (16,1) (5) {$\betaag{1}{r}{t}$};
    \node[nodes3] at (12.5,-2) (6m) {$\betae{r}{t}$};
    \node[nodes3] at (16,-2) (6) {$\betae{r}{t}$};
    
    \draw[double] (13) -- (13m);
    \draw[double] (3) -- (3m);
    \draw[double] (6) -- (6m);

    \draw[->] (13m) -- (15) node[midway,fill=white] (f3)  {$\filterag{n}{}{}$};
    \draw[->] (3m) -- (5) node[midway,fill=white] (f1) {$\filterag{1}{}{}$};
    \draw[->] (4) -- (6m) node[midway,fill=white] (fe) {$\filtere{}{}$};

    \draw[->] (13) edge (fe) (3) edge (fe);
    \draw[->,dotted] (13) edge (f1) (3) edge (f3);
    \draw[->,dashed] (6m) edge (f1) (6m) edge (f3);    
    
    \node[nodes4] at (19,3) (17) {$\run{n}{t+1}$};
    \node[nodes4] at (19,2) (07) {$\dots$};
    \node[nodes4] at (19,1) (7) {$\run{1}{t+1}$};
    \node[nodes4] at (19,-2) (8) {$\run{\epsilon}{t+1}$};

    \draw[->] (15) -- (17) node[midway,fill=white](19) {$\updateag{n}{}{}{}$};
    \draw[->,fill=white] (5) -- (7) node[midway,fill=white] (9){$\updateag{1}{}{}{}$};
    \draw[->] (6) -- (8) node[midway,fill=white] (a){$\updatee{}{}$};
    
    \draw[->] (5) -- (a);
    \draw[->] (15) -- (a);
    
    \draw[->,dashed] (6) -- (19) node[near end,fill=white] {$\betae[n]{r}{t}$};    
    \draw[->,dashed] (6) -- (9) node[near end,right,fill=white] {$\betae[1]{r}{t}$};

    \node[nodes0] at (21.5,1) (10){$\run{}{t+1}$};
    \draw[->] (7) edge (10) (17) edge (10) (8) edge (10);
    
    \draw[->] (-0.25,-3) -- (22.5,-3);
    \node at (0,-3)    (t0) {$|$};
    \node at (0,-3.5) {$t$};
    \node at (3.1,-3)    (t1) {$|$};
    \node at (7.3,-3)  (t2a) {$|$};
    \node at (9.8,-3)  (t2) {$|$};
    \node at (16.6,-3) (t3) {$|$};
    \node at (22,-3)   (t5) {$|$};
    \node at (22,-3.5) {$t+1$};

    \draw[-] (t0) -- (t1) node[midway,above] {Protocol phase};
    \draw[-] (t1) -- (t2a) node[midway,above] {Adversary phase};
    \draw[-] (t2a) -- (t2) node[midway,above] {Labeling phase};
    \draw[-] (t2) -- (t3) node[midway,above] {Filtering phase};
    \draw[-] (t3) -- (t5) node[midway,above] {Updating phase};
\end{tikzpicture}
        }
        \caption{Details of round~$t$\textonehalf{} of a $\transitionExt{}{}$-transitional run~$r$.}
        \label{fig:trans_rel}
    \end{center}
\end{figure*} 

\begin{definition}
    The \textbf{causal event filter} returns the set of all attempted events that are ``causally'' possible.
	For a set $X_\epsilon \subseteq \gevents$, sets $X_i \subseteq \gtrueactions[i]$ for each agent $i\in\agents$, and a global history $h = (h_\epsilon, h_1,\dots, h_n) \in \globalstates$, we define 
    \begin{multline}
    \label{eq:filter_function}
    \filtere[B]{h}{X_\epsilon, X_1, \dots, X_n} 
    \ce 
    X_\epsilon
    \setminus 
    \Bigl\{
    \grecv{j}{i}{\mu}{id} 
    \mid 
    \gsend{i}{j}{\mu}{id} \notin h_\epsilon \quad
    \land \\
    (\forall A \in \{\tick\}\sqcup\gactions[i])\ \fakeof{i}{\mistakefor{\gsend{i}{j}{\mu}{id}}{A}} \notin h_\epsilon \quad\land
    \\
    (\gsend{i}{j}{\mu}{id} \notin X_i \lor go(i) \notin X_\epsilon) \quad \land 
    \\ 
    (\forall A \in \{\tick\}\sqcup\gactions[i])\ \fakeof{i}{\mistakefor{\gsend{i}{j}{\mu}{id}}{A}} \notin X_\epsilon
    \Bigr\} 
    \end{multline}
\end{definition}

\begin{definition}
\label{def:ag_history}
    A \textbf{history $h_i$ of agent $i \in \agents$}, or its \textbf{local state}, is a non-empty sequence
    $
    h_i=[\lambda_m,\dots,\lambda_1,\lambda_0]
    $
for some $m\geq0$ such that $\lambda_0\in\Sigma_i$ and $\forall k\in\llbracket1;m\rrbracket$ we have $\lambda_k\subseteq\truethings[i]$. 
In this case $m$ is called the \textbf{length of history} $h_i$ and denoted $|h_i|$. 
We say that a set  $\lambda \subseteq \truethings[i]$ \textbf{is recorded} in the history $h_i$ of agent $i$ and write $\lambda \subseteq h_i$ if{f} $\lambda=\lambda_k$ for some $k\in\llbracket1;m\rrbracket$.
We say that $o \in\truethings[i]$ \textbf{is recorded} in the history $h_i$ and write $o \in h_i$ if{f} $o \in \lambda$ for some set $\lambda\subseteq h_i$. \looseness=-1
 \end{definition}

\begin{definition}
\label{def:env_history}
    A \textbf{history $h$ of the system} with $n$ agents, or the \textbf{global state}, is a tuple
    $
    h\ce (h_\epsilon, h_1, \dots, h_n)
    $
where the \textbf{history of the environment} is a  sequence
$
h_\epsilon = [\Lambda_m, \dots, \Lambda_1]
$
for some $m\geq0$ such that  $\forall k\in\llbracket1;m\rrbracket$ we have $\Lambda_k \subseteq \ghaps$ and $h_i$ is a local state of each agent $i\in \llbracket1;n\rrbracket$. 
In this case $m$ is called the \textbf{length of history}~$h$ and denoted~$|h|\ce|h_\epsilon|$, i.e., the environment has the true global clock. 
We say that a set  $\Lambda \subseteq \ghaps$ \textbf{happens} in the environment's history $h_\epsilon$ or in the system history $h$ and write $\Lambda \subseteq h_\epsilon$ if{f} $\Lambda=\Lambda_k$ for some $k\in\llbracket1;m\rrbracket$.
We say that $O \in \ghaps$ \textbf{happens} in the environment's history~$h_\epsilon$ or in the system history $h$ and write  $O \in h_\epsilon$ if{f} $O \in \Lambda$ for some set $\Lambda \subseteq h_\epsilon$. 
\looseness=-1
\end{definition}

\begin{definition}[Localization function] \label{def:sig_loc_fun}
    The function 
    $ 
    \sigmaof{} \colon 2^{\ghaps}  \longrightarrow  2^{\haps}
    $ 
    is defined as follows
\begin{equation*} 
	\begin{gathered}
		\sigmaof{X} \ce local\Bigl(\mathstrut^{\mathstrut}\bigl(X \cap \gtruehaps \bigr) \ \cup \ \{E \mid (\exists i)\,\fakeof{i}{E}\in X\} \ \cup \\
		\{A' \ne \tick\mid (\exists i)(\exists A)\,\fakeof{i}{\mistakefor{A}{A'}}\in X\} \Bigr).
	\end{gathered}
\end{equation*}
\end{definition}

\begin{definition}[State update functions]\label{def:state-update}
        \index{${\updatee{}{},\updateag{i}{}{}{}}$}
           Given 
           $h = (h_\epsilon, h_1, \dots, h_n) \in \globalstates$, a tuple of performed actions/events $X=(X_{\epsilon},X_1,\dots,X_n) \in 2^{\gevents} \times 2^{\gtrueactions[1]}\times \ldots \times 2^{\gtrueactions[n]}$,
we use the following abbreviation $X_{\epsilon_i} = X_{\epsilon} \cap \gevents[i]$ for 
each 
$i \in \agents$. 
               Agents $i$'s update function 
               \[
               \updateag{i}{}{}{} \colon \localstates{i} \times 2^{\gtrueactions[i]}\times2^{\gevents} \to \localstates{i} 
               \]  
               outputs a new local history from $\localstates{i}$ based on $i$'s actions~$X_i$ and environment-controlled events~$X_{\epsilon}$ as follows:
               \begin{multline}
               \label{eq:update_agent}
                   \updateag{i}{h_i}{X_i}{X_\epsilon} \ce \\
                   \begin{cases}
                   h_i & 
                       \text{if $\sigma(X_{\epsilon_i})=\varnothing$ and } \\
						&X_{\epsilon_i} \cap \sevents[i] \notin \{\{go(i)\}, \{\sleep{i}\}\} \\
                       \Bigl[\sigmaof{ X_{\epsilon_i}\sqcup X_i} \Bigr] : h_i & \text{otherwise }
                   \end{cases}
               \end{multline}
     where $\colon$ represents sequence concatenation.
        Similarly, the environment's state update function 
        $
        \updatee{}{} \colon \localstates{\epsilon} \times \left(2^{\gevents} \times 2^{\gtrueactions[1]}\times \ldots \times 2^{\gtrueactions[n]}\right) \to \localstates{\epsilon}
        $
         outputs a new state of the environment based on $X$:      
        \begin{equation*}
        \updatee{h_\epsilon}{X} \ce (X_{\epsilon} \sqcup X_1 \sqcup \ldots \sqcup X_n) \colon h_\epsilon 
        \end{equation*} 
        Thus, the global state is modified as follows:
        \begin{equation*}
        \label{eq:update_global}
\adjustbox{width=\textwidth}{  $      \update{h}{X} \ce \bigl( \updatee{h_\epsilon}{X}, \updateag{1}{h_1}{X_1}{X_\epsilon}, \dots, \updateag{n}{h_n}{X_n}{X_\epsilon} \bigr)$}
        \end{equation*}
\end{definition}

\begin{definition}
\label{def:t-coherence}
	Let $t \in \bbbt$ be a timestamp.
	A set $S \subset \gevents$ of events is called \textbf{$t$-coherent} if it satisfies the following conditions:
	\begin{enumerate}
		\item\label{coh:fake_send}  for any $\fakeof{i}{\mistakefor{\gsend{i}{j}{\mu}{id}}{A}}\in S$, the GMI $id = id(i,j,\mu,k,t)$ for some $k\in \mathbb{N}$;
		\item\label{coh:one_go} for any $i \in \agents$ at most one event from $\sevents[i]$ is present in $S$;
		\item\label{coh:ext} for any $i \in \agents$ and any locally observable event $e$ at most one of $\glob{i,t}{e}$ and $\fakeof{i}{\glob{i,t}{e}}$ is present in $S$;
		\item\label{coh:recv} for any $\grecv{i}{j}{\mu}{id_1} \in S$, no event of the form $\fakeof{i}{\grecv{i}{j}{\mu}{id_2}}$ belongs to $S$ for any $id_2 \in \mathbb{N}$;
		\item\label{coh:fake_recv} for any $\fakeof{i}{\grecv{i}{j}{\mu}{id_1}} \in S$, no event of the form $\grecv{i}{j}{\mu}{id_2}$ belongs to $S$ for any $id_2 \in \mathbb{N}$;
	\end{enumerate}
\end{definition}



\begin{definition}
    \label{def:consistency}
    \label{def:system}
     For a context $\gamma=(\envprotocol{},\globalinitialstates,\transitionfrom{}{},\Psi)$ and a joint protocol $P$, we define the set of runs \textbf{weakly consistent} with $P$ in $\gamma$ (or weakly consistent with $\chi =(\gamma,\joinprotocol{} )$), denoted $\system{w\chi}=\system{w(\gamma,\joinprotocol{})}$, to be the  set of $\transition{}$-transitional runs that start at  some global initial state from $\globalinitialstates$:
        \begin{equation*}
            \system{w(\gamma,\joinprotocol{})} \ \ce \ \left\{ r \in R \ \mid \ 
                \run{}{0}\in\globalinitialstates \text{\quad and\quad}
                (\forall t\in\bbbt)\, \run{}{t+1}\in\transition{\run{}{t}}
            \right\} 
        \end{equation*}
        
        A run $r$  is called \textbf{strongly consistent}, or simply \textbf{consistent}, with $P$ in~$\gamma$ (or with $\chi$) if it is weakly consistent with $P$ in~$\gamma$ and, additionally, satisfies the admissibility condition:
        $r\in\Psi$. We denote the system of all runs consistent with~$\joinprotocol{}$ in~$\gamma$ by \looseness=-1
 $           \system{(\gamma,\joinprotocol{})} \ce \system{w(\gamma,\joinprotocol{})} \cap \Psi$.
\end{definition}

\begin{definition}
    \index{non-excluding}
    \label{def:nonexcluding}
    An agent-context $\chi$ is \textbf{non-excluding} if{f}
    \[
        \System[\chi]\neq\varnothing \quad\text{ and }\quad  (\forall r\in\System[w\chi]) (\forall t\in\bbbt) (\exists r'\in\System[\chi]) (\forall t'\le t)\, \run[']{}{t'}=\run{}{t'}
	\]
\end{definition}

\begin{definition}
    \index{$\pwrelation{i}$}
    \label{def:possible_world_relation}
    For agent $i\in\agents=\llbracket1;n\rrbracket$, the \textbf{indistinguishability relation} $\pwrelation{i} \subseteq \globalstates^2$ is formally defined by 
   $         \pwrelation{i} \ce \left\{ 
                (h,h') \mid \pi_{i+1}h = \pi_{i+1}{h'}
            \right\}$.
\end{definition}

\begin{definition}
    \index{$\K{i}{\varphi}$}\index{$\eventually{\varphi}$}
    \index{$\everyoneK{G}{\varphi}$}\index{$\commonK{G}{\varphi}$}
    \label{def:semantics}
    Given an interpreted system $\I = (\system{\chi}, \interpretation{}{})$ 
    an agent $i\in\agents$, a run $r\in \system{\chi}$, and a timestamp $t \in \bbbt$:\looseness=-1
    \begin{align*}
        &(\intsys,r,t) \models p &\text{ if{f} }\qquad& r(t) \in \pi(p) \\
        &(\intsys,r,t) \models \neg \varphi &\text{ if{f} }\qquad& (\intsys,r,t) \not\models \varphi\\
        &(\intsys,r,t) \models \varphi \wedge \varphi' &\text{ if{f} }\qquad&(\intsys,r,t) \models \varphi \text{ and }(\intsys,r,t) \models \varphi'\\
        &(\intsys,r,t) \models \K{i}{\varphi} &\text{ if{f} }\qquad& (\forall r'\in R')(\forall t'\in\bbbt) \left(r'_i(t')=r_i(t) \,\Rightarrow\, (\intsys,r',t') \models \varphi\right)
    \end{align*}
\end{definition}

\begin{definition}[Compatibility]
	\label{def:ext_compat}
	For a number of $l \ge 2$ extensions $\extension[\alpha_1]$, $\extension[\alpha_2]$, ..., $\extension[\alpha_l]$ we say the extensions $\extension[\alpha_1]$, $\extension[\alpha_2]$, ..., $\extension[\alpha_l]$ are compatible w.r.t. to some series of extension combinations $\star_1$, $\star_2$, ..., $\star_{l-1}$ \footnote{For some $l' \in \mathbb{N}$ we use $\alpha \star_{l'} \beta$ or just $\star$ to represent either forth composition ($\alpha \circ \beta$) or reversed composition ($\beta \circ \alpha$). Note that in our complete framework we distinguish between further types of combinations.} if{f} 
	$\mathit{PP}^\alpha_1 \cap \ldots \cap \mathit{PP}^\alpha_l \ne \varnothing$, 
	$\mathit{IS}^\alpha_1 \cap \ldots \cap \mathit{IS}^\alpha_l \ne \varnothing$,
	$\Admissibility[\alpha_1] \cap \ldots \cap \Admissibility[\alpha_l]\ne\varnothing$ and
	$\exists \chi \in \extension[\alpha_1 \star_1 \alpha_2 \star_2 \ldots \star_{l-1} \alpha_l]$.

    If{f} extensions $\extension[\alpha_1]$, $\extension[\alpha_2]$, ..., $\extension[\alpha_l]$ ($l \ge 2$) are \textbf{compatible} w.r.t. the extension combination series $\star_1$, $\star_2$, ..., $\star_{l-1}$, then $\extension[\alpha_1 \star_1 \alpha_2 \star_2 \ldots \star_{l-1} \alpha_l]$ is also an extension.
\end{definition}

\begin{definition} \label{def:t-coherent_domain}
	We define $\mathit{PD}_\epsilon^{t\textup{-coh}}$ as \emph{the (downward closed) domain of all $t$-coherent events}:
		$\mathit{PD}_\epsilon^{t\textup{-coh}} \ce \{X_\epsilon \in 2^{\gevents} \mid X_\epsilon \text{ is $t$-coherent for some } t \in \bbbt\}$.
\end{definition}

\begin{definition}
\label{def:liveness_prop}
	We define a \textbf{liveness} property as a subset  $\liv \subseteq \system{}$, where 
	$\liv \ne \varnothing \ \wedge \ (\forall r \in \systrans)(\forall t \in \bbbt)(\exists r' \in \liv)\ r'(t) = r(t)$.
\looseness=-1
\end{definition}
Informally, liveness says that every prefix $r(t)$ of every  run $r$  can be extended in~$L$.\looseness=-1

\begin{definition}
\label{def:adherence_lifeness_prop}
	An extension $\extension[\alpha]$ \textbf{adheres to} a safety property $S'$ (resp. liveness property $\liv$) if{f}
		$\ \bigcup_{\chi^\alpha \in \extension[\alpha]} \system{\chi^\alpha} \ \subseteq \ S'$ 
		$(\bigcup_{\chi^\alpha \in \extension[\alpha]} \system{\chi^\alpha} \ \subseteq \ \liv)$.
\end{definition}

\begin{definition} \label{def:saf_liv_prep}
	For a set $P^\alpha \subseteq \systrans$ of transitional runs, where $P^\alpha \ne \varnothing$, 
	\begin{align}
		L'^\alpha &\ce \{r \in \systrans \mid (\exists t \in \bbbt)(\forall r' \in P^\alpha)(\forall t' \in \bbbt)\ r(t) \ne r'(t')\} \\
		\overline{L^\alpha} &\ce P^\alpha \cup L'^\alpha. \label{eq:saf_liv_prep}
	\end{align}
\end{definition}

\begin{lemma} \label{lem:L_is_live}
	$\overline{L^\alpha}$ is a liveness property.
\end{lemma}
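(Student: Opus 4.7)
To establish that $\overline{L^\alpha}$ is a liveness property in the sense of Definition~\ref{def:liveness_prop}, I must verify three things: (i)~$\overline{L^\alpha} \subseteq \systrans$, (ii)~$\overline{L^\alpha} \ne \varnothing$, and (iii)~for every $r \in \systrans$ and every $t \in \bbbt$, there exists $r' \in \overline{L^\alpha}$ with $r'(t) = r(t)$. Claims~(i) and~(ii) are immediate: by Definition~\ref{def:saf_liv_prep}, $L'^\alpha \subseteq \systrans$ by construction and $P^\alpha \subseteq \systrans$ by hypothesis, so their union $\overline{L^\alpha}$ lies in $\systrans$; non-emptiness follows from the standing assumption $P^\alpha \ne \varnothing$ together with $P^\alpha \subseteq \overline{L^\alpha}$.

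The substantive content is claim~(iii). I proceed by a two-case split on whether the given run $r$ already belongs to $\overline{L^\alpha}$. If $r \in \overline{L^\alpha}$, pick $r' \ce r$ and there is nothing to do. Otherwise $r \notin P^\alpha$ and $r \notin L'^\alpha$, and negating the defining condition of $L'^\alpha$ from~\eqref{eq:saf_liv_prep} yields
\begin{equation*}
    (\forall t \in \bbbt)(\exists r'' \in P^\alpha)(\exists t'' \in \bbbt)\ r(t) = r''(t'').
\end{equation*}
Applied to the given $t$, this produces some $r'' \in P^\alpha \subseteq \overline{L^\alpha}$ and some $t'' \in \bbbt$ with $r(t) = r''(t'')$.

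The small subtlety is that liveness demands matching at the \emph{same} timestamp, i.e.\ $r''(t) = r(t)$, whereas the negation only supplies $r''(t'') = r(t)$ for some $t''$. Here I invoke the fact that in the framework every transition extends the environment's history by exactly one set of global haps (see the update equation $\updatee{h_\epsilon}{X} = (X_\epsilon \sqcup X_1 \sqcup \dots \sqcup X_n)\colon h_\epsilon$ in Definition~\ref{def:state-update}), while initial global states satisfy $r_\epsilon(0) = []$ (consistent with attribute~\ref{item:op_saf_attr_1} of Definition~\ref{def:safety_property}). Hence $|r_\epsilon(t)| = t$ for every transitional run. An equality $r''(t'') = r(t)$ of global states entails equality of their environment components, and therefore of their lengths, forcing $t'' = t$. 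Consequently $r''(t) = r(t)$, and $r'' \in \overline{L^\alpha}$ is the required witness.

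The only step that requires any thought is the length argument collapsing $t''$ to $t$; everything else is bookkeeping on the definitions. I expect this to be the main (and only) obstacle, and it is resolved by a single sentence appealing to the transition-relation update.
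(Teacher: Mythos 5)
Your proof is correct and follows essentially the same route as the paper's: non-emptiness from $P^\alpha\ne\varnothing$, and then a case split that either finds a prefix-matching run in $P^\alpha\subseteq\overline{L^\alpha}$ or places $r$ itself in $L'^\alpha$. Your explicit length argument forcing $t''=t$ (via $|r_\epsilon(t)|=t$ for transitional runs) patches a timestamp-matching subtlety that the paper's proof silently glosses over, which is a small improvement rather than a divergence.
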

\begin{proof}
	Since $P^\alpha \ne \varnothing$,  $\overline{L^\alpha} \ne \varnothing$ as well by (\ref{eq:saf_liv_prep}).

	Take any finite prefix $r(t)$ of a run $r \in \systrans$ for some timestamp $t \in \bbbt$.
	If $r(t)$ has an extension in $P^\alpha$, then there exists a run $r' \in P^\alpha$, s.t. $r(t) = r'(t)$.
	Since by Def.~\ref{def:saf_liv_prep} $P^\alpha \subseteq \overline{L^\alpha}$, $r' \in \overline{L^\alpha}$ as well.
	If $r(t)$ has no extension in $P^\alpha$, then by Def.~\ref{def:saf_liv_prep} $r \in L'^\alpha$, thus $r \in \overline{L^\alpha}$.
\end{proof}

\begin{definition} \label{def:trace_saf}
	The smallest trace safety property containing $P \subseteq \systrans$, for $P \ne \varnothing$, is the \textbf{prefix and limit closure} of $P$, formally
	\begin{align*}
			S'(P) \ce \{h \in \prefset \mid &(\exists r \in P)(\exists t \in \bbbt)\ r(t) = h\} \sqcup \\
				&\{r \in \systrans \mid (\forall t \in \bbbt)(\exists r' \in P)\ r(t) = r'(t) \}.
	\end{align*}
	The set of all trace safety properties is denoted by $\tracesaf$.
\end{definition}

\begin{lemma} \label{lem:P_is_L_cap_S}
	$P^\alpha = \overline{L^\alpha} \cap S'(P^\alpha)$, where $S'(P^\alpha)$ is the prefix and limit closure of $P^\alpha$ (see Def.~\ref{def:trace_saf}).
\end{lemma}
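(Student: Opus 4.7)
I will establish the set equality by proving the two inclusions separately, unfolding only the definitions of $L'^\alpha$, $\overline{L^\alpha}$ (Def.~\ref{def:saf_liv_prep}), and $S'(P^\alpha)$ (Def.~\ref{def:trace_saf}).

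For the inclusion $P^\alpha \subseteq \overline{L^\alpha} \cap S'(P^\alpha)$, the first conjunct $P^\alpha \subseteq \overline{L^\alpha}$ is immediate from $\overline{L^\alpha} \ce P^\alpha \cup L'^\alpha$. To get $P^\alpha \subseteq S'(P^\alpha)$, take any $r \in P^\alpha \subseteq \systrans$; then $r$ falls into the second disjunct of $S'(P^\alpha)$, since for every $t \in \bbbt$ one may witness the existential by picking $r' \ce r$, so that $r(t) = r'(t)$ trivially holds.

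For the reverse inclusion $\overline{L^\alpha} \cap S'(P^\alpha) \subseteq P^\alpha$, the key observation is a sort mismatch: $\overline{L^\alpha} \subseteq \systrans$ consists of full runs, whereas the first disjunct of $S'(P^\alpha)$ is contained in $\prefset \subseteq \globalstates$ and thus is disjoint from $\systrans$. So any $r \in \overline{L^\alpha} \cap S'(P^\alpha)$ must lie in the second disjunct of $S'(P^\alpha)$, i.e., satisfy $(\forall t \in \bbbt)(\exists r' \in P^\alpha)\ r(t) = r'(t)$. Since $r \in \overline{L^\alpha} = P^\alpha \cup L'^\alpha$, it remains to exclude $r \in L'^\alpha$. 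Assuming $r \in L'^\alpha$ would produce some $t \in \bbbt$ such that $r(t) \ne r''(t')$ for every $r'' \in P^\alpha$ and every $t' \in \bbbt$, which directly contradicts the limit-closure witness obtained above (instantiated at this $t$, with $t' \ce t$). Hence $r \in P^\alpha$, completing the inclusion.

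There is no genuine obstacle here: the argument is essentially a type-check plus one clean contradiction. The only care required is to note at the outset that intersecting with $\overline{L^\alpha} \subseteq \systrans$ collapses $S'(P^\alpha)$ to its limit-closure disjunct, so that the conditions defining $L'^\alpha$ and the second disjunct of $S'(P^\alpha)$ become directly negations of one another at the same timestamp.
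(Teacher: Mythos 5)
Your proof is correct and follows essentially the same route as the paper's: the forward inclusion is immediate from the definitions, and the reverse inclusion derives a contradiction between membership in $L'^\alpha$ and the limit-closure condition $(\forall t)(\exists r' \in P^\alpha)\, r(t) = r'(t)$ extracted from $S'(P^\alpha)$. Your observation that a full run in $S'(P^\alpha)$ must lie in the second disjunct (by the sort distinction between runs and prefixes) is a slightly more direct way of obtaining that condition than the paper's appeal to prefix- and limit-closure, but the core argument is identical.
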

\begin{proof}
	Since $P^\alpha \subseteq S'(P^\alpha)$ and $P^\alpha \subseteq \overline{L^\alpha}$, it follows that $P^\alpha \subseteq \overline{L^\alpha} \cap S'(P^\alpha)$.
	Hence it remains to show that $\overline{L^\alpha} \cap S'(P^\alpha) \subseteq P^\alpha$.
	Assume by contradiction that there exists a run $r \in \overline{L^\alpha} \cap S'(P^\alpha)$, but $r \notin P^\alpha$, hence $r \in \overline{L^\alpha}$---specifically $r \in L'^\alpha$---and $r \in S'(P^\alpha)$.
	Since $r \in S'(P^\alpha)$ (by prefix closure of $S'(P^\alpha)$) for all $t' \in \bbbt$, $r(t') \in S'(P^\alpha)$ as well.
	This implies (by limit closure of $S'(P^\alpha)$) that for all $t \in \bbbt$ there must exist a run $r' \in P^\alpha$ such that $r(t) = r'(t)$. 
	This however contradicts that $r \in L'^\alpha$.
\end{proof}

\begin{definition} \label{def:bij_constr_tr_to_op}
	A construction $F'$ of an operational safety property from a trace safety property $S' \in \tracesaf$ is 
$		F(S')(h) \ce \{\betaag{}{r}{t} \mid r \in S' \ \wedge \ t \in \bbbt \ \wedge \ h = r(t)\}$.
\end{definition}

\begin{lemma} \label{lem:F_mapping}
	$F'(S') \in \opsaf$ for any $S' \in \tracesaf$.
\end{lemma}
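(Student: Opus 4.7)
The plan is to verify that $F'(S')$ satisfies the two operational safety property attributes from Definition~\ref{def:safety_property}, leveraging the prefix and limit closure of the trace safety property~$S'$. Throughout, I will use the fact that any $S' \in \tracesaf$ contains at least one transitional run (every $r \in P$ trivially satisfies the limit-closure clause in Definition~\ref{def:trace_saf}), so that both the run set and prefix set of $S'$ are non-empty.

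For Attribute~\ref{item:op_saf_attr_1}, I would pick any run $r \in S'$, which exists by the preceding remark, and look at the initial state $h = r(0) \in \prefset$, which has $h_\epsilon = []$. Since $r$ is transitional, the tuple $X \ce \bigl(\betae{r}{0}, \betaag{1}{r}{0}, \dots, \betaag{n}{r}{0}\bigr)$ witnesses the transition $r(0) \to r(1)$ and so by construction belongs to $F'(S')(h)$, establishing $F'(S')(h) \ne \varnothing$.

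For Attribute~\ref{item:op_saf_attr_2}, fix $h \in \prefset$ with $h_\epsilon \ne []$; I need the biconditional between safe reachability and safe extendability. For the $(\Leftarrow)$ direction, suppose $h = \update{h'}{X}$ for some $h' \in \prefset$ and $X \in F'(S')(h')$. Unfolding Definition~\ref{def:bij_constr_tr_to_op}, there is a run $r \in S'$ and a time $t \in \bbbt$ with $h' = r(t)$ and $X = \bigl(\betae{r}{t}, \dots, \betaag{n}{r}{t}\bigr)$, hence $h = r(t{+}1)$. Then the next-round tuple $\bigl(\betae{r}{t+1}, \dots, \betaag{n}{r}{t+1}\bigr)$ lies in $F'(S')(h)$, so $F'(S')(h) \ne \varnothing$. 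For the $(\Rightarrow)$ direction, if $F'(S')(h) \ne \varnothing$ then $h = r(t)$ for some $r \in S'$ and some $t \in \bbbt$; because $h_\epsilon \ne []$, necessarily $t \ge 1$. Setting $h' \ce r(t{-}1) \in \prefset$ and $X \ce \bigl(\betae{r}{t-1}, \dots, \betaag{n}{r}{t-1}\bigr)$, we get $X \in F'(S')(h')$ and $\update{h'}{X} = r(t) = h$, as required.

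The only subtle point is the bookkeeping needed to ensure $h', h \in \prefset$, but this is immediate once one recalls that $\prefset$ is defined as the collection of all finite truncations of transitional runs and that the runs in $S'$ are, by Definition~\ref{def:trace_saf}, transitional. Thus no genuine obstacle arises: all the content is in unwinding the definition of $F'$ and matching it to the shape of the update relation, with the existence of a run in $S'$ (guaranteed by $P \ne \varnothing$) providing the one non-trivial ingredient needed for Attribute~\ref{item:op_saf_attr_1}.
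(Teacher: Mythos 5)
Your proof is correct and follows essentially the same route as the paper's: both verify the two operational safety property attributes by combining $P \ne \varnothing$ with prefix closure of $S'$ (for Attribute~1) and by unfolding Definition~\ref{def:bij_constr_tr_to_op} against transitionality and the update function (for both directions of Attribute~2); the paper merely phrases each step as a proof by contradiction while you argue directly.
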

\begin{proof}
	Suppose by contradiction there exists some $S'^\alpha \in \tracesaf$ s.t. $F'(S'^\alpha) = S^\alpha$, where $S^\alpha$ violates the first operational safety property attribute (\ref{item:op_saf_attr_1}).
	This implies
$		(\forall h \in \prefset)\ h_\epsilon \ne [] \ \vee \ S^\alpha(h) = \varnothing$.
	Since by Def.~\ref{def:trace_saf} $P^\alpha \ne \varnothing$, we get that there has to exist a run $r \in S'^\alpha$.
	Further, by prefix closure of $S'^\alpha$, we have
$		(\forall t \in \bbbt)\ r(t) \in S'^\alpha$,
	from which by universal instantiation we get that $r(0) \in S'^\alpha$.
	Since $S'^\alpha \subseteq \systrans \sqcup \prefset$, $r$ is transitional, hence $r_\epsilon(0) = []$, from which by our assumption $S^\alpha(r(0)) = \varnothing$ follows.
	However, by Def.~\ref{def:bij_constr_tr_to_op} of construction $F'$, it follows that $\betaag{}{r}{0} \in S^\alpha(r(0))$, thus $S^\alpha(r(0)) \ne \varnothing$.

	Next, suppose by contradiction there exists some $S'^\alpha \in \tracesaf$ s.t. $F'(S'^\alpha) = S^\alpha$, where $S^\alpha$ violates the second operational safety property attribute (\ref{item:op_saf_attr_2}).
	This implies that there exists some $h \in \prefset$ s.t. $h_\epsilon \ne []$ and
	\begin{gather}
	\label{eq:lem_F_mapping_attr2_contra1} 
		(((\exists h' \in \prefset)(\exists X \in S^\alpha(h')) h = \update{h'}{X})  \wedge S^\alpha(h) = \varnothing)  \vee \\
		 \label{eq:lem_F_mapping_attr2_contra2}
		(((\forall h'' \in \prefset)(\forall X' \in S^\alpha(h'')) h \ne \update{h''}{X'})  \wedge  S^\alpha(h) \ne \varnothing).
	\end{gather}
	Suppose (\ref{eq:lem_F_mapping_attr2_contra1}) is true.
	This implies that there exists some $h' \in \prefset$ and some $X \in S^\alpha(h')$ such that $h = \update{h'}{X}$.
	By Def.~\ref{def:bij_constr_tr_to_op} of $F'$ there exists a run $r \in S'^\alpha$ and a timestamp $t \in \bbbt$ s.t. 
$		r(t) = h'$ and  
$X = \betaag{}{r}{t}$.
	By transitionality of $r$ and Def.~\ref{def:state-update} of $\update{}{}$
$		r(t+1) = h$.
	Again by Def.~\ref{def:bij_constr_tr_to_op}, we have
$		\betaag{}{r}{t+1} \in S^\alpha(h)$,
	hence $S^\alpha(h) \ne \varnothing$ and we conclude that (\ref{eq:lem_F_mapping_attr2_contra1}) is false.

	Suppose (\ref{eq:lem_F_mapping_attr2_contra2}) is true.
	This implies by Def.~\ref{def:bij_constr_tr_to_op} that there exists a run $r \in S'^\alpha$ and timestamp $t \in \bbbt \setminus \{0\}$, where $h = r(t)$, since $r$ is transitional and $h_\epsilon \ne []$.
	Further we get that
$		\betaag{}{r}{t-1} \in S^\alpha(r(t-1))$.
	Thus by Def.~\ref{def:state-update} of $\update{}{}$, we have
$		r(t) = \update{r(t-1)}{\betaag{}{r}{t-1}}$
	and we conclude that (\ref{eq:lem_F_mapping_attr2_contra2}) is false as well.\looseness=-1
\end{proof}

\begin{definition} \label{def:bij_map_tr_to_op}
	We define
$		F \colon \tracesaf \mapsto \opsaf$,
	where for any $S' \in \tracesaf$ we have
$		F(S') \ce F'(S')$,
	for $F'$ from Def.~\ref{def:bij_constr_tr_to_op}, which is indeed a mapping from $\tracesaf$ to $\opsaf$ by Lemma \ref{lem:F_mapping}.
\end{definition}

\begin{lemma} \label{lem:f_is_injective}
	$F$ from Def.~\ref{def:bij_map_tr_to_op} is injective.
\end{lemma}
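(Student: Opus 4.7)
The plan is to recover each trace safety property $S'\in\tracesaf$ from its image $F(S')$, so that $F(S'_1)=F(S'_2)$ immediately forces $S'_1=S'_2$. Trace safety properties in this paper are tacitly the prefix-and-limit closures $S'(P)$ of non-empty sets $P\subseteq\systrans$ of infinite transitional runs (Def.~\ref{def:trace_saf}, a convention also relied upon in the proof of Lemma~\ref{lem:F_mapping} when it invokes $P^\alpha \ne \varnothing$), so such an $S'$ is already determined by its infinite-run part $S'\cap\systrans$ via prefix and limit closure. It therefore suffices to show that $F(S')$ determines $S'\cap\systrans$.

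The central step I would carry out is the following characterization, which I expect to be the heart of the proof:
\begin{equation*}
r\in S'\cap\systrans \quad\text{iff}\quad r\in\systrans \text{ and } \betaag{}{r}{t}\in F(S')(r(t)) \text{ for every } t\in\bbbt.
\end{equation*}
The forward direction is immediate from Def.~\ref{def:bij_constr_tr_to_op} by taking $r$ itself as the witness at each $t$. For the backward direction, each membership $\betaag{}{r}{t}\in F(S')(r(t))$ unfolds to some $r^{(t)}\in S'\cap\systrans$ with $r^{(t)}(t')=r(t)$ for some $t'$; prefix-closure of $S'$ (Def.~\ref{def:prefset_saf_prop}) then delivers $r(t)\in S'\cap\prefset$, and since this holds for every $t\in\bbbt$, limit-closure of $S'$ yields $r\in S'$. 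Once the equivalence is in place, $F(S'_1)=F(S'_2)$ gives $S'_1\cap\systrans=S'_2\cap\systrans$ termwise, and the opening paragraph concludes $S'_1=S'_2$.

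The main obstacle I expect is the tacit assumption that $\tracesaf$ is restricted to the prefix-and-limit closures of non-empty $P\subseteq\systrans$, equivalently that every finite prefix in $S'$ extends to an infinite run already lying in $S'$. Without this convention a degenerate trace safety property consisting only of finite prefixes would make $F(S')$ identically empty and collapse several distinct such $S'$'s to a common image, breaking injectivity outright. I would therefore make this convention explicit at the start of the proof; thereafter the characterization above, together with the two closure properties from Def.~\ref{def:prefset_saf_prop}, closes the argument without further calculation.
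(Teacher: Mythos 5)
Your proof is correct, but it takes a genuinely different route from the paper's. The paper argues by contradiction: assuming $S'^\alpha \ne S'^\beta$, it first uses limit closure to reduce to the case where the two properties differ on some finite prefix $h$, and then checks that the images already disagree at that single argument, since $F(S'^\beta)(h)=\varnothing$ (otherwise prefix closure would force $h\in S'^\beta$) while $F(S'^\alpha)(h)\ne\varnothing$ (because $h$ is a prefix of a run in the nonempty generating set $P^\alpha$); only the emptiness pattern of the image is used. You instead reconstruct $S'$ outright from $F(S')$: reducing $S'$ to its run part $S'\cap\systrans$ and then characterizing that run part as $\{r\in\systrans \mid (\forall t\in\bbbt)\ \betaag{}{r}{t}\in F(S')(r(t))\}$. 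This amounts to exhibiting a left inverse of $F$, and the characterization is essentially the same construction the paper deploys later for surjectivity (Def.~\ref{def:f_surj_construction}, Lemma~\ref{lem:f_surf_constr_limit_is_limit_cl_thing}), so your route unifies the two halves of Lemma~\ref{lem:f_is_bijective} at the cost of proving somewhat more than injectivity strictly needs. Both arguments rest on the same tacit convention that every $S'\in\tracesaf$ is the prefix-and-limit closure of a nonempty $P\subseteq\systrans$, so that every finite prefix in $S'$ extends to a run in $S'\cap\systrans$; you make this explicit at the outset, whereas the paper invokes it in passing when it appeals to the nonempty $P^\alpha$ behind $S'^\alpha$, and your observation that injectivity genuinely fails without this convention (degenerate prefix-only safety properties would all map to the empty-valued function) is accurate.
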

\begin{proof}
	Suppose by contradiction that the opposite is true: there are $S'^\alpha, S'^\beta \in \tracesaf$ s.t. $S'^\alpha \ne S'^\beta$, but $F(S'^\alpha) = F(S'^\beta)$.
	Since $S'^\alpha \ne S'^\beta$, either
	\begin{compactenum} [1]
	\item w.l.o.g. there exists some history $h \in S'^\alpha$ s.t. $h \notin S'^\beta$ or \label{item:f_injective}
	\item w.l.o.g. there exists some run $r \in S'^\alpha$ s.t. $r \notin S'^\beta$. We show that this implies~\ref{item:f_injective}.
		Suppose by contradiction that there does not exist some $h \in S'^\alpha$ s.t. $h \notin S'^\beta$, meaning $(\forall h \in S'^\alpha)\ h \in S'^\beta$.
		By limit closure of $S'^\beta$ however it follows that $r \in S'^\beta$, hence there has to exist a history $h \in S'^\alpha$ such that $h \notin S'^\beta$.\looseness=-1
	\end{compactenum}
	Therefore, we can safely assume \ref{item:f_injective}, i.e., w.l.o.g. that there exists some $h \in S'^\alpha$ s.t. $h \notin S'^\beta$.
	By Def.~\ref{def:bij_map_tr_to_op} of $F$, we get that $F(S'^\beta)(h) = \varnothing$, as otherwise there would exist a run $r' \in S'^\beta$ and time $t' \in \bbbt$ s.t. $r'(t') = h$, from which by prefix closure of $S'^\beta$ it would follow that $h \in S'^\beta$.
	Since $S'^\alpha$ is the prefix closure of some non-empty set $P^\alpha \subseteq \systrans$ by Def.~\ref{def:trace_saf}, we get that there exists some run $r \in S'^\alpha$ and time $t \in \bbbt$ s.t. $r(t) = h$, additionally by Def.~\ref{def:bij_map_tr_to_op} of $F$, $\betaag{}{r}{t} \in F(S'^\alpha)(h)$.
	Therefore $F(S'^\alpha) \ne F(S'^\beta)$ and we are done.
\end{proof}

\begin{definition} \label{def:f_surj_construction}
	For some arbitrary $S \in \opsaf$, we define
	\begin{align}
		\widetilde{S'_0}^{S} &\ce \systrans \label{eq:lem_f_surjective_constr_init} \\
		\widetilde{S'_t}^{S} &\ce \widetilde{S'_{t-1}}^{S} \setminus \{r \in \systrans \mid \betaag{}{r}{t-1} \notin S(r(t-1))\} \label{eq:lem_f_surjective_constr_induc} \\
		\widetilde{S'_{\infty}}^{S} &\ce \lim\limits_{t' \rightarrow \infty} \widetilde{S'_{t'}}^{S} \label{eq:lem_f_surjective_constr_limit} \\
		\widetilde{S'}^{S} &\ce \widetilde{S'_{\infty}}^{S} \ \sqcup \ \{h \in \prefset \mid (\exists r \in \widetilde{S'_{\infty}}^{S})(\exists t \in \bbbt)\ h = r(t)\}. \label{eq:lem_f_surjective_constr_prefix_cl}
	\end{align}
	Note that the limit in (\ref{eq:lem_f_surjective_constr_limit}) exists, as by (\ref{eq:lem_f_surjective_constr_induc}) the set $\widetilde{S'_t}^{S}$ is non-increasing in $t$.\looseness=-1
\end{definition}

\begin{lemma} \label{lem:f_surj_construction_is_limit_closure_thing}
	For $\widetilde{S'_m}^{\widetilde{S}}$ (for $m \in \bbbt \setminus \{0\}$ and $\widetilde{S} \in \opsaf$) from Def.~\ref{def:f_surj_construction}, it holds that \looseness=-1
$		\widetilde{S'_m}^{\widetilde{S}} = \{r \in \systrans \mid (\forall t < m)\ \betaag{}{r}{t} \in \widetilde{S}(r(t))\}$.
\end{lemma}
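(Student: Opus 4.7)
The plan is to prove the equality by straightforward induction on $m$, directly unfolding the recursive definition of $\widetilde{S'_m}^{\widetilde{S}}$ given in Def.~\ref{def:f_surj_construction}, clauses~(\ref{eq:lem_f_surjective_constr_init}) and~(\ref{eq:lem_f_surjective_constr_induc}).

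For the base case $m=1$, I would apply~(\ref{eq:lem_f_surjective_constr_induc}) once to obtain
\[
\widetilde{S'_1}^{\widetilde{S}} = \systrans \setminus \{r \in \systrans \mid \betaag{}{r}{0} \notin \widetilde{S}(r(0))\} = \{r \in \systrans \mid \betaag{}{r}{0} \in \widetilde{S}(r(0))\},
\]
which matches the right-hand side, since the universal quantifier $\forall t < 1$ ranges over the single value $t=0$.

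For the inductive step, assuming the claim holds for $m-1 \ge 1$, I would apply~(\ref{eq:lem_f_surjective_constr_induc}) once more to get
\[
\widetilde{S'_m}^{\widetilde{S}} = \widetilde{S'_{m-1}}^{\widetilde{S}} \setminus \{r \in \systrans \mid \betaag{}{r}{m-1} \notin \widetilde{S}(r(m-1))\},
\]
then substitute the inductive hypothesis for $\widetilde{S'_{m-1}}^{\widetilde{S}}$ and combine the two conjunctive conditions ``$\forall t < m-1$'' and ``$t = m-1$'' into ``$\forall t < m$''. This is a purely set-theoretic rearrangement.

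I expect no real obstacle here: the statement essentially just unfolds the recursion, and neither the operational safety property attributes of $\widetilde{S}$ nor any structural property of transitional runs beyond membership in $\systrans$ is needed. The only minor care point is book-keeping at the boundary $m=1$ to ensure that the empty-predecessor case of the induction (no constraint coming from $\widetilde{S'_0}^{\widetilde{S}} = \systrans$) correctly corresponds to the quantifier $\forall t < 1$ being non-vacuous only at $t=0$.
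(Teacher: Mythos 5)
Your proposal is correct and follows essentially the same route as the paper: induction on $m$, with the identical base case $\widetilde{S'_1}^{\widetilde{S}} = \{r \in \systrans \mid \betaag{}{r}{0} \in \widetilde{S}(r(0))\}$ and an inductive step that unfolds the recursion~(\ref{eq:lem_f_surjective_constr_induc}) once and merges the new constraint at $t=m-1$ into the quantifier $\forall t<m$. The only difference is presentational: the paper carries out the inductive step as a proof by contradiction with a case analysis on which direction of the set equality fails, whereas your direct set-theoretic computation establishes the same equality more cleanly.
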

\begin{proof}
	By induction:
	\\
	\textbf{Induction Hypothesis:}
	\begin{equation} \label{eq:lem_f_surf_constr_limit_cl_ind_hyp}
		\widetilde{S'_m}^{\widetilde{S}} = \{r \in \systrans \mid (\forall t < m)\ \betaag{}{r}{t} \in \widetilde{S}(r(t))\}.
	\end{equation}
	\\
	\textbf{Base Case:} For $m = 1$ by Def.~\ref{def:f_surj_construction} it follows that
	\begin{equation*}
		\widetilde{S'_1}^{\widetilde{S}} = \systrans \setminus \{r \in \systrans \mid \betaag{}{r}{0} \notin \widetilde{S}(r(0))\} = \{r \in \systrans \mid \betaag{}{r}{0} \in \widetilde{S}(r(0))\}.
	\end{equation*}
	
	\noindent
	\textbf{Induction Step:} Suppose the induction hypothesis (\ref{eq:lem_f_surf_constr_limit_cl_ind_hyp}) holds for $m$, but by contradiction does not hold for $m+1$.
	There are two cases:
	\begin{enumerate}
	\item There exists a run $r' \in \widetilde{S'_{m+1}}^{\widetilde{S}}$ s.t. $r' \notin \{r \in \systrans \mid (\forall t < m+1)\ \betaag{}{r}{t} \in \widetilde{S}(r(t))\}$.
		This implies that there exists some timestamp $t' < m+1$ s.t.
$			\betaag{}{r'}{t'} \notin \widetilde{S}(r'(t'))$.
		We distinguish two cases:
		\begin{enumerate}
		\item $t' = m$: Then
	$			r' \in \{r \in \systrans \mid \betaag{}{r}{m} \notin \widetilde{S}(r(m))\}$.
			Hence by Def.~\ref{def:f_surj_construction} of $\widetilde{S'_{m+1}}^{\widetilde{S}}$, $r'$ would have been removed.
		\item $t' < m$: This directly contradicts the induction hypothesis (\ref{eq:lem_f_surf_constr_limit_cl_ind_hyp}), as $r' \notin \widetilde{S'_{t'+1}}^{\widetilde{S}}$ and $\widetilde{S'_{m+1}}^{\widetilde{S}} \subseteq \widetilde{S'_{t'+1}}^{\widetilde{S}}$.
		\end{enumerate}
	\item There exists a run $r' \in \{r \in \systrans \mid (\forall t < m+1)\ \betaag{}{r}{t} \in \widetilde{S}(r(t))\}$ s.t. $r' \notin \widetilde{S'_{m+1}}^{\widetilde{S}}$.
		We distinguish two cases regarding at which step $r$ has been removed:
		\begin{enumerate}
		\item $r' \in \widetilde{S'_m}^{\widetilde{S}}$: Then
	$			r' \in \{r \in \systrans \mid \betaag{}{r}{m} \notin \widetilde{S}(r(m))\}$.
			This implies that $\betaag{}{r'}{m} \notin \widetilde{S}(r'(m))$ contradicting 
			$	r' \in \{r \in \systrans \mid (\forall t < m+1)\ \betaag{}{r}{t} \in \widetilde{S}(r(t))\}$.
		\item $r' \notin \widetilde{S'_m}^{\widetilde{S}}$: This directly contradicts the induction hypothesis (\ref{eq:lem_f_surf_constr_limit_cl_ind_hyp}), thus concluding the induction step.
		\end{enumerate}
	\end{enumerate}
\end{proof}

\begin{lemma} \label{lem:f_surf_constr_limit_is_limit_cl_thing}
	For $\widetilde{S'_{\infty}}^{S}$ from Def.~\ref{def:f_surj_construction} it holds that
	\begin{equation*} 
		\widetilde{S'_{\infty}}^{S} = \{r \in \systrans \mid (\forall t \in \bbbt)\ \betaag{}{r}{t} \in S(r(t))\}
	\end{equation*}
\end{lemma}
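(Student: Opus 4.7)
The plan is to unfold the limit in~\eqref{eq:lem_f_surjective_constr_limit} as the intersection of the decreasing sequence $\widetilde{S'_m}^{S}$, then invoke Lemma~\ref{lem:f_surj_construction_is_limit_closure_thing} to translate each finite-stage set into an explicit ``first $m$ rounds are safe'' description, and finally perform a standard double inclusion across the quantifier $\forall t < m$.

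First, I would observe that by~\eqref{eq:lem_f_surjective_constr_induc} the family $\{\widetilde{S'_m}^{S}\}_{m \ge 1}$ is non-increasing in~$m$ (each step only removes runs), so the pointwise limit in~\eqref{eq:lem_f_surjective_constr_limit} coincides with the intersection:
\begin{equation*}
    \widetilde{S'_{\infty}}^{S} \;=\; \bigcap_{m \ge 1} \widetilde{S'_m}^{S}.
\end{equation*}
Applying Lemma~\ref{lem:f_surj_construction_is_limit_closure_thing} to each term yields
\begin{equation*}
    \widetilde{S'_{\infty}}^{S} \;=\; \bigcap_{m \ge 1} \bigl\{ r \in \systrans \mid (\forall t < m)\ \betaag{}{r}{t} \in S(r(t)) \bigr\}.
\end{equation*}

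Next, I would prove the target equality by double inclusion. For ($\supseteq$), any $r \in \systrans$ satisfying $\betaag{}{r}{t} \in S(r(t))$ for all $t \in \bbbt$ trivially satisfies the condition for every bounded range $t < m$, hence lies in every $\widetilde{S'_m}^{S}$ and therefore in the intersection. For ($\subseteq$), suppose $r$ belongs to $\widetilde{S'_m}^{S}$ for every $m \ge 1$ and fix an arbitrary $t \in \bbbt$; instantiating $m := t+1$ gives $\betaag{}{r}{t} \in S(r(t))$, as required.

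The argument is almost entirely bookkeeping; the only subtle point—and the main obstacle I foresee—is justifying the equality of the pointwise limit with the intersection, which requires explicitly noting the monotonicity of the construction~\eqref{eq:lem_f_surjective_constr_induc}. Once that is in place, the quantifier swap $\bigl(\forall m\bigr)\bigl(\forall t < m\bigr) \equiv \bigl(\forall t \in \bbbt\bigr)$ finishes the proof without further effort.
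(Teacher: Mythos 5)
Your proof is correct and follows exactly the route the paper takes: the paper's own proof is the one-line remark that the claim ``follows from Lemma~\ref{lem:f_surj_construction_is_limit_closure_thing} and Def.~\ref{def:f_surj_construction},'' and your argument simply makes explicit the two steps that remark compresses (the non-increasing sequence's limit is the intersection, and the quantifier exchange $(\forall m)(\forall t<m)\equiv(\forall t)$). Nothing is missing.
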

\begin{proof}
	Follows from Lemma \ref{lem:f_surj_construction_is_limit_closure_thing} and Def.~\ref{def:f_surj_construction}.
\end{proof}

\begin{lemma} \label{lem:f_surf_constr_is_trace_saf_prop}
	For $\widetilde{S'}^{S}$, from Def.~\ref{def:f_surj_construction}, where $S \in \opsaf$, it holds that $\widetilde{S'}^{S} \in \tracesaf$, i.e. $\widetilde{S'}^{S}$ is a trace safety property.
\end{lemma}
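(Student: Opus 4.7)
The plan is to prove $\widetilde{S'}^{S} \in \tracesaf$ by exhibiting a non-empty $P \subseteq \systrans$ whose prefix-and-limit closure, in the sense of Def.~\ref{def:trace_saf}, coincides with $\widetilde{S'}^{S}$. The natural candidate is $P \ce \widetilde{S'_\infty}^{S}$, and the argument would split into a non-emptiness stage and an equality-of-components stage.

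For non-emptiness, I would construct a witness run $r \in \widetilde{S'_\infty}^{S}$ inductively. Attribute~\ref{item:op_saf_attr_1} of the operational safety property $S$ supplies an initial state $h_0 \in \prefset$ with $h_{0,\epsilon} = []$ and $S(h_0) \neq \varnothing$. Given $h_t \in \prefset$ with $S(h_t) \neq \varnothing$, I pick any $X_t \in S(h_t)$ and set $h_{t+1} \ce \update{h_t}{X_t}$; since $h_{t+1}$ is then safely reachable from $h_t$, attribute~\ref{item:op_saf_attr_2} guarantees $S(h_{t+1}) \neq \varnothing$, so the induction never stalls. Setting $r(t) \ce h_t$ yields a transitional run satisfying $\betaag{}{r}{t} \in S(r(t))$ for every $t$, which by Lemma~\ref{lem:f_surf_constr_limit_is_limit_cl_thing} places $r$ in $\widetilde{S'_\infty}^{S}$.

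For the equality stage, I would compare the two components of $\widetilde{S'}^{S}$ supplied by Def.~\ref{def:f_surj_construction} with those of $S'(P)$ from Def.~\ref{def:trace_saf}. The prefix-closure part of $\widetilde{S'}^{S}$ is literally $\{h \in \prefset \mid (\exists r \in P)(\exists t \in \bbbt)\ h = r(t)\}$, which matches the prefix part of $S'(P)$ verbatim. For the infinite-run part, I must show $\widetilde{S'_\infty}^{S} = \{r \in \systrans \mid (\forall t)(\exists r' \in P)\ r(t) = r'(t)\}$: the inclusion $\subseteq$ is immediate by choosing $r' \ce r$; conversely, a run $r$ matched prefix-by-prefix by elements of $P$ inherits at each step $t$ the identity $\betaag{}{r}{t} = \betaag{}{r'}{t} \in S(r'(t)) = S(r(t))$ from a matching $r'$ aligned up to time $t+1$, and a second invocation of Lemma~\ref{lem:f_surf_constr_limit_is_limit_cl_thing} then places $r$ in $\widetilde{S'_\infty}^{S}$.

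The main obstacle is the non-emptiness stage: although attribute~\ref{item:op_saf_attr_2} readily powers the inductive step, one must verify that the sequence $(h_t)_{t \in \bbbt}$ produced by iterated applications of $\update{}{}$ is genuinely the trace of a run in $\systrans$, and in particular that each new state $h_{t+1}$ lies inside $\prefset$ so attribute~\ref{item:op_saf_attr_2} remains applicable at the next step. This is essentially the limit-closure direction hidden inside the operational formulation, and verifying it rigorously will likely require unfolding the definition of $\prefset$ against the transition semantics.
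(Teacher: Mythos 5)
Your proposal is correct and follows essentially the same route as the paper, whose entire proof consists of the single observation that, by clauses~(\ref{eq:lem_f_surjective_constr_limit}) and~(\ref{eq:lem_f_surjective_constr_prefix_cl}) of Def.~\ref{def:f_surj_construction}, $\widetilde{S'}^{S}$ is the prefix and limit closure of $\widetilde{S'_{\infty}}^{S}$. The two points you work out explicitly --- non-emptiness of $\widetilde{S'_{\infty}}^{S}$ (needed since Def.~\ref{def:trace_saf} requires $P \neq \varnothing$), which is essentially the content of Lemma~\ref{lem:saf_prefix_impl_saf_run}, and limit-closedness of $\widetilde{S'_{\infty}}^{S}$ via injectivity of $\update{}{}$ together with Lemma~\ref{lem:f_surf_constr_limit_is_limit_cl_thing} --- are left implicit in the paper's one-sentence argument, as is the $\prefset$-membership issue you rightly flag at the end, which the paper's own Lemma~\ref{lem:saf_prefix_impl_saf_run} also glosses over.
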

\begin{proof}
	From Def.~\ref{def:f_surj_construction}, particularly (\ref{eq:lem_f_surjective_constr_limit}) and (\ref{eq:lem_f_surjective_constr_prefix_cl}), it follows that $\widetilde{S'}^{S}$ is the prefix and limit closure of $\widetilde{S'_{\infty}}^{S}$.
\end{proof}

\begin{lemma} \label{lem:update_injective}
	The state update function---$\update{}{}$---from Def.~\ref{def:state-update} is injective.
\end{lemma}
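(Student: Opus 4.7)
The plan is to show that given the output global state $\update{h}{X}$, one can uniquely reconstruct both the input global state $h$ and the event/action tuple $X=(X_\epsilon, X_1, \ldots, X_n)$. Injectivity then follows: if $\update{h}{X} = \update{h'}{X'}$, applying this reconstruction to both sides forces $h=h'$ and $X=X'$. The argument splits naturally into recovering the environment component and then recovering each agent's local state.

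For the environment component, recall from Def.~\ref{def:state-update} that $\updatee{h_\epsilon}{X} = (X_\epsilon \sqcup X_1 \sqcup \cdots \sqcup X_n) \colon h_\epsilon$, so the new environment history is strictly longer than $h_\epsilon$ and its most recent layer is the single set $\Lambda \ce X_\epsilon \sqcup X_1 \sqcup \cdots \sqcup X_n$. Since $\ghaps = \gevents \sqcup \gactions$ with $\gactions = \bigsqcup_{i \in \agents} \gtrueactions[i]$ pairwise disjoint, the disjoint union in $\Lambda$ can be inverted by intersection: $X_\epsilon = \Lambda \cap \gevents$ and $X_i = \Lambda \cap \gtrueactions[i]$ for each $i$. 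The prior environment history $h_\epsilon$ is then the tail of the new environment history. This recovers $X$ and $h_\epsilon$ uniquely.

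For each agent $i$, having already recovered $X_\epsilon$ (hence $X_{\epsilon_i} = X_\epsilon \cap \gevents[i]$) and $X_i$, one inspects which branch of Def.~\ref{def:state-update} produced $\updateag{i}{h_i}{X_i}{X_\epsilon}$. Crucially, the triggering condition for the ``no-op'' branch—$\sigma(X_{\epsilon_i}) = \varnothing$ together with $X_{\epsilon_i} \cap \sevents[i] \notin \{\{go(i)\},\{\sleep{i}\}\}$—depends only on $X_{\epsilon_i}$ and is therefore decidable from the already-recovered data. In the no-op branch, the new local state literally equals $h_i$, so $h_i$ is read off directly; in the other branch, the new local state is $[\sigma(X_{\epsilon_i} \sqcup X_i)] \colon h_i$, so $h_i$ is the tail of the new local state. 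Either way $h_i$ is uniquely determined.

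The only genuinely delicate point is the no-op branch, where the local history is unaffected and one might fear ambiguity about whether the round triggered an update. This is resolved by the observation above that the branch choice is a deterministic function of $X_{\epsilon_i}$, which is already known from the environment reconstruction; hence the branch taken is not a free parameter and cannot hide additional preimages. Putting the three pieces together gives a two-sided inverse (defined on the image of $\update{}{}$), proving injectivity. \qed
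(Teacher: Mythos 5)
Your proof is correct. It is essentially the inverse-construction formulation of the same argument the paper gives by contradiction: both hinge on the two facts that (i) the environment update always appends a single new layer $X_\epsilon \sqcup X_1 \sqcup \cdots \sqcup X_n$ from which the tuple $X$ is recoverable, and (ii) the branch taken in $\updateag{i}{h_i}{X_i}{X_\epsilon}$ is a deterministic function of $X_{\epsilon_i}$ alone, so once $X$ is pinned down the local tails are forced to agree. The paper instead assumes $(h^1,X^1)\ne(h^2,X^2)$ and runs a case analysis on which component differs, comparing suffix/prefix lengths of the concatenated histories to exhibit a discrepancy in the outputs. One point where your version is actually more careful: the paper's first case writes the appended environment layer as if the tuple $X$ itself were concatenated, silently using that distinct tuples yield distinct unions; you justify this explicitly via the pairwise disjointness of $\gevents$ and the sets $\gtrueactions[i]$, which lets you invert the union by intersection. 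Conversely, the paper spells out the length bookkeeping (e.g.\ $|h^1_\epsilon|\ne|h^2_\epsilon|$ forces outputs of different lengths) that your "the prior history is the tail" step subsumes. Both arguments are sound; yours reads as a cleaner packaging of the same content.
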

\begin{proof}
	Recall that according to Def.~\ref{def:state-update} $\update{}{}$ and its constituent parts are defined for $h \in \globalstates$, $i \in \agents$ and $X \in 2^{\gevents} \times 2^{\gtrueactions[1]} \times \ldots \times 2^{\gtrueactions[n]}$ as
	\begin{gather}
\adjustbox{width=.88\textwidth}{	$	\update{h}{X} \ce \bigl( \updatee{h_\epsilon}{X}, \updateag{1}{h_1}{X_1}{X_\epsilon}, \dots, \updateag{n}{h_n}{X_n}{X_\epsilon} \bigr)$} \label{eq:update_injective} \\
\adjustbox{width=.88\textwidth}{	$		\updateag{i}{h_i}{X_i}{X_\epsilon} \ce \label{eq:updateag_injective}
		\begin{cases}
			h_i & \text{if $\sigma(X_{\epsilon_i})=\varnothing$ and } \unaware{i}{X_\epsilon}  \\
			\bigl[\sigmaof{ X_{\epsilon_i}\sqcup X_i} \bigr] : h_i & \text{otherwise }
		\end{cases} $}\\
		\updatee{h_\epsilon}{X} \ce (X_{\epsilon} \sqcup X_1 \sqcup \ldots \sqcup X_n) \colon h_\epsilon. \label{updatee_injective}
	\end{gather}
	Suppose by contradiction that $\update{}{}$ is not injective, i.e., $\update{h^1}{X^1} = \update{h^2}{X^2}$ for some  $(h^1,X^1)\ne (h^2,X^2) \in \globalstates \times 2^{\gevents} \times 2^{\gtrueactions[1]} \times \ldots \times 2^{\gtrueactions[n]}$.
	We distinguish the following cases:
	\begin{compactenum}
		\item $X^1 \ne X^2$: By (\ref{updatee_injective}) $X^1 : h^1_\epsilon \ \ne \ X^2 : h^2_\epsilon$, as irrespective of $h^1$ and $h^2$ the two resulting histories have different suffixes (of size one) $X^1$ and $X^2$ (recall that by Defs.~\ref{def:ag_history} and~\ref{def:env_history} the environment and the agent histories are sequences of sets). \label{item:upate_inj_diff_X}
		\item $h^1_\epsilon \ne h^2_\epsilon$:
		If $X_1 \ne X_2$ it follows from case (\ref{item:upate_inj_diff_X}) that $X^1 : h^1_\epsilon \ne X^2 : h^2_\epsilon$. \\
		Else if $X_1 = X_2$: if further $|h^1_\epsilon| = |h^2_\epsilon|$ then the two resulting histories now have the same suffix $X^1$, however still different prefixes $h^1_\epsilon$ and $h^2_\epsilon$. \\
		Otherwise if w.l.o.g. $|h^1_\epsilon| > |h^2_\epsilon|$, then $|X^1 : h^1_\epsilon| > |X^1 : h^2_\epsilon|$ and we are done.
		\item $h^1_i \ne h^2_i$ (for some $i \in \agents$): if
		\begin{compactitem}
			\item $X^1 \ne X^2$: follows from case (\ref{item:upate_inj_diff_X})
			\item $X^1 = X^2$: either
$				\updateag{i}{h^1_i}{X^1_i}{X^1_\epsilon} = h^1_i$ 
or 
	$			\updateag{i}{h^2_i}{X^1_i}{X^1_\epsilon} = h^2_i$,
			since $h^1_i \ne h^2_i$ we conclude that the resulting (local) histories are different, or
			\begin{align*}
				\updateag{i}{h^1_i}{X^1_i}{X^1_\epsilon} &= \bigl[\sigmaof{ X^1_{\epsilon_i}\sqcup X^1_i} \bigr] : h^1_i \\
				\updateag{i}{h^2_i}{X^1_i}{X^1_\epsilon} &= \bigl[\sigmaof{ X^1_{\epsilon_i}\sqcup X^1_i} \bigr] : h^2_i.
			\end{align*}
			Suppose that $|h^1_i| = |h^2_i|$.
			It follows that the two resulting (local) histories have matching suffixes ($\Bigl[\sigmaof{ X^1_{\epsilon_i}\sqcup X^1_i} \Bigr]$), however different prefixes.
			If on the other hand w.l.o.g. $|h^1_i| > |h^2_i|$, it also follows that $\big|\bigl[\sigmaof{ X^1_{\epsilon_i}\sqcup X^1_i} \bigr] : h^1_i\big| > \big|\bigl[\sigmaof{ X^1_{\epsilon_i}\sqcup X^1_i} \bigr] : h^2_i\big|$, hence they cannot be the same and we are done.\looseness=-1
		\end{compactitem}
	\end{compactenum}
\end{proof}

\begin{lemma} \label{lem:op_saf_prop_once_empty_always_empty}
	Given an operational safety property $S \in \opsaf$, a transitional run $r \in \systrans$ and timestamps $t,t' \in \bbbt$, where $t' \ge t$, if
$		S(r(t)) = \varnothing$, then 
$S(r(t')) = \varnothing$.
\end{lemma}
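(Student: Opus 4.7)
I would argue by induction on $t' - t$, reducing everything to the one-step claim: if $S(r(t)) = \varnothing$ then $S(r(t+1)) = \varnothing$. The base case $t' = t$ is trivial. For the induction step, it clearly suffices to establish the one-step claim, because applying it repeatedly propagates emptiness from $t$ to any $t' \ge t$.

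To prove the one-step claim, I would argue by contraposition: assume $S(r(t+1)) \ne \varnothing$ and deduce $S(r(t)) \ne \varnothing$. Since $r$ is transitional, its environment history strictly grows, so the prefix $r(t+1)$ satisfies $r_\epsilon(t+1) \ne []$. Hence attribute~\ref{item:op_saf_attr_2} of Def.~\ref{def:safety_property} applies to $h = r(t+1)$: there exist some $h' \in \prefset$ and $X \in S(h')$ with
\[
    r(t+1) \;=\; \update{h'}{X}.
\]
At the same time, by transitionality of $r$ together with the definition of the transition relation~\eqref{eq:run_trans_env}--\eqref{eq:run_trans_ag}, we have
\[
    r(t+1) \;=\; \update{r(t)}{\betaag{}{r}{t}},
\]
where $\betaag{}{r}{t}$ abbreviates the tuple $\bigl(\betae{r}{t}, \betaag{1}{r}{t}, \dots, \betaag{n}{r}{t}\bigr)$.

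Now the crucial step is to identify $(h', X)$ with $(r(t), \betaag{}{r}{t})$, which is where Lemma~\ref{lem:update_injective} (injectivity of $\update{}{}$) enters: it forces $h' = r(t)$ and $X = \betaag{}{r}{t}$. Therefore $\betaag{}{r}{t} \in S(r(t))$, so $S(r(t)) \ne \varnothing$, completing the contrapositive and hence the induction step.

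The only subtle point — and the main obstacle to watch out for — is precisely the use of injectivity: without it, attribute~\ref{item:op_saf_attr_2} would only produce \emph{some} safe predecessor of $r(t+1)$ rather than identifying it with $r(t)$ on $r$'s own trajectory, and the argument would not close. Fortunately, this is exactly what Lemma~\ref{lem:update_injective} supplies, and the remaining details (transitionality of $r$, the observation that $r_\epsilon(t+1)$ is nonempty so that attribute~\ref{item:op_saf_attr_2} rather than~\ref{item:op_saf_attr_1} is the relevant one) are routine. \qed
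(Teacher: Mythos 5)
Your proof is correct and follows essentially the same route as the paper's: induction on the elapsed time, with the inductive step handled by combining transitionality of $r$, injectivity of $\update{}{}$ (Lemma~\ref{lem:update_injective}), and attribute~\ref{item:op_saf_attr_2} of Def.~\ref{def:safety_property}. The only cosmetic difference is that you phrase the step contrapositively, whereas the paper argues directly that $r(t'+1)$ cannot be safely reachable once $S(r(t'))=\varnothing$.
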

\begin{proof}
	By induction over $t' \ge t$. \\
	\textbf{Induction Hypothesis:} For $t' \ge t$ and $S \in \opsaf$ it holds that
if $		S(r(t)) = \varnothing$,
then 
$S(r(t')) = \varnothing$.
\\
	\textbf{Base Case} for $t' = t$: it trivially follows that $S(r(t')) = S(r(t)) = \varnothing$. \\
	\textbf{Induction Step} for $t' \rightarrow t'+1$: 
	Since the state update function is injective by Lemma \ref{lem:update_injective} and run $r$ is transitional, the only way to achieve the prefix $r(t'+1)$ via state update is by $\mathit{update}({r(t')},{\betaag{}{r}{t'}})$.
	However since by the induction hypothesis $S(r(t')) = \varnothing$, it follows by the second operational safety property attribute (\ref{item:op_saf_attr_2}) that $S(r(t'+1)) = \varnothing$ as well.
\end{proof}

\begin{lemma} \label{lem:saf_prefix_impl_saf_run}
	For any $h \in \prefset$ and operational safety property $S \in \opsaf$ it holds that $(\exists r \in \systrans)(r(|h|) = h ) \wedge ((\forall t \in \bbbt)\ S(r(t)) \ne \varnothing $ whenever 
$		S(h) \ne \varnothing$.
\end{lemma}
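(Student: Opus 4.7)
The plan is to construct the desired $r$ by a two-sided induction: a backward pass building a prefix from an initial state up to $h$, and a forward pass extending past $h$, both maintaining the invariants $r(t) \in \prefset$ and $S(r(t)) \ne \varnothing$ while ensuring every consecutive pair of states is linked by $\update{}{}$.

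For the backward pass, I would set $r(|h|) \ce h$ and proceed by reverse induction on $m$. Given $r(m) \in \prefset$ with $S(r(m)) \ne \varnothing$ and $m \ge 1$ (so $r(m)_\epsilon \ne []$, since every application of $\updatee{}{}$ strictly lengthens the environment history), the ``$\leftarrow$'' direction of attribute~\ref{item:op_saf_attr_2} of Def.~\ref{def:safety_property} supplies a predecessor $r(m-1) \in \prefset$ and some $X_{m-1} \in S(r(m-1))$ with $r(m) = \update{r(m-1)}{X_{m-1}}$, in particular $S(r(m-1)) \ne \varnothing$. After $|h|$ iterations this yields $r(0) \in \prefset$ with $r(0)_\epsilon = []$, forcing $r(0) \in \globalinitialstates$ as required for membership in $\systrans$ (the only prefixes in $\prefset$ whose environment history is empty are the time-$0$ states of transitional runs).

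The forward pass is analogous: given $r(t) \in \prefset$ with $S(r(t)) \ne \varnothing$, I would pick any $X_t \in S(r(t))$ and set $r(t+1) \ce \update{r(t)}{X_t}$. The main technical hurdle is certifying $r(t+1) \in \prefset$ so that the ``$\rightarrow$'' direction of attribute~\ref{item:op_saf_attr_2} applies at $r(t+1)$ and delivers $S(r(t+1)) \ne \varnothing$, propagating the invariant. This is handled by observing that the prefix $r(0), \ldots, r(t+1)$ assembled so far is already a valid initial segment of a transitional run (consecutive states are linked by $\update{}{}$, and $r(0) \in \globalinitialstates$ from the backward pass), so it can be extended arbitrarily to a full element of $\systrans$, witnessing $r(t+1) \in \prefset$. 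The resulting $r$ then lies in $\systrans$, passes through $h$ at time $|h|$, and keeps $S$ nonempty at every time, which is exactly the conclusion of the lemma.
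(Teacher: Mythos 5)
Your forward extension beyond time $|h|$ is essentially the paper's: pick any $X\in S(r(t))$, apply $\update{}{}$, and propagate non-emptiness of $S$ via attribute~\ref{item:op_saf_attr_2} of Def.~\ref{def:safety_property}. Your backward pass, however, takes a genuinely different route. The paper never reconstructs predecessors: it simply sets $r(t)\ce r'(t)$ for $t\le|h|$, where $r'\in\systrans$ is a transitional run witnessing $h\in\prefset$, and obtains $S(r'(t))\ne\varnothing$ from the contrapositive of Lemma~\ref{lem:op_saf_prop_once_empty_always_empty}. Your reverse induction via the ``safely reachable'' direction of attribute~\ref{item:op_saf_attr_2} also produces predecessors with non-empty $S$-value, and by injectivity of $\update{}{}$ (Lemma~\ref{lem:update_injective}) these are in fact forced to coincide with $r'(|h|-1),\dots,r'(0)$ --- but you do not invoke injectivity, and without it you have not shown that your reconstructed chain is the initial segment of \emph{any} transitional run. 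The paper's choice is the more economical one precisely because transitionality of the segment up to $|h|$ then comes for free.

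The concrete weak point is your discharge of the obligation $r\in\systrans$ and $r(t+1)\in\prefset$: consecutive states being linked by $\update{}{}$ with $r(0)\in\globalinitialstates$ does \emph{not} make the sequence a prefix of a transitional run, since a transition additionally requires the chosen events and actions to be producible by the protocols and to survive the filters, and an arbitrary $X\in S(r(t))$ carries no such guarantee; hence ``it can be extended arbitrarily to a full element of $\systrans$'' does not follow. In fairness, the paper's own proof is silent on exactly this point in the forward direction (it applies attribute~\ref{item:op_saf_attr_2} at $r(t)$ without certifying $r(t)\in\prefset$), so you have not fallen behind the paper --- you have made explicit, and then incorrectly discharged, an obligation the paper leaves implicit. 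Closing it properly requires knowing that every $X\in S(h')$ arises as $\betaag{}{r''}{t''}$ for some transitional run $r''$ passing through $h'$ (automatic for properties in the image of $F$ from Def.~\ref{def:bij_constr_tr_to_op}), which cannot be read off the two attributes of Def.~\ref{def:safety_property} alone.
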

\begin{proof}
	Assuming that $S(h) \ne \varnothing$ we construct $r$ as follows:
	Since $h \in \prefset$, $h = r'(|h|)$ for some $r' \in \systrans$.
	Contraposition of the statement of Lemma \ref{lem:op_saf_prop_once_empty_always_empty} gives for $t \le |h|$, if 
$		S(r'(|h|)) \ne \varnothing$ 
then $S(r'(t')) \ne \varnothing$.
	Hence for $t \le |h|$ we define $r(t) \ce r'(t)$.

	Next assume an order on the set $Z \ce 2^{\gevents} \times 2^{\gactions[1]} \times \ldots \times 2^{\gactions[n]}$ and let $\widetilde{X_1}(S)$ be the first element of some subset $S \subseteq Z$ according to this order.
	For $t > |h|$ we define $r(t) \ce \mathit{update}({r(t-1)},{\widetilde{X_1}(S(r(t-1)))})$.
	It remains to show that $S(r(t-1))$ can never be empty.
	The proof is by induction over $t > |h|$. \\
	\textbf{Induction Hypothesis:} $S(r(t-1)) \ne \varnothing$. \\
	\textbf{Base Case} for $t = |h| + 1$: We get that $S(t) = S(|h|)$, which is not empty by assumption. \\
	\textbf{Induction Step} for $t \rightarrow t + 1$:
	Suppose the induction hypothesis holds for~$t$.
	Since $r(t)$ is defined as $\mathit{update}({r(t-1)},{\widetilde{X_1}(S(r(t-1)))})$ and it holds that $r(t-1) \in \prefset$ and $\widetilde{X_1}(S(r(t-1))) \in S(r(t-1))$, by the second operational safety property attribute (\ref{item:op_saf_attr_2}) and semantics of $\Longleftrightarrow$ we get that also $S(r(t)) \ne \varnothing$, thus completing the induction step.
\end{proof}

\begin{lemma} \label{lem:saf_not_empty_for_run_impl_beta_set}
	For an operational safety property $S \in \opsaf$, transitional run $r \in \systrans$, and timestamp $t \in \bbbt \setminus \{0\}$, 
 $		S(r(t)) \ne \varnothing$ 
implies $\betaag{}{r}{t-1} \in S(r(t-1))$.
\looseness=-1
\end{lemma}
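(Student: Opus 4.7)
My plan is to deduce the conclusion directly from the second operational safety property attribute of Definition~\ref{def:safety_property} combined with injectivity of the state update function from Lemma~\ref{lem:update_injective}. The argument has three clean steps.

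First, I would unfold what it means for $r \in \systrans$ together with $t \geq 1$: by the transition relation, $r(t)$ is obtained from $r(t-1)$ by exactly one round, so that
$$r(t) = \update{r(t-1)}{\betaag{}{r}{t-1}},$$
where $\betaag{}{r}{t-1}$ abbreviates the tuple $(\betae{r}{t-1}, \betaag{1}{r}{t-1}, \dots, \betaag{n}{r}{t-1})$ produced by the filtering phase. By the shape of $\updatee{}{}$ in Def.~\ref{def:state-update}, a fresh set is always prepended to the environment history, hence $r(t)_\epsilon \ne []$ and $r(t) \in \prefset$.

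Second, I would apply the $\leftarrow$ direction of the biconditional in attribute~\ref{item:op_saf_attr_2} of Def.~\ref{def:safety_property} to the assumption $S(r(t)) \ne \varnothing$: since $r(t)_\epsilon \ne []$, there must exist some $h' \in \prefset$ and some $X \in S(h')$ with $r(t) = \update{h'}{X}$.

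Third, I would combine the two representations of $r(t)$ as an update. Lemma~\ref{lem:update_injective} asserts injectivity of $\update{}{}$ on the entire product domain $\globalstates \times (2^{\gevents} \times 2^{\gtrueactions[1]} \times \dots \times 2^{\gtrueactions[n]})$, so the equality $\update{r(t-1)}{\betaag{}{r}{t-1}} = \update{h'}{X}$ forces $h' = r(t-1)$ and $X = \betaag{}{r}{t-1}$. The latter identity, together with $X \in S(h')$, yields $\betaag{}{r}{t-1} \in S(r(t-1))$, which is the claim.

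The proof is essentially bookkeeping, so there is no serious obstacle; the only subtle point is making sure one invokes Lemma~\ref{lem:update_injective} at the level of the compound update (environment plus all agents simultaneously), rather than componentwise. Since the lemma is already stated in precisely that compound form, this step is routine.
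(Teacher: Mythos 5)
Your proof is correct and follows essentially the same route as the paper's: both use the right-to-left direction of attribute~\ref{item:op_saf_attr_2} of Def.~\ref{def:safety_property} to obtain a pair $(h',X)$ with $r(t)=\update{h'}{X}$, and then invoke the injectivity of $\update{}{}$ (Lemma~\ref{lem:update_injective}) to conclude $h'=r(t-1)$ and $X=\betaag{}{r}{t-1}$. Your explicit verification that $r(t)_\epsilon\ne[]$ (so the attribute applies) is a detail the paper leaves implicit, but otherwise the arguments coincide.
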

\begin{proof}
	If $S(r(t)) \ne \varnothing$, by the operational safety property attribute \ref{item:op_saf_attr_2}, $r(t)$ has to be safely reachable, meaning
$		(\exists h \in \prefset)(\exists X \in S(h)) \ r(t) = \update{h}{X}$.
	By injectivity (Lemma \ref{lem:update_injective}) of $\update{}{}$ (Def.~\ref{def:state-update}) it only maps to the prefix $r(t)$ for $h = r(t-1)$ and $X = \betaag{}{r}{t-1}$.
	Hence $\betaag{}{r}{t-1} \in S(r(t-1))$.
\end{proof}

\begin{lemma} \label{lem:f_is_surjective}
	$F$ from Def.~\ref{def:bij_map_tr_to_op} is surjective.
\end{lemma}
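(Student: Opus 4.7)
The plan is to construct, for any $S \in \opsaf$, an explicit preimage $\widetilde{S'}^S \in \tracesaf$ via Def.~\ref{def:f_surj_construction}, and then show $F(\widetilde{S'}^S) = S$. That $\widetilde{S'}^S$ is indeed a trace safety property is already guaranteed by Lemma~\ref{lem:f_surf_constr_is_trace_saf_prop}, so the bulk of the argument reduces to the pointwise equality $F(\widetilde{S'}^S)(h) = S(h)$ for every $h \in \prefset$. The easy inclusion $F(\widetilde{S'}^S)(h) \subseteq S(h)$ falls out by unfolding Def.~\ref{def:bij_constr_tr_to_op}: any such $X$ equals $\betaag{}{r}{t}$ for some $r \in \widetilde{S'_{\infty}}^S$ with $r(t) = h$, and Lemma~\ref{lem:f_surf_constr_limit_is_limit_cl_thing} then forces $X \in S(r(t)) = S(h)$.

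For the reverse inclusion, which carries the main work, given $X \in S(h)$ I will manufacture a run $r \in \widetilde{S'_{\infty}}^S$ satisfying $r(|h|) = h$ and $\betaag{}{r}{|h|} = X$. Setting $h'' \ce \update{h}{X}$, I note that $h''$ is safely reachable from $h$, so by attribute~(\ref{item:op_saf_attr_2}) of Def.~\ref{def:safety_property} we obtain $S(h'') \ne \varnothing$. Lemma~\ref{lem:saf_prefix_impl_saf_run} then yields a transitional $r$ with $r(|h|+1) = h''$ and $S(r(t)) \ne \varnothing$ for every $t \in \bbbt$, after which Lemma~\ref{lem:saf_not_empty_for_run_impl_beta_set} strengthens this to $\betaag{}{r}{t} \in S(r(t))$ at every $t$. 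By Lemma~\ref{lem:f_surf_constr_limit_is_limit_cl_thing} this places $r$ in $\widetilde{S'_{\infty}}^S$. Finally, transitionality gives $r(|h|+1) = \update{r(|h|)}{\betaag{}{r}{|h|}}$, while by construction $r(|h|+1) = h'' = \update{h}{X}$, so injectivity of $\update{}{}$ from Lemma~\ref{lem:update_injective} pins down $r(|h|) = h$ and $\betaag{}{r}{|h|} = X$, giving $X \in F(\widetilde{S'}^S)(h)$.

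The main obstacle I anticipate is the legitimate invocation of attribute~(\ref{item:op_saf_attr_2}) on $h''$: it requires both $h'' \in \prefset$ and $h''_\epsilon \ne []$. The latter is immediate, since $\updatee{}{}$ always prepends a new round set to the environment history. The former relies on the tuples populating $S(h)$ being transitionally admissible outcomes from $h$, i.e.\ the implicit compatibility between operational safety properties and the underlying transition template that is already tacitly exploited in the proof of Lemma~\ref{lem:saf_prefix_impl_saf_run}. A brief separate check of the edge case $|h|=0$ via attribute~(\ref{item:op_saf_attr_1}), which guarantees at least one safe initial prefix to serve as a base state, rounds out the argument.
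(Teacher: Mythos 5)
Your proposal is correct and rests on the same pillars as the paper's proof: the candidate preimage $\widetilde{S'}^{S}$ from Def.~\ref{def:f_surj_construction}, its membership in $\tracesaf$ via Lemma~\ref{lem:f_surf_constr_is_trace_saf_prop}, and the trio of Lemmas~\ref{lem:f_surf_constr_limit_is_limit_cl_thing}, \ref{lem:saf_prefix_impl_saf_run}, and~\ref{lem:saf_not_empty_for_run_impl_beta_set}. The difference is in how the two directions are organized. The paper argues by contradiction and splits the failure of $F(\widetilde{S'}^{S}) = S$ into two cases, the second of which is phrased as ``there exist $h \in \prefset$ and $X \in S(h)$ but $h \notin \widetilde{S'}^{S}$''; refuting that only shows $h$ is reachable inside $\widetilde{S'}^{S}$, not that the \emph{particular} tuple $X$ arises as $\betaag{}{r}{|h|}$ for some run $r \in \widetilde{S'}^{S}$ with $r(|h|) = h$, which is what surjectivity of $F$ actually demands. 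Your detour through $h'' \ce \update{h}{X}$, followed by Lemma~\ref{lem:saf_prefix_impl_saf_run} applied to $h''$ and the injectivity of $\update{}{}$ (Lemma~\ref{lem:update_injective}) to pin down $r(|h|) = h$ and $\betaag{}{r}{|h|} = X$, closes exactly this gap, so your version is strictly more careful on the step that carries the load. The price you pay, which you correctly flag, is the need for $h'' \in \prefset$ before attribute~\ref{item:op_saf_attr_2} of Def.~\ref{def:safety_property} can be invoked on it; this compatibility between safe extensions and transitionality is the same tacit assumption already used in the paper's proof of Lemma~\ref{lem:saf_prefix_impl_saf_run}, so you are no worse off than the paper, but it deserves the explicit remark you give it. Your easy inclusion $F(\widetilde{S'}^{S})(h) \subseteq S(h)$ coincides with the paper's first case, just stated directly rather than contrapositively.
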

\begin{proof}
	Suppose by contradiction that $F$ is not surjective.
	This implies that there exists some $S \in \opsaf$ s.t. for all $S' \in \tracesaf$, $F(S') \ne S$.

	To arrive at a contradiction, we use the trace safety property $\widetilde{S'}^{S}$ from Def.~\ref{def:f_surj_construction}.
	This is safe to use, as by Lemma \ref{lem:f_surf_constr_is_trace_saf_prop} $\widetilde{S'}^{S} \in \tracesaf$.
	There are two cases causing $F(\widetilde{S'}^S) \ne S$:
	\begin{compactenum}
	\item There is a run $r' \in \widetilde{S'}^{S}$ and timestamp $t' \in \bbbt$ s.t. $\betaag{}{r'}{t'} \notin S(r'(t'))$.
		Hence,
$			r' \in \{r \in \systrans \mid \betaag{}{r}{t'} \notin S(r(t'))\}$,
		such that by (\ref{eq:lem_f_surjective_constr_induc}) $r' \notin \widetilde{S'_{t'+1}}^{S}$, from which further by (\ref{eq:lem_f_surjective_constr_limit}) and (\ref{eq:lem_f_surjective_constr_prefix_cl}) $r' \notin \widetilde{S'}^{S}$ follows, providing a contradiction.
	\item There exists a prefix $h \in \prefset$ and some $X \in S(h)$, but
		\begin{equation} \label{eq:lem_f_surjective_contra2}
			h \notin \widetilde{S'}^{S}.
		\end{equation}
		Since $X \in S(h)$ by Lemma \ref{lem:saf_prefix_impl_saf_run} there exists a transitional run $r \in \systrans$ s.t. 
$			r(|h|) = h $ and  $(\forall t \in \bbbt)\ S(r(t)) \ne \varnothing $.
		By Lemma \ref{lem:saf_not_empty_for_run_impl_beta_set} we further get that for any $t \in \bbbt$, if 
$			S(r(t)) \ne \varnothing $, then 
$\betaag{}{r}{t-1} \in S(r(t-1))$.
		By Lemma \ref{lem:f_surf_constr_limit_is_limit_cl_thing} it follows that $r \in \widetilde{S'_{\infty}}^{S}$ and by prefix closure (\ref{eq:lem_f_surjective_constr_prefix_cl}) we finally get that $h \in \widetilde{S'}^{S}$ contradicting (\ref{eq:lem_f_surjective_contra2}).
	\end{compactenum}

	Thus, by definition of our construction (\ref{eq:lem_f_surjective_constr_init})--(\ref{eq:lem_f_surjective_constr_limit}), $F(\widetilde{S'}^{S}) = S$.\looseness=-1
\end{proof}

\begin{lemma} \label{lem:f_is_bijective}
	$F$ from Def.~\ref{def:bij_constr_tr_to_op} is bijective.
\end{lemma}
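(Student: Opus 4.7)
The plan is to obtain this result as an immediate corollary of the two preceding lemmas. By standard set theory, a function is bijective if and only if it is both injective and surjective. Lemma~\ref{lem:f_is_injective} establishes that $F$ is injective, and Lemma~\ref{lem:f_is_surjective} establishes that $F$ is surjective. Combining these two facts yields the claim directly, with no further work required.

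Concretely, I would write a one-line proof that simply invokes Lemmas~\ref{lem:f_is_injective} and~\ref{lem:f_is_surjective} and concludes bijectivity of $F\colon\tracesaf \to \opsaf$ by definition. There is no technical obstacle here: all the content lives in the two preceding lemmas, where injectivity was argued via prefix/limit closure (showing that any distinguishing run forces a distinguishing prefix) and surjectivity was argued via the explicit construction $\widetilde{S'}^{S}$ from Def.~\ref{def:f_surj_construction}, shown to lie in $\tracesaf$ by Lemma~\ref{lem:f_surf_constr_is_trace_saf_prop} and to map to $S$ under $F$ via Lemmas~\ref{lem:f_surf_constr_limit_is_limit_cl_thing} and~\ref{lem:saf_prefix_impl_saf_run}.

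Since the statement is genuinely a trivial consequence, the only stylistic choice is whether to phrase it as ``follows immediately from Lemmas~\ref{lem:f_is_injective} and~\ref{lem:f_is_surjective}'' or to spell out the elementary equivalence ``injective $+$ surjective $\Leftrightarrow$ bijective.'' I would opt for the former, since the definitional equivalence is standard and the reader has just been walked through both halves in detail. No obstacle is anticipated.
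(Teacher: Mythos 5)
Your proposal is correct and matches the paper's proof exactly: the paper also disposes of this lemma in one line by citing Lemma~\ref{lem:f_is_injective} and Lemma~\ref{lem:f_is_surjective}. Nothing further is needed.
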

\begin{proof}
	Follows from Lemma \ref{lem:f_is_injective} and \ref{lem:f_is_surjective}.
\end{proof}

\begin{lemma}\label{lem:synch_gbyz_indist}
For the general asynchronous byzantine framework given two $\tauprotocol{B}{\envprotocol{}}{\joinprotocol{}}$-transitional runs $r,r' \in \System$ and timestamps $t,t' \in \bbbt \setminus \{0\}$, an agent $i \in \agents$ cannot distinguish
	\begin{compactitem}
	\item a round $t$\textonehalf{} in run $r$, where a nonempty set of events $Q \subseteq \gtrueevents[i] \sqcup \bevents[i]$ was observed by $i$, but no $go(i)$ occurred $\Rightarrow$
	\begin{equation*}
		go(i) \notin \betag[i]{r}{t}, \quad \betaag{i}{r}{t} = \varnothing, \quad \betaout[i]{r}{t} \sqcup \betab[i]{r}{t} = Q
	\end{equation*}
	\item from a round $t'$\textonehalf{} in run $r'$, where the same set of events $Q$ was observed by $i$, $go(i)$ occurred, but the protocol prescribed the empty set ($\varnothing \in \agprotocol{i}{r'(t')}$), which was chosen by the adversary $\Rightarrow$
	\begin{equation*}
		go(i) \in \betag[i]{r'}{t'}, \quad \betaag{i}{r'}{t'} = \varnothing, \quad \betaout[i]{r'}{t'} \sqcup \betab[i]{r'}{t'} = Q.
	\end{equation*}
	\end{compactitem}
\end{lemma}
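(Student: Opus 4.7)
The proof plan is essentially a computation with the agent update function $\updateag{i}{}{}{}$ from Def.~\ref{def:state-update}, showing that in both scenarios agent $i$'s local state is extended by the same single block of locally observable haps.

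First, I would instantiate the update function in both scenarios. In round $t$\textonehalf{} of~$r$ the relevant arguments are $X_i = \betaag{i}{r}{t} = \varnothing$ and $X_{\epsilon_i} = \betae[i]{r}{t} = Q \sqcup \betag[i]{r}{t}$, where $\betag[i]{r}{t} \subseteq \{\sleep{i}, \hib{i}\}$ by $t$-coherence (Def.~\ref{def:t-coherence}, condition \ref{coh:one_go}) together with the hypothesis $go(i) \notin \betag[i]{r}{t}$. In round $t'$\textonehalf{} of~$r'$ the arguments are $X_i = \varnothing$ and $X_{\epsilon_i} = Q \sqcup \{go(i)\}$.

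Next I would verify that the \emph{same} branch of the case split in Def.~\ref{def:state-update} is triggered on both sides. The ``unchanged'' branch requires both $\sigma(X_{\epsilon_i}) = \varnothing$ and $X_{\epsilon_i} \cap \sevents[i] \notin \{\{go(i)\}, \{\sleep{i}\}\}$. Since $Q \neq \varnothing$ consists of haps from $\gtrueevents[i] \sqcup \bevents[i]$, inspection of the localization function $\sigma$ in Def.~\ref{def:sig_loc_fun} shows $\sigma(Q) \neq \varnothing$, so $\sigma(X_{\epsilon_i}) \neq \varnothing$ in both rounds, forcing the ``otherwise'' branch in each case. Thus the local history is extended by $[\sigma(X_{\epsilon_i} \sqcup X_i)]$ in both scenarios.

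The heart of the proof is then the observation that $\sigma$ strips all system events: $\sevents[i] \cap \gtruehaps = \varnothing$, and no element of $\sevents[i]$ has the shape $\fakeof{i}{E}$ or $\fakeof{i}{\mistakefor{A}{A'}}$ required by the remaining two clauses of Def.~\ref{def:sig_loc_fun}. Combined with $X_i = \varnothing$ on both sides, this gives
\begin{equation*}
\sigma\bigl(Q \sqcup \betag[i]{r}{t}\bigr) \;=\; \sigma(Q) \;=\; \sigma\bigl(Q \sqcup \{go(i)\}\bigr),
\end{equation*}
so the increment appended to $r_i(t)$ coincides with that appended to $r'_i(t')$. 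Indistinguishability then follows by a straightforward induction (or an appeal to matching prior local states) from Def.~\ref{def:possible_world_relation}.

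I do not anticipate any genuine obstacle; the only subtlety to be careful about is the bookkeeping for $\betag[i]{r}{t}$ in the first scenario (ensuring that whatever system event, if any, accompanies the byzantine observations does not force the ``unchanged'' branch and does not survive $\sigma$), which is handled by the two points above.
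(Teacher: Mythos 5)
Your proof follows the same route as the paper's own (one-line) argument — unfolding the agent update function and observing that the block appended to the local history is the localization of $Q$ in both rounds — but you spell out the case split of Def.~\ref{def:state-update} and the fact that $\sigma$ discards system events, which the paper leaves implicit. The one claim worth flagging is that ``inspection of $\sigma$'' gives $\sigma(Q)\neq\varnothing$ for every nonempty $Q\subseteq\gtrueevents[i]\sqcup\bevents[i]$: this is false for sets consisting only of byzantine events recorded as $\tick$ (e.g.\ $Q=\{\failof{i}\}$), in which case the first scenario may take the ``unchanged'' branch while the second appends $[\varnothing]$; however, the lemma's wording ``observed by $i$'' is evidently intended to exclude such invisible $Q$, and the paper's proof silently makes the same assumption.
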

\begin{proof}
This immediately follows from the definition of the update function (\ref{eq:update_agent}), as in this scenario (for $r_i(t+1) = [\lambda_m,\dots,\lambda_1,\lambda_0]$ and $r'_{i}(t'+1) = [\lambda'_{m'},\dots,\lambda'_1,\lambda'_0]$) $\lambda_m = \lambda'_{m'} = Q$. \qed
\end{proof}

\begin{definition} \label{def:impl_classes_extended}
	We define the following implementation classes:
	\begin{itemize}
	\item[$\adm$] 
		The desired extension property is only implemented via an appropriate admissibility condition $\Admissibility[\alpha] \subseteq \system{}$.
		An extension $\extension[\alpha] \in \adm$ if{f} $\extension[\alpha]=(\envprotocols \times \agprotocols, IS^\alpha, \transitionfrom[N,N]{}{}, \Admissibility[\alpha])$.
	\item[$\jp$]
		The extension property is implemented via restricting the set of joint protocols $\agprotocols$.
		An extension $\extension[\alpha] \in \jp$ if{f}
			$\extension[\alpha]=(\envprotocols \times \agprotocols[\alpha], IS^\alpha, \transitionfrom[N,N]{}{}, \Admissibility[\alpha])$.
	\item[$\jpfb$]
		An extension $\extension[\alpha] \in \jpfb$ if{f}
			$\extension[\alpha]=(\envprotocols \times \agprotocols[\alpha], IS^\alpha, \transitionfrom[N,B]{}{}, \Admissibility[\alpha])$,
		where in $\transitionfrom[N,B]{}{}$ the filter functions $\filtere[N]{}{}$ and $\filterag[B]{i}{}{}$ (for all $i \in \agents$) are used and $\agprotocols[\alpha] \subset \agprotocols$.
	\item[$\ejp$]
		The extension property is implemented via restricting the set of environment protocols $\envprotocols$ possibly in conjunction with the set of joint protocols $\agprotocols$.
		An extension $\extension[\alpha] \in \ejp$ if{f}
			$\extension[\alpha]=(PP^{\alpha}, IS^\alpha, \transitionfrom[N,N]{}{}, \Admissibility[\alpha])$,
		where $PP^{\alpha} \subset \pprotocols$ and $\extension[\alpha] \notin \jp$.
	\item[$\ejpfb$]
		An extension $\extension[\alpha] \in \ejpfb$ if{f}
			$\extension[\alpha]=(PP^{\alpha}, IS^\alpha, \transitionfrom[N,B]{}{}, \Admissibility[\alpha])$,
		where in $\transitionfrom[N,B]{}{}$ the filter functions $\filtere[N]{}{}$ and $\filterag[B]{i}{}{}$ (for all $i \in \agents$) are used, $PP^{\alpha} \subset \pprotocols$ and $\extension[\alpha] \notin \jpfb$.
	\item[$\efjp$]
		An extension $\extension[\alpha] \in \efjp$ if{f}
			$\extension[\alpha]=(\envprotocols \times \agprotocols[\alpha], IS^\alpha, \transitionfrom[\alpha,N]{}{}, \Admissibility[\alpha])$,
		where in $\transitionfrom[\alpha,N]{}{}$ the filter functions $\filtere[\alpha]{}{}$ and $\filterag[N]{i}{}{}$ (for all $i \in \agents$) are used, $\agprotocols[\alpha] \subseteq \agprotocols$ and $\extension[\alpha] \notin \jp$.
		\item[$\efjpfb$]
		An extension $\extension[\alpha] \in \efjpfb$ if{f}
			$\extension[\alpha]=(\envprotocols \times \agprotocols[\alpha], IS^\alpha, \transitionfrom[\alpha,B]{}{}, \Admissibility[\alpha])$,
		where in $\transitionfrom[\alpha,B]{}{}$ the filter functions $\filtere[\alpha]{}{}$ and $\filterag[B]{i}{}{}$ (for all $i \in \agents$) are used, $\agprotocols[\alpha] \subseteq \agprotocols$ and $\extension[\alpha] \notin \jpfb$.
		\item[$\efejp$] 
		An extension $\extension[\alpha] \in \efejp$ if{f}
			$\extension[\alpha]=(PP^\alpha, IS^\alpha, \transitionfrom[\alpha,N]{}{}, \Admissibility[\alpha])$,
		where in $\transitionfrom[\alpha,N]{}{}$ the filter functions $\filtere[\alpha]{}{} \subset \filtere[N]{}{}$ and the neutral action filters $\filterag[N]{i}{}{}$ (for all $i \in \agents$) are used, $PP^\alpha \subset \envprotocols \times \agprotocols$ and $\extension[\alpha] \notin \efjp$.
	\item[$\efejpfb$] 
		An extension $\extension[\alpha] \in \efejpfb$ if{f}
			$\extension[\alpha]=(PP^\alpha, IS^\alpha, \transitionfrom[\alpha,B]{}{}, \Admissibility[\alpha])$,
		where in $\transitionfrom[\alpha,B]{}{}$ the filter functions $\filtere[\alpha]{}{} \subset \filtere[N]{}{}$ and the byzantine action filters $\filterag[B]{i}{}{}$ (for all $i \in \agents$) are used, $PP^\alpha \subset \envprotocols \times \agprotocols$ and $\extension[\alpha] \notin \efjpfb$.
	\item[$\others$]
		This class contains all remaining extension implementations including restrictions via arbitrary action filters $\filterag{i}{}{}$ (for $i \in \agents$).

		An extension $\extension[\alpha] \in \others$ if{f} it is not in any other class.
	\end{itemize}

	We list important subsets of these  implementation classes, which  we will treat as individual implementation classes in their own right (see listing below):\looseness=-1
	\begin{compactitem}
	\item $\jpdc \ce \{\extension[\alpha] \in \jp \mid S^\alpha \text{ is downward closed}\}$
	\item $\ejpdc \ce \{\extension[\alpha] \in \ejp \mid S^\alpha \text{ is downward closed}\}$
	\item $\efjpdc \ce \{\extension[\alpha] \in \efjp \mid S^\alpha \text{ is downward closed}\}$
	\item $\efejpdc \ce \{\extension[\alpha] \in \efejp \mid S^\alpha \text{ is downward closed}\}$
	\item $\othersdc \ce \{\extension[\alpha] \in \others \mid S^\alpha \text{ is downward closed}\}$
	\item $\efejpdcmono \ce \{\extension[\alpha] \in \efejpdc \mid (\forall i \in \agents)\ \filterag[\alpha]{i}{}{} \text{ and } \filtere[\alpha]{}{}\\
				\text{are monotonic for the domain } PD_\epsilon^{t-coh}, 2^{\gactions[1]}, \dots, 2^{\gactions[n]}\}$.
	\item $\othersdcmono \ce \{\extension[\alpha] \in \othersdc \mid (\forall i \in \agents)\ \filterag[\alpha]{i}{}{} \text{ and } \filtere[\alpha]{}{} \text{ are}\\
				\text{monotonic for the domain } PD_\epsilon^{t-coh}, 2^{\gactions[1]}, \dots, 2^{\gactions[n]}\}$.
	\end{compactitem}
\end{definition}

\begin{lemma}\label{lem:virtProt}
	An agent $i$ in a synchronous agents context executes its protocol only during synced rounds, i.e., for every $\chi \in \extension[S]$ and $r \in \system{\chi}$, $go(i) \in \betag[i]{r}{t}$ if $t$\textonehalf{} is a synced round.
\end{lemma}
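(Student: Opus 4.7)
The plan is to carry out a direct unfolding of the synchronous event filter. Reading the claim in accordance with its informal phrasing ``only during synced rounds'' — that is, the direction ``$go(i)\in\betag[i]{r}{t}$ implies round~$t$\textonehalf{} is synced'' — I would argue as follows. Because $\chi\in\extension[S]$, the run $r$ is $\transitionfrom[S]{}{}$-transitional, so by the transition template the round's events satisfy $\betae{r}{t}=\filtere[S]{r(t)}{\alphae{r}{t},\alphaag{1}{r}{t},\dots,\alphaag{n}{r}{t}}$. Since $\betag[i]{r}{t}=\betae[i]{r}{t}\cap\sevents[i]$ by Def.~\ref{def:virtRound} and $go(i)\in\sevents[i]$, the hypothesis immediately gives $go(i)\in\betae{r}{t}$.

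Next I would invoke the case split in Def.~\ref{def:filterSyn}. The filter $\filtere[S]$ either strips all $go$ events from $\alphae{r}{t}$ or leaves the input set unchanged, the stripping branch being taken exactly when $\sevents[j]\cap\alphae{r}{t}=\varnothing$ for some $j\in\agents$. Since $go(i)$ survived into $\betae{r}{t}$, the stripping branch was not taken; hence $\betae{r}{t}=\alphae{r}{t}$ and, crucially, $\sevents[j]\cap\alphae{r}{t}\ne\varnothing$ for every $j\in\agents$. Combining these, $\betag[j]{r}{t}=\sevents[j]\cap\betae[j]{r}{t}\ne\varnothing$ for every $j$, which by Def.~\ref{def:virtRound} is exactly the statement that round~$t$\textonehalf{} is synced.

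I do not foresee a genuine obstacle; each step is a one-line appeal to a definition. The only delicate point is the direction of the implication being proved: the formally stated converse (synced $\Rightarrow go(i)\in\betag[i]{r}{t}$) fails in general within $\extension[S]$, since the environment may still schedule $\sleep{j}$ or $\hib{j}$ in place of $go(j)$ on a faulty $j\ne i$, keeping the round synced while no $go(j)$ appears. The ``only if'' reading I proved is thus the content the filter $\filtere[S]$ was expressly designed to enforce and is exactly what downstream results (e.g.\ the clock-count argument underlying Theorem~\ref{lem:virtAware}) require.
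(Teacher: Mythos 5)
Your proof is correct and takes essentially the same route as the paper, which likewise disposes of the lemma by directly unfolding Def.~\ref{def:virtRound} and Def.~\ref{def:filterSyn}; you merely spell out the case split of $\filtere[S]{}{}$ that the paper leaves implicit. Your reading of the claim in the ``only if'' direction (a surviving $go(i)$ forces every $\sevents[j]\cap\betae{r}{t}\ne\varnothing$, hence a synced round) matches the direction the paper's own proof actually establishes, the literal ``if'' in the lemma statement notwithstanding.
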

\begin{proof}
	From Defs.~\ref{def:virtRound} and~\ref{def:filterSyn}, it immediately follows that in a synchronous agents context $go(i)$ events can only ever occur during a synced round. \qed
\end{proof}

\begin{lemma}\label{lem:virtPreAware}
	For a correct agent $i$, a $\transitionExt[\envprotocol{}, \joinprotocol{}^{S}]{S}{}$-transitional run $r$ (where $\joinprotocol{}^{S} \in \agprotocols[S]$), some timestamp $t' \ge 1$, agent $i$'s local history $r_i(t') = h_i = [\lambda_m, \dots, \lambda_1, \lambda_0]$ (given the global history $h = r(t') \in \globalstates$) and some round $(t-1)$\textonehalf{} ($t' \ge t \ge 1$), there exists some $a \in \actions[i]$ such that $a \in \lambda_{k_t}$ where $\lambda_{k_t} = \sigmaof{\betae[i]{r}{t-1} \sqcup \betaag{i}{r}{t-1}}$ if and only if $(t-1)$\textonehalf{} is a synced round.
\end{lemma}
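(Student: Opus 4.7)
The plan is to prove each direction of the biconditional by unwinding the definitions of the synchronous event filter (Def.~\ref{def:filterSyn}), the standard action filter (Def.~\ref{def:std_ac_filter}), the synchronous joint protocols (Def.~\ref{def:synch_ag_protocols}), and the localization function $\sigmaof{\cdot}$ (Def.~\ref{def:sig_loc_fun}). The key observation is that for a \emph{correct} agent, the only way actions can appear in the localized set $\sigmaof{\betae[i]{r}{t-1} \sqcup \betaag{i}{r}{t-1}}$ is via actually performed correct actions in $\betaag{i}{r}{t-1}$, and those actions require $go(i)$ to have survived filtering.

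For the $\Leftarrow$ direction, assume round $(t-1)$\textonehalf{} is synced. By Lemma~\ref{lem:virtProt}, $go(i)\in\betag[i]{r}{t-1}$. Since $\joinprotocol{}^{S}\in\agprotocols[S]$, every set $D\in\agprotocol{i}{r_i(t-1)}$ contains $\clock$, so the adversary's choice $X_i$ satisfies $\clock\in X_i$, giving $\glob{i,t-1}{\clock}\in\alphaag{i}{r}{t-1}$. Because $go(i)\in\betae{r}{t-1}$, the standard action filter leaves $X_i$ unchanged, so $\glob{i,t-1}{\clock}\in\betaag{i}{r}{t-1}$. Unfolding $\sigmaof{\cdot}$ then yields $\clock\in \sigmaof{\betae[i]{r}{t-1}\sqcup\betaag{i}{r}{t-1}}$, which witnesses the existence claim (with $\clock\in\actions[i]$).

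For the $\Rightarrow$ direction, suppose some $a\in\actions[i]$ lies in $\sigmaof{\betae[i]{r}{t-1}\sqcup\betaag{i}{r}{t-1}}$. By Def.~\ref{def:sig_loc_fun}, either $\glob{i,t-1}{a}\in\betaag{i}{r}{t-1}$, or $\fakeof{i}{\mistakefor{A}{\glob{i,t-1}{a}}}\in\betae[i]{r}{t-1}$ for some $A$. Since $i$ is correct at $t'\ge t-1$, no event from $\fevents[i]$ (and in particular no byzantine event) appears in $\betae[i]{r}{m}$ for any $m\le t-1$, so the second alternative is impossible. Hence $\glob{i,t-1}{a}\in\betaag{i}{r}{t-1}$, and by Def.~\ref{def:std_ac_filter} this forces $go(i)\in\betae{r}{t-1}$. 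But $\filtere[S]{}{}$ preserves any $go$ events only when $\sevents[j]\cap\alphae{r}{t-1}\neq\varnothing$ for every $j\in\agents$; moreover $\sleep{j}$ and $\hib{j}$ are never removed by this filter, and when the guard holds no $go$ is removed either. Therefore $\betag[j]{r}{t-1}=\sevents[j]\cap\betae{r}{t-1}\neq\varnothing$ for every $j$, which is exactly the definition of a synced round (Def.~\ref{def:virtRound}).

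The main obstacle is bookkeeping the interaction between $\filtere[S]{}{}$ and the mixture of $go$, $\sleep$, and $\hib$ events across different agents: one must notice that the filter behaves as an all-or-nothing guard that removes \emph{all} $go$'s precisely when some agent has no system event at all, so the presence of $go(i)$ after filtering guarantees a system event for every other agent as well. Ruling out the byzantine alternative in the $\Rightarrow$ direction is immediate from the assumed correctness of $i$, so no delicate adjustment-style argument is needed.
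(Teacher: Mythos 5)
Your proof is correct and follows essentially the same route as the paper's: correctness of $i$ rules out the byzantine-event source of the recorded action so it must come from $\betaag{i}{r}{t-1}$, the standard action filter then forces $go(i)$, and the all-or-nothing guard of $\filtere[S]{}{}$ yields a synced round, while conversely a synced round gives $go(i)$ for the correct agent and the mandatory $\clock$ action witnesses the existence claim. The only cosmetic difference is that in the $\Leftarrow$ direction you cite Lemma~\ref{lem:virtProt} for ``synced implies $go(i)$'', whereas the paper argues directly that a correct agent's system event in a synced round must be $go(i)$ because $\sleep{i}$ or $\hib{i}$ would make it faulty --- with correctness already among your hypotheses this is the same step.
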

\begin{proof}
	From left to right.
	From Lemma \ref{lem:virtProt}, we know that an agent can only execute its protocol during synced rounds.
	Therefore, since agent $i$ is assumed to be correct and $(t-1)$\textonehalf{} is a synced round, it follows that $\{go(i)\} = \betag[i]{r}{t-1}$ ($sleep(i)$ or $hibernate(i)$ would make the agent byzantine).
	By (\ref{eq:update_agent}) (the definition of the update function) and Def.~\ref{def:synch_ag_protocols} (the definition of the synchronous agents joint protocols, which dictates that at least $\clock$ has to be among the attempted actions, hence the empty set can never be issued) an action $a \in \actions[i]$ such that $a \in \lambda_{k_t}$ has to exist.
	
	From right to left.
	Suppose there exists  $a \in \actions[i]$ such that $a \in \lambda_{k_t}$.
	Since agent $i$ is assumed to be correct, by Lemma \ref{lem:virtProt} agents only execute their protocol during synced rounds and by the definition of the update function (\ref{eq:update_agent}) round $(t-1)$\textonehalf{} has to be a synced round. \qed
\end{proof}

\begin{lemma} \label{lem:synch_go_means_action}
	For any agent $i \in \agents$, any run $r \in \system{\chi}$, where $\chi \in \extension[S]$ and any timestamp $t \in \bbbt$, it holds that
		$\{go(i)\} = \betag[i]{r}{t}$ 
		if{f} $(\exists A \in \gtrueactions[i])\ A \in \betaag{i}{r}{t}$.
\end{lemma}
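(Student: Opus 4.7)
The plan is to prove the two directions separately by tracing the sequence of filtering steps in the transition relation $\tauprotocol{S}{\envprotocol{}}{\joinprotocol{}}$ associated with $\extension[S]$, combining the standard action filter $\filterag[B]{i}{}{}$ (Def.~\ref{def:std_ac_filter}), the synchronous joint protocol restriction (Def.~\ref{def:synch_ag_protocols}), and $t$-coherence (Def.~\ref{def:t-coherence}).

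For the forward direction, assume $\{go(i)\}=\betag[i]{r}{t}$, so $go(i)\in\betae{r}{t}$. Since $\chi\in\extension[S]$, the action filter applied to $\alphaag{i}{r}{t}$ is the standard one, which leaves $\alphaag{i}{r}{t}$ unchanged whenever $go(i)\in\betae{r}{t}$; hence $\betaag{i}{r}{t}=\alphaag{i}{r}{t}$. The adversary picked some $X_i\in\agprotocol{i}{r_i(t)}$, and by Def.~\ref{def:synch_ag_protocols} every such $X_i$ contains the internal action $\clock$. After the labeling phase, $\glob{i,t}{\clock}\in\alphaag{i}{r}{t}\subseteq\gtrueactions[i]$, which supplies the required witness $A$.

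For the reverse direction, suppose $A\in\betaag{i}{r}{t}\cap\gtrueactions[i]$. Since $\betaag{i}{r}{t}=\filterag[B]{i}{\alphaag{1}{r}{t},\dots,\alphaag{n}{r}{t}}{\betae{r}{t}}$ and by Def.~\ref{def:std_ac_filter} the standard action filter returns $\varnothing$ whenever $go(i)\notin\betae{r}{t}$, the non-emptiness of $\betaag{i}{r}{t}$ forces $go(i)\in\betae{r}{t}$, and therefore $go(i)\in\betag[i]{r}{t}$. To conclude $\betag[i]{r}{t}=\{go(i)\}$, invoke $t$-coherence (Def.~\ref{def:t-coherence}, clause on $\sevents[i]$): any set of events actually produced by the environment is $t$-coherent and thus contains at most one event from $\sevents[i]$. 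Since $\betae{r}{t}\subseteq\alphae{r}{t}\in\envprotocol{t}$ consists of $t$-coherent sets (and filtering preserves this by taking subsets), we get $|\betag[i]{r}{t}|\le 1$, and combined with $go(i)\in\betag[i]{r}{t}$ this yields $\betag[i]{r}{t}=\{go(i)\}$.

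The argument is essentially a bookkeeping exercise; the only subtlety is making sure that the synchronous event filter $\filtere[S]{}{}$ from Def.~\ref{def:filterSyn} does not interfere in the forward direction: because $go(i)\in\betag[i]{r}{t}$ already indicates that the synchronous filter did \emph{not} strip $go$-events this round (i.e., $\sevents[j]\cap\alphae{r}{t}\ne\varnothing$ for all $j$), nothing disturbs the chain from the protocol's $\clock$ to the actually performed action. I do not expect a genuine obstacle here beyond cleanly threading these definitions together.
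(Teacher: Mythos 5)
Your proof is correct and follows essentially the same route as the paper's: the forward direction uses the fact that every set prescribed by a synchronous joint protocol contains $\clock$ (Def.~\ref{def:synch_ag_protocols}) so that, once $go(i)$ survives filtering, the standard action filter passes $\glob{i,t}{\clock}$ through; the reverse direction uses the standard action filter plus $t$-coherence to pin $\betag[i]{r}{t}$ down to exactly $\{go(i)\}$. The paper's version is just a terser statement of the same bookkeeping.
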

\begin{proof}
	This directly follows from Def.~\ref{def:synch_ag_protocols} of the synchronous agents joint protocol and the standard action filter function \ref{def:std_ac_filter}.
	As no synchronous agents protocol can prescribe the empty set, whenever an agent $i$ receives a $go(i)$ event during some round $t$\textonehalf{}, it will perform some action $a \in \actions[i]$, as by $t$-coherence of the environment protocol's event sets, there can always only be one system event present for any agent during one round.
	Similarly, if $(\exists A \in \gtrueactions[i])\ A \in \betaag{i}{r}{t}$ by definition of the byzantine action filter, $i$ must have gotten a $go(i)$. \qed
\end{proof}
\begin{corollary} \label{cor:synch_dist}
	Lemma \ref{lem:synch_gbyz_indist} does not hold for runs $r, r' \in \system{\chi}$ for $\chi \in \extension[S]$.
\end{corollary}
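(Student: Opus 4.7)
The plan is to demonstrate that the two scenarios described in Lemma~\ref{lem:synch_gbyz_indist} cannot coexist while yielding indistinguishable local views in a synchronous agents context. The key mechanism is the defining property of the synchronous joint protocols (Def.~\ref{def:synch_ag_protocols}): every set of actions prescribed by such a protocol must contain the internal action $\clock$, so the empty set of actions is never among the adversary's choices. This outlaws precisely the ``$go(i)$ with an empty action set'' situation on which the original lemma's indistinguishability rests.

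First, I would fix an agent-context $\chi \in \extension[S]$, an agent $i \in \agents$, and analyse the two scenarios of Lemma~\ref{lem:synch_gbyz_indist}. Scenario one, $go(i) \notin \betag[i]{r}{t}$ together with $\betaag{i}{r}{t} = \varnothing$ and events $Q = \betaout[i]{r}{t} \sqcup \betab[i]{r}{t}$, remains realisable under $\extension[S]$ (no $go(i)$ is fired and no actions occur), so by \eqref{eq:update_agent} the update appends the single block $\sigmaof{Q}$ to $r_i(t)$.

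Second, I would invoke Lemma~\ref{lem:synch_go_means_action} on scenario two: if $\{go(i)\} = \betag[i]{r'}{t'}$ then there exists some $A \in \gtrueactions[i]$ with $A \in \betaag{i}{r'}{t'}$, in particular $\clock$ must be among these actions. Hence $\betaag{i}{r'}{t'} \neq \varnothing$, directly contradicting the scenario's hypothesis. The scenario is therefore \emph{unrealisable} in $\extension[S]$; equivalently, any genuine run $r'$ satisfying $go(i) \in \betag[i]{r'}{t'}$ and observing the same set of events $Q$ must additionally append $\sigmaof{Q \sqcup \betaag{i}{r'}{t'}} \supsetneq \sigmaof{Q}$ to $r'_i(t')$, which differs from the last block of $r_i(t+1)$ from scenario one. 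The two local histories then differ already in their topmost entries, so $i$ distinguishes the rounds.

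The only real obstacle is pinning down what ``does not hold'' means here, and it admits a clean resolution: under either reading, impossibility of constructing $r'$ meeting all of the lemma's hypotheses, or distinguishability of any $r'$ that meets the surviving hypotheses, the same observation suffices, namely that Def.~\ref{def:synch_ag_protocols} combined with Lemma~\ref{lem:synch_go_means_action} forces every $go(i)$-round to produce at least one action in the local history. I would make this explicit by exhibiting a concrete $\chi \in \extension[S]$ (e.g.\ any with a fallible $i$) and two one-round runs illustrating the discrepancy, then concluding that the biconditional structure underlying Lemma~\ref{lem:synch_gbyz_indist} breaks in the synchronous setting. \qed
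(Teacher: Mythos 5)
Your proposal is correct and follows essentially the same route the paper intends: the corollary is stated immediately after Lemma~\ref{lem:synch_go_means_action} precisely because that lemma (itself a consequence of Def.~\ref{def:synch_ag_protocols} forcing $\clock$ into every prescribed action set) rules out the ``$go(i)$ with empty action set'' scenario on which Lemma~\ref{lem:synch_gbyz_indist} relies. Your additional discussion of how the topmost history blocks would differ is a harmless elaboration of the same observation.
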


\begin{definition} \label{def:neut_filters}
	For   $i \in \agents$, global history $h \in \globalstates$, 
	we define the \textbf{neutral} event and action filters (the weakest filters) as
$		\filtere[N]{h}{X_\epsilon,\, X_\agents} \ce X_\epsilon $
and
	$	\filterag[N]{i}{X_\agents}{X_\epsilon} \ce X_i$.
	The transition template using only the neutral filters is denoted~$\transitionfrom[N,N]{}{}$ or $\transitionfrom[N]{}{}$.
\end{definition}

\subsection{Asynchronous Byzantine Agents} \label{sec:asynch_byz_ag}
\begin{definition}
    \label{def:asynch_byz_ag_ext}
    We denote by
$		\extension[B] \ce (\envprotocols\times\agprotocols, 2^{\globalstates(0)} \setminus \{\varnothing\}, \transitionfrom[B]{}{}, \system{})$
	the \textbf{asynchronous byzantine agents} extension.
\end{definition}

\begin{lemma} \label{lem:byz_ext_from_efjpfb}
	$\extension[B] \in \efjpfb$.
\end{lemma}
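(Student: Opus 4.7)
The plan is to verify, by straightforward unfolding of Def.~\ref{def:impl_classes_extended}, that the four components of $\extension[B]$ from Def.~\ref{def:asynch_byz_ag_ext} satisfy all the structural requirements of the class $\efjpfb$ together with the separation condition $\extension[B] \notin \jpfb$.

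First, I would match the tuple of $\extension[B]$ against the schema $\bigl(\envprotocols \times \agprotocols[\alpha],\, IS^\alpha,\, \transitionfrom[\alpha,B]{}{},\, \Admissibility[\alpha]\bigr)$ dictated by $\efjpfb$. The protocol-pair component $\envprotocols \times \agprotocols$ fits by taking $\agprotocols[B] \ce \agprotocols$, which trivially satisfies $\agprotocols[B] \subseteq \agprotocols$. The initial-state component $2^{\globalstates(0)} \setminus \{\varnothing\}$ is admissible as $IS^B$, and the admissibility condition $\Admissibility[B] = \system{}$ is valid since it is the whole set of transitional runs. For the transition template, I would observe that $\transitionfrom[B]{}{}$ is built from the causal event filter $\filtere[B]{}{}$ defined in Eq.~\eqref{eq:filter_function} and the standard byzantine action filters $\filterag[B]{i}{}{}$ of Def.~\ref{def:std_ac_filter}; this is exactly the pattern $\transitionfrom[\alpha,B]{}{}$ specified for $\efjpfb$, with $\alpha \ce B$.

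Next, I would confirm that $\extension[B]$ is genuinely an extension, i.e., that there exists an agent-context $\chi \in \extension[B]$ with $\System[\chi] \ne \varnothing$. This is immediate from the framework of~\cite{KPSF19:FroCos}, which is precisely the instantiation described by $\extension[B]$ and is known to admit non-empty consistent run sets for suitable choices of environment protocol, initial global state, and joint protocol.

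Finally, I would discharge the separation clause $\extension[B] \notin \jpfb$. Extensions in $\jpfb$ must use the transition template $\transitionfrom[N,B]{}{}$ whose event filter is the neutral $\filtere[N]{h}{X_\epsilon,X_\agents} = X_\epsilon$ from Def.~\ref{def:neut_filters}. By contrast, $\filtere[B]{}{}$ strictly filters: given any global history $h$ and any set $X_\epsilon$ containing a receive event $\grecv{j}{i}{\mu}{id}$ for which no matching send occurs in $h_\epsilon$, in $X_i$, or in $X_\epsilon$ itself, $\filtere[B]{}{}$ removes this event while $\filtere[N]{}{}$ keeps it. Hence $\transitionfrom[B]{}{} \ne \transitionfrom[N,B]{}{}$, ruling out membership in $\jpfb$. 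The only mildly non-trivial step is exhibiting such a witness of causal removal; the rest of the argument is purely definitional bookkeeping and I expect no substantive obstacle.
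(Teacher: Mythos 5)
Your proposal is correct and follows essentially the same route as the paper, whose proof is simply the one-line observation that the claim follows by unfolding Def.~\ref{def:asynch_byz_ag_ext} and the definition of the implementation classes; you merely carry out that unfolding explicitly, including the non-membership in $\jpfb$ (which, incidentally, also follows immediately from the fact that $\jpfb$ requires a \emph{strict} restriction $\agprotocols[\alpha]\subset\agprotocols$ while $\extension[B]$ uses the full set of joint protocols). No gap; your version is just a more detailed write-up of the paper's definitional check.
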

\begin{proof}
	Follows from Defs.~\ref{def:asynch_byz_ag_ext} and \ref{def:impl_class}.
\end{proof}

\subsection{Reliable Communication}
\label{sec:rc}
In the \textbf{reliable communication} extension agents can behave arbitrarily.
However the communication---the transmission of messages by the environment---is reliable for a particular set of (reliable) channels, i.e., a message that was sent through one of these (reliable) channels, is guaranteed to be delivered by the environment in finite time.
This also holds for the delivery of messages to and from byzantine agents.
Since a byzantine agent can always "choose" to ignore any messages it receives anyway, this does not restrict its byzantine power to exhibit arbitrary behaviour.
Formally, we define a set of (reliable) channels as $C \subseteq \agents^2$.\looseness=-1

The reliable communication property will be ensured by the admissibility condition $\EDelC$, which is a liveness property.
\begin{definition}[Eventual Message Delivery]
    \index{$\EDelC$}
    \label{def:EDelC}
	\begin{equation}\label{eq:EDelC}
\adjustbox{width=.88\textwidth}{$		\begin{aligned}
			\EDelC &= \left\{r \in \system{} \,\bigg|\, \bigg(\Big(\gsend{i}{j}{\mu}{id} \in r_\epsilon(t) \quad \vee \right. \\
			&(\exists A \in \{\tick\}\sqcup\gactions[i])\, \fakeof{i}{\mistakefor{\gsend{i}{j}{\mu}{id}}{A}}\in r_\epsilon(t) \Big) \quad \wedge \\
			&\left.(i,j) \in C\bigg) \longrightarrow (\exists t' \in \mathbb{N})\ \grecv{j}{i}{\mu}{id} \in r_\epsilon(t') \right\}
		\end{aligned}$}
	\end{equation}
\end{definition}

\begin{definition}
    \index{$\extension[RC_C]$}
    \label{def:reliable_channel_extension}
    We define by
$		\extension[RC_C] \ce \bigl(\envprotocols\times\agprotocols, 2^{\globalstates(0)} \setminus \{\varnothing\}, \transitionfrom[N]{}{}, \EDelC\bigr)
$
	the \textbf{reliable communication} extension.
\end{definition}

\subsection{Time-bounded Communication}
\label{sec:tb_communication}
\newcommand{\net}{{\Delta}}
\newcommand{\netmapsto}{$\mapsto^{\net}$}
\newcommand{\timebound}[4]{%
    \delta_{#1\mapsto#2}\ifstrempty{#3}{}{\left({#3},{#4}\right)}}
\newcommand{\TB}{TB}

We say that communication is time-bounded if for every channel and for every message there is an upper-bound (possibly infinite) on the transmission time.
Since the transmission is not reliable a priori, the \textbf{time-bounded communication} extension only specifies the time window during which the delivery of a message can occur.
In order to gain flexibility, bounds can be changed depending on the sending time and depending on the message too---for instance a byte of data and picture will not have the same time bound.
We encode these bounds in an upper-bound structure defined as follows:
\begin{definition}
	\label{def:timebound}
    \index{$\timebound{i}{j}{}{}$}
    \index{$\net$}
    For the first infinite ordinal number $\omega$,  agents $(i,j)\in\agents^2$, and the channel $i\mapsto j$, we define the message transmission upper-bound for the channel $i\mapsto j$ as follows
 $       \timebound{i}{j}{}{} \colon \msgs\times\mathbb{N} \to  \mathbb{N}\cup\{\omega\}.
 $
    We define an upper bound structure as
$		\net \ce \bigcup_{(i,j)\in\agents^2}{\{\timebound{i}{j}{}{}{}}\}
$.
\end{definition}

Since, as we soon show, the time-bounded safety property is downward closed, we implement it by restriction of the set of environment protocols.\looseness=-1
\begin{definition}
	For an upper-bound structure $\net$, we define the set of \textbf{time-bounded communication environment protocols} as
	\label{def:tbc_env_prot}
    \begin{multline} \label{eq:tbc_env_prot}
			\envprotocols[TC_\net] \ce \{P_\epsilon \in \envprotocols \mid (\forall t \in \mathbb{N})(\forall X_\epsilon \in P_\epsilon(t))\\
				\grecv{j}{i}{\mu}{id(i,j,\mu,k,t')} \in X_\epsilon \ \rightarrow \ t' + \timebound{i}{j}{\mu}{t'} \ge t\}.
    \end{multline}
\end{definition}

\begin{definition} \label{def:timebounded_com_extension}
    \index{$\extension[TC_\net]$}
    For an upper-bound structure $\net$
	\begin{equation*} 
		\extension[TC_\net] \ce (\envprotocols[TC_\net]\times\agprotocols, 2^{\globalstates(0)} \setminus \{\varnothing\}, \transitionfrom[N]{}{}, \System)
	\end{equation*}
	denotes the \textbf{time-bounded communication} extension.
\end{definition}

\begin{lemma} \label{lem:saf_tc_downward_closed}
	$S^{TC_\net}$ is downward closed.
\end{lemma}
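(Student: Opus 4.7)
The plan is to unpack what $S^{TC_\net}$ means operationally and then show that both constraints defining it, namely $t$-coherence of event sets and the time-bound condition on receive events, are preserved under taking subsets. By the bijection~$F$ from Def.~\ref{def:bij_map_tr_to_op} and Lemma~\ref{lem:f_is_bijective}, a tuple $(X_\epsilon, X_\agents) \in S^{TC_\net}(h)$ if and only if there is an agent-context $\chi \in \extension[TC_\net]$ and a run $r \in \system{\chi}$ and a timestamp $t \in \bbbt$ such that $r(t)=h$ and $\betaag{}{r}{t}=(X_\epsilon,X_\agents)$. I therefore fix such $r$, $\chi=((\envprotocol{},\globalinitialstates[\chi],\transitionfrom[N]{}{},\System),\joinprotocol{})$, $t$, together with subsets $X'_\epsilon \subseteq X_\epsilon$ and $X'_i \subseteq X_i$ for each $i \in \agents$, and construct $r'$ witnessing $(X'_\epsilon,X'_\agents) \in S^{TC_\net}(h)$.

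To build $r'$, I define modified protocols $\envprotocol{}{}'$ and $\joinprotocol{}{}'$ that agree with $\envprotocol{}$ and $\joinprotocol{}$ on all rounds needed to reach $h$ (so that the adversary choices from $r$ still yield $r'(t')=r(t')$ for $t'\le t$), but additionally include $X'_\epsilon \in \envprotocol{}{}'(t)$ and $X'_i \in \joinprotocol{}{}'_i(r_i(t))$ in round $t$\textonehalf{}, which the adversary then picks. Past round $t$\textonehalf{}, the protocols can be completed arbitrarily since $\Admissibility[TC_\net]=\System$ only requires transitionality, which is automatic. Because the transition template uses the neutral filters $\filtere[N]{}{}$ and $\filterag[N]{i}{}{}$ from Def.~\ref{def:neut_filters}, nothing is stripped, and the chosen subsets appear verbatim as $\betae{r'}{t}=X'_\epsilon$ and $\betaag{i}{r'}{t}=X'_i$.

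The two things left to check are (a) $\envprotocol{}{}' \in \envprotocols[TC_\net]$ and (b) $X'_\epsilon$ is $t$-coherent, so that adding it to $\envprotocol{}{}'(t)$ is legitimate. For~(a), the time-bound condition in~\eqref{eq:tbc_env_prot} is a universal constraint on the receive events contained in each offered set: if every $\grecv{j}{i}{\mu}{id(i,j,\mu,k,t')} \in X_\epsilon$ satisfies $t'+\timebound{i}{j}{\mu}{t'}\ge t$, then the same holds a fortiori for the smaller set $X'_\epsilon \subseteq X_\epsilon$. For~(b), a direct inspection of the five clauses of Def.~\ref{def:t-coherence} shows each is of the form ``no bad pattern appears in the set,'' which is trivially preserved under taking subsets. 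Consequently $\envprotocol{}{}' \in \envprotocols[TC_\net]$, the new agent-context $\chi'$ lies in $\extension[TC_\net]$, and $r' \in \system{\chi'}$ witnesses $(X'_\epsilon,X'_\agents) \in S^{TC_\net}(h)$.

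No genuine obstacle arises: the argument is essentially bookkeeping, since the time-bound constraint is downward closed in its variable by construction and the neutral filters do not discard any adversarial choice. The only mildly delicate point is to make sure that enlarging $\envprotocol{}$ and $\joinprotocol{}$ to also include the subsets as adversarial options does not interfere with the initial segment leading to~$h$, which is immediate because adversarial non-determinism lets us reuse the original choices in rounds $0\textonehalf{},\dots,(t-1)\textonehalf{}$.
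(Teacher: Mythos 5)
Your proof is correct and follows essentially the same route as the paper's: both arguments reduce the claim to the observations that the joint protocols of $\extension[TC_\net]$ are unrestricted (so action subsets are unproblematic) and that the defining condition of $\envprotocols[TC_\net]$ in~\eqref{eq:tbc_env_prot} is a universally quantified constraint over the receive events in a set, hence automatically inherited by any $X'_\epsilon \subseteq X_\epsilon$. The only cosmetic differences are that the paper argues by contradiction where you construct the witnessing run directly, and that you make explicit the $t$-coherence check and the extension of the modified protocols beyond round $t$\textonehalf{}, both of which the paper leaves implicit.
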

\begin{proof}
	Suppose that by contradiction $S^{TC_\net}$ is not downward closed.
	This implies $X' \notin S^{TC_\net}(h)$ for some $h \in \globalstates$,  $X \in S^{TC_\net}$, and  $X' \subseteq X$. 
	It immediately follows that $X' \subset X$.
	Since in $\transitionfrom[N]{}{}$ the neutral (event and action) filters are used we further get that there are $P_\epsilon \in \envprotocols[TC_\net]$ and $P \in \agprotocols$,
$X_\epsilon \in P_\epsilon(|h|)$,  $X_i \in P_i(h_i)$ for all $i \in \agents$, and 
$			X = X_\epsilon \sqcup X_1 \sqcup \dots \sqcup X_n$.
%
	Since the set of joint protocols is unrestricted there exists some joint protocol~$P'$ ensuring that together with $P_\epsilon$,
$X_\epsilon \sqcup X'_1 \sqcup \dots \sqcup X'_n \in S^{TC_\net}(h)$ for all 
		$X'_i \subseteq X_i$, $i \in \agents$.
	Therefore, we conclude that the violation of $X'$ has to be caused by some $X'_\epsilon \subset X_\epsilon = X \sqcup \gevents$.

	From $X \in S^{TC_\net}(h)$ we conclude that
	\begin{equation} \label{eq:tbc_timebound_holds_for_X}
		\grecv{j}{i}{\mu}{id(i,j,\mu,k,t')} \in X_\epsilon \ \rightarrow \ t' + \timebound{i}{j}{\mu}{t'} \ge |h|.
	\end{equation}

	By semantics of "$\rightarrow$" and since $X'_\epsilon \subseteq X_\epsilon$ we get
\begin{equation} \label{eq:tbc_X'_implies_X}	
\adjustbox{width=.88\textwidth}{	
$			\bigl(\grecv{j}{i}{\mu}{id(i,j,\mu,k,t')} \in X'_\epsilon\bigr) \rightarrow \bigl(\grecv{j}{i}{\mu}{id(i,j,\mu,k,t')} \in X_\epsilon\bigr).$
}	\end{equation}
	Using (\ref{eq:tbc_X'_implies_X}) in (\ref{eq:tbc_timebound_holds_for_X}) we get
	\begin{equation} \label{eq:tbc_X'_implies_X_bound_holds}
\adjustbox{width=.88\textwidth}{	
$		\begin{aligned}
			&\bigl(\grecv{j}{i}{\mu}{id(i,j,\mu,k,t')} \in X'_\epsilon\bigr) \rightarrow \bigl(\grecv{j}{i}{\mu}{id(i,j,\mu,k,t')} \in X_\epsilon\bigr) \rightarrow \\
			&\bigl(t' + \timebound{i}{j}{\mu}{t'} \ge |h|\bigr).
		\end{aligned}$
}
	\end{equation}
	Finally from (\ref{eq:tbc_X'_implies_X_bound_holds}) by transitivity of "$\rightarrow$" we get that
	\begin{equation}
		\bigl(\grecv{j}{i}{\mu}{id(i,j,\mu,k,t')} \in X'_\epsilon \bigr) \rightarrow \bigl( t' + \timebound{i}{j}{\mu}{t'} \ge |h|\bigr).
	\end{equation}
	Hence, we conclude that $X' \in S^{TC_\net}(h)$ and we are done.
\end{proof}

\begin{corollary} \label{cor:tbc_impl_ejpdc}
	$\extension[TC_\net] \in \ejpdc$.
\end{corollary}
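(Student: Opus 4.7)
The plan is to unfold the defining conditions of $\ejpdc$ from Def.~\ref{def:impl_classes_extended} and check them one by one against the concrete tuple in Def.~\ref{def:timebounded_com_extension}. Recall that $\ejpdc \ce \{\extension[\alpha] \in \ejp \mid S^\alpha \text{ is downward closed}\}$, so I must establish (a) $\extension[TC_\net] \in \ejp$ and (b) downward closure of $S^{TC_\net}$. Part (b) is already in hand via Lemma~\ref{lem:saf_tc_downward_closed}, so the work reduces to part~(a).

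To verify $\extension[TC_\net] \in \ejp$, I unpack the three requirements from Def.~\ref{def:impl_classes_extended}: the transition template must be $\transitionfrom[N,N]{}{}$, the protocol-pair set must be a proper subset $PP^\alpha \subsetneq \pprotocols$, and the extension must not already lie in $\jp$. The first is immediate since Def.~\ref{def:neut_filters} identifies $\transitionfrom[N]{}{}$ with $\transitionfrom[N,N]{}{}$, and this is exactly the template used in Def.~\ref{def:timebounded_com_extension}. The third follows once the second is shown: an extension in $\jp$ would need the full $\envprotocols$ in the first coordinate, which fails as soon as $\envprotocols[TC_\net] \subsetneq \envprotocols$.

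The main step is therefore the proper inclusion $\envprotocols[TC_\net] \subsetneq \envprotocols$, which in turn gives $PP^{TC_\net}=\envprotocols[TC_\net]\times\agprotocols \subsetneq \envprotocols\times\agprotocols = \pprotocols$. The inclusion is immediate from Def.~\ref{def:tbc_env_prot}. For strictness I would exhibit a single $P_\epsilon \in \envprotocols \setminus \envprotocols[TC_\net]$: pick any channel $(i,j)$, message $\mu$, and sending time $t'$ for which $\timebound{i}{j}{\mu}{t'}$ is finite, and construct a $P_\epsilon$ that, at some timestamp $t > t' + \timebound{i}{j}{\mu}{t'}$, admits an event set containing $\grecv{j}{i}{\mu}{id(i,j,\mu,k,t')}$. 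This directly violates the implication in~(\ref{eq:tbc_env_prot}).

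The only potential obstacle is the degenerate case where $\timebound{i}{j}{\mu}{t'}=\omega$ for every tuple $(i,j,\mu,t')$, in which case the time-bound constraint becomes vacuous and $\envprotocols[TC_\net]=\envprotocols$; here one tacitly assumes a nontrivial upper-bound structure $\net$ (otherwise the extension itself is uninteresting, coinciding with the neutral extension and belonging to $\jp$ rather than $\ejp$). Under that standing assumption the witness above exists, the proper inclusion holds, and combining (a) with Lemma~\ref{lem:saf_tc_downward_closed} for~(b) yields $\extension[TC_\net] \in \ejpdc$. \qed
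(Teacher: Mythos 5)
Your proof is correct and takes essentially the same approach as the paper, whose entire proof is the one-liner ``Follows from Def.~\ref{def:timebounded_com_extension} and Lemma~\ref{lem:saf_tc_downward_closed}'' — you have simply carried out in full the verification that this one-liner gestures at (unfolding $\ejpdc$ into membership in $\ejp$ plus downward closure, and checking the neutral template and the proper inclusion $\envprotocols[TC_\net] \subsetneq \envprotocols$). Your remark about the degenerate case where $\timebound{i}{j}{\mu}{t'}=\omega$ for all arguments, so that $\envprotocols[TC_\net]=\envprotocols$ and the $\ejp$ membership would technically fail, identifies a real edge case that the paper silently ignores; your standing nontriviality assumption on $\net$ is a reasonable way to close it.
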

\begin{proof}
	Follows from Def.~\ref{def:timebounded_com_extension} and Lemma \ref{lem:saf_tc_downward_closed}.
\end{proof}

\subsection{Synchronous Communication}
\label{sec:synchronous_communication}
The \textbf{Synchronous Communication} extension guarantees for a set of synchronous communication channels $C \subseteq \agents^2$ that whenever a message is correctly received, it has been sent during the same round.
This means that it is a special case of the time-bounded communication extension.
\begin{definition}[Synchronous Communication Environment Protocols]
	We define the synchronous message delay as
	\begin{equation*}
		\delta^{\mathit{SC}_C}_{i \mapsto j}(\mu,t) \ce
		\begin{cases}
			0 &\text{if } (i,j) \in C \\
			\omega &\text{otherwise}
		\end{cases}
	\end{equation*}
	We define the synchronous communication upper bound structure $\net^{\mathit{SC}_C}$ as
$		\net^{\mathit{SC}_C} \ce \bigcup_{(i,j)\in\agents^2}\{\delta^{\mathit{SC}_C}_{i \mapsto j}\}$.
    \begin{equation} \label{eq:sc_env_prot}
		\envprotocols[\mathit{SC}_C] \ce \envprotocols[TC_{\net^{\mathit{SC}_C}}]
    \end{equation}
\end{definition}    

\begin{definition}
    \index{$\extension{SC_C}$}
    \label{def:sc_extension}
    We denote by
$		\extension[\mathit{SC}_C] \ce \left(\envprotocols[\mathit{SC}_C]\times\agprotocols, 2^{\globalstates(0)} \setminus \{\varnothing\}, \transitionfrom[N]{}{}, \System\right)
$ 
	the \textbf{synchronous communication} extension.
\end{definition}

\begin{lemma} \label{saf_sc_downward_closed}
	$S^{\mathit{SC}_C}$ is downward closed.
\end{lemma}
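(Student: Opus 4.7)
The plan is to reduce this lemma directly to Lemma~\ref{lem:saf_tc_downward_closed}, observing that the synchronous communication extension is, by construction, a particular instance of the time-bounded communication extension. First, I would unpack the definitions: the upper-bound structure $\net^{\mathit{SC}_C}$ assigns delay $0$ to every channel $(i,j)\in C$ and $\omega$ to every other channel, and by the defining equation~\eqref{eq:sc_env_prot} we have $\envprotocols[\mathit{SC}_C] = \envprotocols[TC_{\net^{\mathit{SC}_C}}]$. Inspecting Definitions~\ref{def:timebounded_com_extension} and~\ref{def:sc_extension}, all remaining components of the two extensions coincide literally: both use the full set of joint protocols $\agprotocols$, both take the set of initial states to be $2^{\globalstates(0)}\setminus\{\varnothing\}$, both use the neutral transition template $\transitionfrom[N]{}{}$, and both use the trivial admissibility condition $\System$. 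Hence $\extension[\mathit{SC}_C] = \extension[TC_{\net^{\mathit{SC}_C}}]$ componentwise, and in particular the associated safety properties are identical: $S^{\mathit{SC}_C} = S^{TC_{\net^{\mathit{SC}_C}}}$.

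Once this identification is made, I would simply invoke Lemma~\ref{lem:saf_tc_downward_closed} instantiated with $\net \ce \net^{\mathit{SC}_C}$: that lemma shows $S^{TC_\net}$ is downward closed for an \emph{arbitrary} upper-bound structure $\net$, so applying it to $\net^{\mathit{SC}_C}$ yields immediately that $S^{TC_{\net^{\mathit{SC}_C}}}$, and therefore $S^{\mathit{SC}_C}$, is downward closed.

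There is no substantive obstacle to this argument; the only point that needs a line of verification is the componentwise equality of the two extensions, which is immediate from their definitions. In fact, by the same reasoning one obtains the analogue of Corollary~\ref{cor:tbc_impl_ejpdc}, namely $\extension[\mathit{SC}_C] \in \ejpdc$, as a bonus.
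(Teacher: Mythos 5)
Your proof is correct and follows exactly the paper's own argument: the paper likewise disposes of this lemma by noting that $\extension[\mathit{SC}_C]$ is an instance of the time-bounded communication extension via~\eqref{eq:sc_env_prot} and then invoking Lemma~\ref{lem:saf_tc_downward_closed}. Your version merely spells out the componentwise identification of the two extensions more explicitly, which is a harmless elaboration of the same reduction.
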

\begin{proof}
	Follows from Lemma \ref{lem:saf_tc_downward_closed}, as the synchronous communication extension is just an instance of the time-bounded communication extension (\ref{eq:sc_env_prot}).
\end{proof}

\begin{corollary} \label{cor:SC_impl_class_ejpdc}
	$\extension[\mathit{SC}_C] \in \ejpdc$.
\end{corollary}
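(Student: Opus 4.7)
The plan is to verify the two defining conditions of the class $\ejpdc$ from Def.~\ref{def:impl_classes_extended}: membership in $\ejp$ and downward closure of the associated safety property. The proof is essentially immediate, mirroring the structure of Corollary~\ref{cor:tbc_impl_ejpdc} for the time-bounded case, and no fundamentally new machinery is needed.

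First I would unpack Def.~\ref{def:sc_extension}, which gives
$\extension[\mathit{SC}_C] = \left(\envprotocols[\mathit{SC}_C]\times\agprotocols,\, 2^{\globalstates(0)} \setminus \{\varnothing\},\, \transitionfrom[N]{}{},\, \System\right)$,
and check the shape required by $\ejp$ in Def.~\ref{def:impl_classes_extended}: the transition template uses the neutral filters (both event and action), matching $\transitionfrom[N,N]{}{}$; the joint protocol component is the full $\agprotocols$, so the restriction genuinely lies on the environment side and $\extension[\mathit{SC}_C] \notin \jp$; and $\envprotocols[\mathit{SC}_C]\times\agprotocols$ is a proper subset of $\pprotocols$ whenever $C \neq \varnothing$, since the synchronous delay constraints in~\eqref{eq:sc_env_prot} genuinely rule out some environment protocols. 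This yields $\extension[\mathit{SC}_C] \in \ejp$.

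Second, I would invoke Lemma~\ref{saf_sc_downward_closed} for the downward closure of $S^{\mathit{SC}_C}$. Combining the two facts with the definition $\ejpdc \ce \{\extension[\alpha] \in \ejp \mid S^\alpha \text{ is downward closed}\}$ immediately yields $\extension[\mathit{SC}_C] \in \ejpdc$.

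There is no real obstacle here; the only point worth stating explicitly is that the whole argument reduces $\mathit{SC}_C$ to the already-handled case $\mathit{TC}_{\net^{\mathit{SC}_C}}$ via~\eqref{eq:sc_env_prot}, so both the class membership and the downward closure are inherited from the time-bounded communication extension. If the presentation demands it, I would simply write ``Follows from Def.~\ref{def:sc_extension} and Lemma~\ref{saf_sc_downward_closed},'' in perfect parallel with Corollary~\ref{cor:tbc_impl_ejpdc}.
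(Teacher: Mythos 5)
Your proposal is correct and takes essentially the same route as the paper, whose proof is precisely the one-liner ``Follows from Def.~\ref{def:sc_extension} and Lemma~\ref{saf_sc_downward_closed}.'' Your additional unpacking of the $\ejp$ membership conditions and the reduction to the time-bounded case via~\eqref{eq:sc_env_prot} is a faithful elaboration of what that one-liner leaves implicit.
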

\begin{proof}
	Follows from Def.~\ref{def:sc_extension} and Lemma \ref{saf_sc_downward_closed}.
\end{proof}

\subsection{Multicast Communication}
\label{sec:multicast_communication}
In the \textbf{multicast communication} paradigm, each agent has several multicast channels at its disposal and is restricted to sending messages using these particular channels.
In this section, we provide a software based multicast, meaning that only correct agents have to adhere to this behavior (further along we provide a hardware based multicast as well, where also byzantine agents are forced to exhibit this multicast behavior).

First, we define a \textbf{multicast communication problem}.
For each $i\in \agents$ we define a collection $Mc_i$ of groups of agents it can send messages to. 
\begin{definition}
    For each $i\in \agents$ the set of available multicast channels is
    $    Mc_i \subseteq 2^{\agents}\setminus \{\varnothing\}$.
    The \textbf{multicast communication problem} is the tuple of these collections of communication channels
       $ Ch = (Mc_1,\dots,Mc_n)$.
\end{definition}

\newcommand{\recipient}[2]{Rec_{#1}(#2)}
We denote the set of recipients for the copy $\mu_k$ of a message $\mu$   that has been sent according to some set $X \subseteq \actions$ by 
$	\recipient{X}{\mu_k} = \{j \mid \send{j}{\mu_k} \in X \}$.

Since we implement a software based multicast (and since we want our extensions to be modular) we  use a restriction of the joint protocol to do so.
\begin{definition}
    For a multicast communication problem $Ch$, we define the set of \textbf{multicast joint protocols} as
    \index{${\agprotocols[MC_{Ch}]}$}
    \begin{equation}
        \label{eq:mc_ag_protocols}
\adjustbox{width=.88\textwidth}{	
	$	\begin{aligned}
        \agprotocols[\mathit{MC}_{Ch}] = \{
           (\agprotocol{1}{},\dots,\agprotocol{n}{})\in \agprotocols \mid  &(\forall i \in \agents)(\forall h_i \in \localstates{i})(\forall X \in \agprotocol{i}{h_i})(\forall \mu \in \msgs)(\forall k \in \mathbb{N}) \\
																		   &\recipient{X}{\mu_k} \ne \varnothing \ \rightarrow \ \recipient{X}{\mu_k}\in Mc_i \}.
		\end{aligned}$}
    \end{equation}
\end{definition}

\begin{definition}
    \index{$\extension[MC_{Ch}]$}
    \label{def:multicast_extension}
    For a multicast communication problem $Ch$, we set 
$		\extension[\mathit{MC}_{Ch}] \ce \bigl(\envprotocols\times\agprotocols[\mathit{MC}_{Ch}], 2^{\globalstates(0)} \setminus \{\varnothing\}, \transitionfrom[N,B]{}{}, \System\bigr)$
to be the \textbf{multicast communication} extension
	where in $\transitionfrom[N,B]{}{}$ the neutral event and the byzantine action filters are used (for all $i \in \agents$).
\end{definition}

An important special case of the multicast communication problem is the broadcast communication problem, where each agent must broadcast each message to all the agents:
\begin{definition}
    \index{$\extension[BC]$}
    \label{def:broadcast_extension}
The \textbf{broadcast communication extension} $\extension[BC]$ is a multicast communication extension $\extension[\mathit{MC}_{\mathit{BCh}}] $ for 
	\begin{equation} \label{eq:mc_bc}
		\mathit{BCh}= (\underbrace{\{\agents\},\dots,\{\agents\}}_n)
	\end{equation}
$		\extension[BC]= (\envprotocols\times\agprotocols[\mathit{MC}_{\mathit{BCh}}], 2^{\globalstates(0)} \setminus \{\varnothing\}, \transitionfrom[N,B]{}{}, \System)$,
	where in $\transitionfrom[N,B]{}{}$ the neutral event filter and the byzantine action filters (for all $i \in \agents$) are used.
\end{definition}

\begin{corollary} \label{cor:mc_jpfb}
	$\extension[\mathit{MC}_{Ch}] \in \jpfb$.
\end{corollary}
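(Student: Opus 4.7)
The plan is to unfold the two definitions and check that every component of $\extension[\mathit{MC}_{Ch}]$ matches the template prescribed for membership in $\jpfb$. Per Def.~\ref{def:impl_classes_extended}, $\jpfb$ consists of extensions of the shape $(\envprotocols\times\agprotocols[\alpha], IS^\alpha, \transitionfrom[N,B]{}{}, \Admissibility[\alpha])$ with $\agprotocols[\alpha]\subset\agprotocols$; and per Def.~\ref{def:multicast_extension}, the multicast extension is $(\envprotocols\times\agprotocols[\mathit{MC}_{Ch}],\, 2^{\globalstates(0)}\setminus\{\varnothing\},\, \transitionfrom[N,B]{}{},\, \System)$. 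So the environment-protocol slot is $\envprotocols$, the transition template is literally $\transitionfrom[N,B]{}{}$ (neutral event filter and standard/byzantine action filters), the admissibility condition $\System$ and initial-state family $2^{\globalstates(0)}\setminus\{\varnothing\}$ are unrestricted. The only nontrivial bookkeeping is the joint-protocol slot.

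First I would verify $\agprotocols[\mathit{MC}_{Ch}]\subseteq\agprotocols$, which is immediate from~\eqref{eq:mc_ag_protocols} since the defining set comprehension ranges over $\agprotocols$ itself. Then I would argue that the inclusion is strict by exhibiting a joint protocol in $\agprotocols\setminus\agprotocols[\mathit{MC}_{Ch}]$: pick any agent $i$ and any set $R\subseteq\agents$ with $R\notin Mc_i$ (such an $R$ exists because $Mc_i\subsetneq 2^{\agents}\setminus\{\varnothing\}$ in the interesting case, otherwise the statement is trivial in the opposite direction and requires only the non-strict containment that Def.~\ref{def:impl_classes_extended} literally requires — I would note this edge case explicitly). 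Then a protocol whose range at some local state contains a set $X$ prescribing $\send{j}{\mu_k}$ precisely for $j\in R$ lies in $\agprotocols$ but violates~\eqref{eq:mc_ag_protocols}.

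Finally I would discharge the non-emptiness requirement from Def.~\ref{def:extension}, namely that there exists an agent-context $\chi\in\extension[\mathit{MC}_{Ch}]$ with $\System[\chi]\neq\varnothing$. This is witnessed by taking any environment protocol $\envprotocol{}\in\envprotocols$ together with a trivial joint protocol $\joinprotocol{}\in\agprotocols[\mathit{MC}_{Ch}]$ (e.g., one that sends no messages at all, so that~\eqref{eq:mc_ag_protocols} holds vacuously), an arbitrary nonempty set of initial global states, and $\Psi=\System$; the resulting weakly consistent run built round-by-round under $\transitionfrom[N,B]{}{}$ trivially belongs to $\System$.

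I do not expect a substantial obstacle; the only subtle point is the strictness of the containment $\agprotocols[\mathit{MC}_{Ch}]\subsetneq\agprotocols$. The body of Def.~\ref{def:impl_classes_extended} for $\jpfb$ uses ``$\subset$'', which I read as strict; the construction above handles it whenever $Mc_i\neq 2^{\agents}\setminus\{\varnothing\}$ for some $i$, and in the degenerate remaining case one would either relax ``$\subset$'' to ``$\subseteq$'' or observe that the multicast restriction is vacuous and the extension coincides with the unrestricted one, which can be placed in $\jpfb$ by a trivial re-indexing.
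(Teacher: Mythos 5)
Your proposal is correct and takes essentially the same approach as the paper, whose entire proof is simply ``Follows from Def.~\ref{def:multicast_extension}'', i.e., unfolding the two definitions and matching components slot by slot. Your extra care about the strictness of $\agprotocols[\mathit{MC}_{Ch}]\subset\agprotocols$ and the non-emptiness requirement of Def.~\ref{def:extension} elaborates points the paper leaves implicit, but it is the same definitional check.
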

\begin{proof}
	Follows from Def.~\ref{def:multicast_extension}.
\end{proof}

\subsection{Lock-step Synchronous Agents}

\begin{table}[t]
	\caption{filter dependencies in the lock-step synchronous agents extension}
	\label{tab:filter_dep}
	\begin{tabular}{| l | l | l |}
		\hline
		\textbf{filter} & \textbf{dependency} & \textbf{removal} \\ \hline 
		$\filtere[S]{}{}$ & $go(i), sleep(i), hibernate(i)$ & $go(i)$ \\ \hline
		$\filtere[B]{}{}$ & $go(i), \gsend{i}{j}{\mu}{id}, \fakeof{i}{\mistakefor{\gsend{i}{j}{\mu}{id}}{A}}$ & $\grecv{j}{i}{\mu}{id}$ \\ \hline
	\end{tabular}
\end{table}

Table \ref{tab:filter_dep} reveals that $\filtere[B]{}{}$ depends on $go(i)$ events, which $\filtere[S]{}{}$ removes.
Thus, we have a dependence relation from $\filtere[B]{}{}$ to $\filtere[S]{}{}$.
$\filtere[B]{}{}$ removes only correct receive events $\grecv{j}{i}{\mu}{id}$.
$\filtere[S]{}{}$ is independent of such events, hence, there is no dependence relation from $\filtere[S]{}{}$ to $\filtere[B]{}{}$.\looseness=-1

\begin{figure}
	\begin{center}
		\begin{tikzpicture}[
				> = stealth, 
				shorten > = 1pt, 
				auto,
				node distance = 3cm, 
				semithick 
			]

			\tikzstyle{every state}=[
				draw = black,
				thick,
				fill = white,
				minimum size = 4mm
			]

			\node[state] (b) {$\filtere[B]{}{}$};
			\node[state] (s) [right of=b] {$\filtere[S]{}{}$};

			\path[->] (b) edge node {} (s);
		\end{tikzpicture}
	\end{center}
	\caption{Dependence graph for $\filtere[B]{}{}$ and $\filtere[S]{}{}$}
	\label{fig:dep_B_S}
\end{figure}

Figure \ref{fig:dep_B_S} shows the final dependence graph.
Since there is no circular dependence, we can directly use the composition order given by the graph.
This gives us \looseness=-1
$	\extension[B \circ S] = (\envprotocols \times \agprotocols[S], 2^{\globalstates(0)} \setminus \{\varnothing\}, \transitionfrom[B \circ S, B]{}{}, \system{})$,
where in $\transitionfrom[B \circ S, B]{}{}$ the event filter is $\filtere[B \circ S]{}{}$ and the action filters result in $\filterag[B]{i}{}{}$ for all $i \in \agents$ (by idempotence of the byzantine action filter function).
Following the rest of the extension combination guide finally leads to
\begin{equation}
\adjustbox{width=.88\textwidth}{
$	\extension[B \circ S \circ BC \circ \mathit{SC}_{\agents^2} \circ RC_{\agents^2}] = \bigl(\envprotocols[\mathit{SC}_{\agents^2}]\times(\agprotocols[\mathit{MC}_{\mathit{BCh}}] \cap \agprotocols[S]), 2^{\globalstates(0)} \setminus \{\varnothing\},\transitionfrom[B \circ S, B]{}{}, \EDel_{\agents^2} \bigr).
$}
\end{equation}

\begin{lemma} \label{lem:lss_comb_saf}
	The extensions $\extension[B]$, $\extension[S]$, $\extension[\mathit{SC}_{\agents^2}]$, $\extension[RC_{\agents^2}]$, and $\extension[BC]$ are compatible (w.r.t. the composition $B \circ S \circ BC \circ \mathit{SC}_{\agents^2} \circ RC_{\agents^2}$).
\end{lemma}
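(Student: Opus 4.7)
The plan is to exhibit a single witness agent-context that lies in the composed extension, which by Def.~\ref{def:ext_compat} establishes both the pairwise intersection conditions and the existence of runs. First observe the easy parts: all five extensions use the initial-state set $2^{\globalstates(0)}\setminus\{\varnothing\}$, so this intersection is immediate; and the admissibility intersection is $\EDel_{\agents^2} \cap \system{} \cap \system{} \cap \system{} \cap \system{} = \EDel_{\agents^2}$, which is nonempty. The real task is to produce a protocol pair in $\envprotocols[\mathit{SC}_{\agents^2}] \times (\agprotocols[\mathit{MC}_{\mathit{BCh}}] \cap \agprotocols[S])$ whose consistent run-set under $\transitionfrom[B \circ S, B]{}{}$ is nonempty.

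I would choose a deliberately inert witness. Define the joint protocol $P^{\mathit{triv}}$ that at every local state issues only $\{\{\clock\}\}$. This lies in $\agprotocols[S]$ since $\clock$ is always present (Def.~\ref{def:synch_ag_protocols}), and it lies in $\agprotocols[\mathit{MC}_{\mathit{BCh}}]$ because it prescribes no send actions at all, making condition (\ref{eq:mc_ag_protocols}) vacuous. Define the environment protocol $P_\epsilon^{\mathit{triv}}$ that at every timestamp~$t$ issues only $\bigl\{\{go(1),\ldots,go(n)\}\bigr\}$. Since no receive events ever appear, the synchronous-communication bound in (\ref{eq:tbc_env_prot}) holds vacuously, so $P_\epsilon^{\mathit{triv}} \in \envprotocols[\mathit{SC}_{\agents^2}] \subseteq \envprotocols$.

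It remains to check that the unique run $r$ built from any initial state in $\globalinitialstates$ under $(P_\epsilon^{\mathit{triv}}, P^{\mathit{triv}})$ is transitional and admissible. In each round the adversary must pick $X_\epsilon = \{go(1),\ldots,go(n)\}$ and $X_i = \{\clock\}$ for each $i$. The synchronous event filter $\filtere[S]{}{}$ of Def.~\ref{def:filterSyn} leaves $X_\epsilon$ unchanged because $\sevents[j]\cap X_\epsilon\neq\varnothing$ for every $j$, and the causal filter $\filtere[B]{}{}$ leaves it unchanged because $X_\epsilon$ contains no receive events; thus $\filtere[B\circ S]{}{} (h, X_\epsilon, X_\agents) = X_\epsilon$. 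The byzantine action filter keeps each $\{\clock\}$ because $go(i) \in X_\epsilon$. Hence $r$ is $\transitionfrom[B\circ S,B]{}{}$-transitional, no faults ever occur, and $r \in \EDel_{\agents^2}$ vacuously since no messages are ever sent. Therefore $\chi^{\mathit{triv}} \in \extension[B \circ S \circ BC \circ \mathit{SC}_{\agents^2} \circ RC_{\agents^2}]$.

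The only obstacle is careful bookkeeping across the composed filters, but by deliberately designing the witness to be message-free and fault-free, every filter either acts as the identity or is vacuously satisfied, so the verification is routine rather than subtle.
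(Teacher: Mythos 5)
Your proof is correct and follows essentially the same strategy as the paper's: reduce compatibility to exhibiting a single witness agent-context in the composed extension and then verify the trivial intersection conditions. The only (immaterial) difference is the choice of witness environment protocol — the paper uses $P'_\epsilon(t)=\{\varnothing\}$ so that nothing ever happens, whereas you use $\{\{go(1),\dots,go(n)\}\}$ so that every round is synced and each agent performs only $\clock$; both make every filter act as the identity and render the safety and admissibility conditions vacuous.
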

\begin{proof}
	The only condition from Def.~\ref{def:ext_compat} that does not trivially follow from the definition of the extensions in question is whether there exists an agent context $\chi$, such that
	$\chi \in \extension[B \circ S \circ BC \circ \mathit{SC}_{\agents^2} \circ RC_{\agents^2}]$.
	Such a $\chi$ however can be easily constructed.
	Let $\chi = \left((\envprotocol{}',\globalinitialstates,\transitionfrom[B \circ S, B]{}{}, \EDel_{\agents^2}), \joinprotocol{}'\right)$, where $\envprotocol{}'$ only produces the set containing the empty set and $\joinprotocol{}'$ for every agent produces the set containing the set that only contains the action $\clock$, i.e., 
$		
P'_\epsilon(t) = \{\varnothing\}$ for all $t \in \mathbb{N}$ and $P'(h) = (\{\{\clock\}\},\,\dots,\,\{\{\clock\}\})$ for all $ h \in \globalstates$.
	Note that this agent context is part of the extension $\extension[B \circ S \circ BC \circ \mathit{SC}_{\agents^2} \circ RC_{\agents^2}]$, as \looseness=-1
$		\envprotocol{}' \in \envprotocols[\mathit{SC}_{\agents^2}]$  
and $\joinprotocol{}' \in (\agprotocols[\mathit{MC}_{\mathit{BCh}}] \cap \agprotocols[S])$.
\end{proof}

\begin{lemma} \label{lem:lss_sat_all_saf}
	The extension $\extension[B \circ S \circ BC \circ \mathit{SC}_{\agents^2} \circ RC_{\agents^2}]$ satisfies all safety properties of its constituent extensions.
\end{lemma}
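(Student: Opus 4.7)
The plan is to reduce the claim to a bookkeeping exercise using the composability matrix in Table~\ref{tab:composability_matrix}. Compatibility of the five constituent extensions has already been established in Lemma~\ref{lem:lss_comb_saf}, so what remains is to show that the iterated composition preserves each individual safety property separately.

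First, I classify each constituent extension by its implementation class. By Lemma~\ref{lem:byz_ext_from_efjpfb} and Definition~\ref{def:synchronous_a_extension}, both $\extension[B]$ and $\extension[S]$ belong to $\efjpfb$. By Corollary~\ref{cor:mc_jpfb}, $\extension[BC] \in \jpfb$. By Corollary~\ref{cor:SC_impl_class_ejpdc}, $\extension[\mathit{SC}_{\agents^2}] \in \ejpdc$. Finally, $\extension[RC_{\agents^2}]$ implements its constraint purely via the admissibility condition $\EDel_{\agents^2}$ (a liveness property, cf.\ Def.~\ref{def:EDelC} and Def.~\ref{def:liveness_prop}), so it belongs to $\adm$.

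Next, I proceed inductively along the parenthesization $B \circ (S \circ (BC \circ (\mathit{SC}_{\agents^2} \circ RC_{\agents^2})))$. At each stage $\alpha \circ \beta$, I consult Table~\ref{tab:composability_matrix} to verify that $\alpha$ is reverse composable with $\beta$ (preserving $S^\alpha$) and that $\beta$ is forth composable with $\alpha$ (preserving $S^\beta$), then record the class of the resulting extension. For the innermost step, $\ejpdc \circ \adm$ shows a $c$-entry, so both safety properties are preserved. Going outward, $\jpfb$ composed with the intermediate ($\ejpdc$/$\adm$-level) extension is checked at the matrix intersections of the $\jpfb$ row with the $\ejpdc$ and $\adm$ columns. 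The two outermost $\efjpfb$ compositions against the accumulated extension reduce to reading the $\efjpfb$ row against the $\jpfb$, $\ejpdc$, and $\adm$ columns and the corresponding columns against the $\efjpfb$ row. In each case, one verifies that the entries yield the required direction of composability relative to the position of that extension within the expression, so by induction every one of $S^B$, $S^S$, $S^{BC}$, $S^{\mathit{SC}_{\agents^2}}$, $\EDel_{\agents^2}$ survives the full composition.

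The main obstacle is the directional bookkeeping: forth composability of $\alpha$ with $\beta$ preserves $S^\beta$, while reverse composability preserves $S^\alpha$, and since the expression is nested (not symmetric), one must be careful about which operand each safety property belongs to at each stage and which matrix entry applies. Because $\adm$ always gives a $c$-entry, the reliable communication admissibility condition is trivially carried through. The downward-closed safety property of $\mathit{SC}_{\agents^2}$ is likewise preserved because the classes above it in the composition ($\jpfb$, $\efjpfb$) all give at least forth composability against $\ejpdc$ (column $\ejpdc$ marked $c$/$f$ for the relevant rows). The joint-protocol safety properties of $\extension[BC]$, $\extension[S]$, and $\extension[B]$ persist because composing extensions that act on protocols or on the event filter never re-introduces protocol-violating behaviour, as reflected by the relevant $c$/$r$ entries. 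Assembling these pointwise preservations concludes that the unique extension produced by the full composition adheres simultaneously to all five constituent safety properties. \qed
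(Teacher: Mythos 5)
Your proposal is correct and takes essentially the same approach as the paper, whose entire proof of this lemma is the one-line appeal ``Follows from Table~\ref{tab:composability_matrix}.'' You simply make explicit the implementation-class bookkeeping ($\extension[B],\extension[S]\in\efjpfb$, $\extension[BC]\in\jpfb$, $\extension[\mathit{SC}_{\agents^2}]\in\ejpdc$, $\extension[RC_{\agents^2}]\in\adm$) that the paper leaves implicit.
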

\begin{proof}
	Follows from Table \ref{tab:composability_matrix}.
\end{proof}

Finally, after having proved that the resulting extension $\extension[B \circ S \circ BC \circ \mathit{SC}_{\agents^2} \circ RC_{\agents^2}]$ satisfies all desired properties, we can define it as $\extension[\mathit{LSS}]$.
\begin{definition} \label{def:lss_ext}
    \index{$\extension[\mathit{LSS}]$}
    We define the \textbf{lock-step synchronous agents} extension to be $\extension[\mathit{LSS}]=\bigl(\envprotocols[\mathit{SC}_{\agents^2}]\times(\agprotocols[\mathit{MC}_{\mathit{BCh}}] \cap \agprotocols[S]), 2^{\globalstates(0)} \setminus \{\varnothing\},\transitionfrom[B \circ S, B]{}{}, \EDel_{\agents^2} \bigr)$.
\end{definition}

We will now add a few lemmas about properties, which the lock-step synchronous agents extension inherits from the synchronous agents extension.

\begin{lemma}\label{lem:lss_virtProt}
	An agent $i$ in a lock-step synchronous agents context executes its protocol only during synced rounds, i.e., $go(i) \in \betag[i]{r}{t}$ if{f} $t.5$ is a synced round.\looseness=-1
\end{lemma}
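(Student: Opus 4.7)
The plan is to adapt the proof of Lemma~\ref{lem:virtProt} to cover both directions of the biconditional, exploiting the structure of the composite event filter $\filtere[B \circ S]{}{}$ in the transition template $\transitionfrom[B \circ S, B]{}{}$ of Def.~\ref{def:lss_ext}.

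First, I would verify that, with respect to system events, $\filtere[B \circ S]{}{}$ behaves exactly like $\filtere[S]{}{}$. By Def.~\ref{def:envfilter_intersection}, the composition applies $\filtere[S]{}{}$ first and $\filtere[B]{}{}$ second; inspecting~\eqref{eq:filter_function} reveals that $\filtere[B]{}{}$ only removes causally unsupported receive events, leaving every event in $\sevents$ untouched. Moreover $\filtere[S]{}{}$ itself only ever removes $go$ events (never $\sleep{i}$ or $\hib{i}$). Hence the system-event content of $\betae{r}{t}$ is controlled entirely by $\filtere[S]{}{}$, and the analysis reduces to that of the pure synchronous agents setting.

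For the forward direction ($\Rightarrow$), assume $go(i) \in \betag[i]{r}{t}$. By Def.~\ref{def:filterSyn}, $\filtere[S]{}{}$ retains $go$ events only when the guard condition $\sevents[j] \cap X_\epsilon \ne \varnothing$ holds for every $j \in \agents$ in the adversarially chosen $X_\epsilon$. Since non-$go$ system events always survive both filters, this guard entails $\betag[j]{r}{t} \ne \varnothing$ for every $j$, so $t$\textonehalf{} is a synced round by Def.~\ref{def:virtRound}; this mirrors the reasoning already recorded in the proof of Lemma~\ref{lem:virtProt}.

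For the reverse direction ($\Leftarrow$), assume $t$\textonehalf{} is synced, so $\betag[j]{r}{t} \ne \varnothing$ for every $j$. As filters never add events, the adversary's choice $X_\epsilon$ must satisfy $\sevents[j] \cap X_\epsilon \ne \varnothing$ for every $j$; hence by Def.~\ref{def:filterSyn} the synchronous filter leaves $X_\epsilon$ unchanged, so any $go(i) \in X_\epsilon$ is retained in $\betag[i]{r}{t}$. The main obstacle here is ruling out that the single system event chosen for $i$ might be $\sleep{i}$ or $\hib{i}$ instead of $go(i)$. I would dispatch this by appealing to $t$-coherence (Def.~\ref{def:t-coherence}), which permits at most one $\sevents[i]$ event per round, combined with the reading of the lemma under which the pertinent system event issued for a participating agent $i$ in a synced round is $go(i)$ rather than a sync error, consistently with how Lemma~\ref{lem:virtProt} is applied in the synchronous setting.
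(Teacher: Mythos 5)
Your proposal reaches the same conclusion but by a genuinely different route. The paper's own proof is a one-line appeal to modularity: the statement of Lemma~\ref{lem:virtProt} is treated as part of the safety property $S^S$ of the synchronous agents extension, and Lemma~\ref{lem:lss_sat_all_saf} (backed by the composability matrix in Table~\ref{tab:composability_matrix}) guarantees that $\extension[\mathit{LSS}]$ satisfies all safety properties of its constituent extensions, so the property transfers for free. You instead re-derive the property from scratch by unfolding the composite filter $\filtere[B \circ S]{}{}$ and checking that $\filtere[B]{}{}$ never touches system events, so that the system-event analysis collapses to the pure synchronous case. Both are legitimate; the paper's route showcases the extension framework (which is the point of the section), while yours is more self-contained and does not lean on the correctness of the composability table. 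Your observation that the composition applies $\filtere[S]{}{}$ first and $\filtere[B]{}{}$ second, and that $\filtere[B]{}{}$ only ever removes receive events, is exactly the content of the dependence analysis in Table~\ref{tab:filter_dep} and Figure~\ref{fig:dep_B_S}, so that part is fully aligned with the paper.

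The one genuine weak spot is your reverse direction. A synced round only guarantees $\betag[j]{r}{t} \ne \varnothing$ for every $j$, i.e., each agent receives \emph{some} event from $\sevents[j]$; for a faulty agent $i$ this may well be $\sleep{i}$ or $\hib{i}$ rather than $go(i)$, since Def.~\ref{def:virtRound} explicitly allows a faulty agent to skip a synced round. Your attempt to dispatch this by appealing to ``the reading of the lemma under which the pertinent system event issued for a participating agent in a synced round is $go(i)$'' assumes exactly what is to be proved, so that direction is not established. In fairness, the paper does not establish it either: Lemma~\ref{lem:virtProt} is only argued in the ``$go(i)$ occurs only in synced rounds'' direction, and the literal ``if{f}'' in the statement of Lemma~\ref{lem:lss_virtProt} is an overstatement that fails for faulty agents. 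The converse that is actually needed downstream (e.g., in Theorem~\ref{thm:lock_step_synch_gl_int_K_faulty_appendix}) is restricted to \emph{correct} agents, for which $\sleep{i}$ and $\hib{i}$ are excluded precisely because they would render the agent faulty --- the argument spelled out in Lemma~\ref{lem:virtPreAware}; if you add that correctness hypothesis, your reverse direction goes through.
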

\begin{proof}
 Lemma~\ref{lem:virtProt} for the synchronous agents extension describes a property of $S^S$ that by Lemma \ref{lem:lss_sat_all_saf}, $\extension[\mathit{LSS}]$ satisfies.
\end{proof}

\begin{lemma}\label{lem:lss_virtPreAware}
	For a correct agent $i$,
	a $\transitionExt[\envprotocol{}^{\mathit{SC}_{\agents^2}}, \joinprotocol{}^{\mathit{SMC}_{\mathit{BCh}}}]{B \circ S, B}{}$-transitional run $r$ (where $\envprotocol{}^{\mathit{SC}_{\agents^2}} \in \envprotocols[\mathit{SC}_{\agents^2}]$ and $\joinprotocol{}^{\mathit{SMC}_{\mathit{BCh}}} \in \agprotocols[S] \cap \agprotocols[\mathit{MC}_{\mathit{BCh}}]$),
	some timestamp $t' \ge 1$,
	agent~$i$'s local history $r_i(t') = h_i = [\lambda_m, \dots, \lambda_1, \lambda_0]$ (given the global history $h = r(t') \in \globalstates$) and
	some round $(t-1)$\textonehalf{} ($t' \ge t \ge 1$),
	there exists some $a \in \actions[i]$ such that $a \in \lambda_{k_t}$ where $\lambda_{k_t} = \sigmaof{\betae[i]{r}{t-1} \sqcup \betaag{i}{r}{t-1}}$ if and only if $(t-1)$\textonehalf{} is a synced round.
\end{lemma}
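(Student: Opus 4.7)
The plan is to mirror the proof of Lemma~\ref{lem:virtPreAware} essentially verbatim, leveraging the fact that $\extension[\mathit{LSS}]$ inherits the relevant synchronous agents properties. Concretely, by Lemma~\ref{lem:lss_sat_all_saf}, the agent context $\chi$ built from $\envprotocol{}^{\mathit{SC}_{\agents^2}}$ and $\joinprotocol{}^{\mathit{SMC}_{\mathit{BCh}}}$ satisfies the safety property $S^S$ of the synchronous agents extension, so in particular $\joinprotocol{}^{\mathit{SMC}_{\mathit{BCh}}} \in \agprotocols[S]$, which means every action set in every local state contains $\clock$ (Def.~\ref{def:synch_ag_protocols}). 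Moreover, Lemma~\ref{lem:lss_virtProt} already provides the LSS analogue of Lemma~\ref{lem:virtProt}: a correct agent receives $go(i)$ in round~$t$\textonehalf{} if and only if that round is synced.

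For the forward direction, I would assume $(t-1)$\textonehalf{} is a synced round. Since $i$ is correct, $\betag[i]{r}{t-1}$ cannot contain $\sleep{i}$ or $\hib{i}$ (these are faulty events from $\fevents[i]$), so by Def.~\ref{def:virtRound} and Lemma~\ref{lem:lss_virtProt} we must have $\betag[i]{r}{t-1} = \{go(i)\}$. Because $\joinprotocol{}^{\mathit{SMC}_{\mathit{BCh}}}$ is a synchronous joint protocol, every chosen action set contains $\clock$ and is therefore non-empty, so the standard action filter (Def.~\ref{def:std_ac_filter}) does not remove those actions. Hence $\betaag{i}{r}{t-1} \neq \varnothing$, and by the update rule in~\eqref{eq:update_agent} together with Def.~\ref{def:sig_loc_fun}, at least $\clock$ (or any other scheduled action) appears in $\lambda_{k_t} = \sigmaof{\betae[i]{r}{t-1} \sqcup \betaag{i}{r}{t-1}}$.

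For the backward direction, I would assume that some $a \in \actions[i]$ belongs to $\lambda_{k_t}$. By Def.~\ref{def:sig_loc_fun}, this action must originate either from $\betaag{i}{r}{t-1}$ or from a byzantine event of the form $\fakeof{i}{\mistakefor{A}{A'}}$ in $\betae[i]{r}{t-1}$. Since $i$ is correct, no such byzantine event can occur, so the action must come from $\betaag{i}{r}{t-1}$. The standard action filter (Def.~\ref{def:std_ac_filter}) only lets actions through when $go(i) \in \betae[i]{r}{t-1}$, so we obtain $go(i) \in \betag[i]{r}{t-1}$, and Lemma~\ref{lem:lss_virtProt} then forces $(t-1)$\textonehalf{} to be synced.

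The proof is essentially routine bookkeeping once the right lemmas are assembled; the only mild obstacle is being careful that the correctness assumption on $i$ rules out byzantine origins of action records in~$\lambda_{k_t}$ in the backward direction, and rules out $\sleep{i}$ and $\hib{i}$ as the sole system events in the forward direction so that the synced round is witnessed specifically by~$go(i)$. No new machinery beyond Lemmas~\ref{lem:lss_virtProt} and~\ref{lem:lss_sat_all_saf} is needed.
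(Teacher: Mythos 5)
Your proof is correct, but it takes a more labor-intensive route than the paper. The paper's own proof of this lemma is a one-line inheritance argument: the statement is a safety property of the synchronous agents extension $\extension[S]$ (it constrains what may appear in any finite prefix, round by round), it has already been established there as Lemma~\ref{lem:virtPreAware}, and by Lemma~\ref{lem:lss_sat_all_saf} the composed extension $\extension[\mathit{LSS}]$ satisfies all safety properties of its constituents, including $S^S$. You instead re-derive both directions from scratch in the LSS setting, mirroring the proof of Lemma~\ref{lem:virtPreAware}: correctness of $i$ plus syncedness forces $\betag[i]{r}{t-1}=\{go(i)\}$, the $\clock$ requirement of $\agprotocols[S]$ guarantees a nonempty action set surviving the standard action filter, and conversely an action in $\lambda_{k_t}$ can only originate from $\betaag{i}{r}{t-1}$ (correctness rules out the byzantine clauses of $\sigmaof{}$), which requires $go(i)$ and hence a synced round. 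Your bookkeeping is sound, including the two subtle points you flag explicitly. What the direct argument buys is self-containedness and independence from the composability machinery of Table~\ref{tab:composability_matrix}; what the paper's argument buys is brevity and reuse, at the cost of relying on the (unproved-in-detail) observation that the statement of Lemma~\ref{lem:virtPreAware} is indeed expressible as a safety property preserved under the composition. Either is acceptable.
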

\begin{proof}
	This again follows from Lemma \ref{lem:virtPreAware} for the synchronous agents extension, as the statement of this lemma is a safety property of $\extension[S]$ and by Lemma~\ref{lem:lss_sat_all_saf}, $\extension[\mathit{LSS}]$ satisfies $S^S$.
\end{proof}

\begin{lemma} \label{lem:lock_step_synch_go_means_action}
	For any agent $i \in \agents$, any run $r \in \system{\chi}$, where $\chi \in \extension[\mathit{LSS}]$ and any timestamp $t \in \mathbb{N}$ it holds that
$		go(i) \in \betag[i]{r}{t} \Longleftrightarrow (\exists A \in \gtrueactions[i])\, A \in \betaag{i}{r}{t}.$
\end{lemma}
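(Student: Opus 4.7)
The plan is to mirror the proof of Lemma~\ref{lem:synch_go_means_action} almost verbatim, because by Def.~\ref{def:lss_ext} the lock-step synchronous extension keeps both of the ingredients that drove that proof: the joint protocols still lie in $\agprotocols[S]$ (since $\agprotocols[\mathit{MC}_{\mathit{BCh}}]\cap\agprotocols[S]\subseteq\agprotocols[S]$), and the action side of $\transitionfrom[B\circ S,B]{}{}$ is still the standard (byzantine) action filter $\filterag[B]{i}{}{}$. Alternatively, the statement is a per-round safety property of $\extension[S]$ established in Lemma~\ref{lem:synch_go_means_action}, so Lemma~\ref{lem:lss_sat_all_saf} (``$\extension[\mathit{LSS}]$ satisfies all safety properties of its constituent extensions'') gives the conclusion directly; the direct argument is preferable, however, because it also checks that the event filter composition does not undermine the synchronous argument.

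For the direction~$\Rightarrow$, I would assume $go(i)\in\betag[i]{r}{t}$. Because the joint protocol is synchronous, Def.~\ref{def:synch_ag_protocols} forces $\clock\in D$ for every $D\in\agprotocol{i}{r_i(t)}$, so the adversary's choice $X_i$ is non-empty and its labeling $\alphaag{i}{r}{t}$ is a non-empty subset of $\gtrueactions[i]$. Since $go(i)\in\betae{r}{t}$, Def.~\ref{def:std_ac_filter} leaves $\alphaag{i}{r}{t}$ untouched, so $\betaag{i}{r}{t}=\alphaag{i}{r}{t}$ contains some $A\in\gtrueactions[i]$.

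For the direction~$\Leftarrow$, if some $A\in\gtrueactions[i]$ lies in $\betaag{i}{r}{t}$, then by Def.~\ref{def:std_ac_filter} it must be the case that $go(i)\in\betae{r}{t}$ (otherwise the filter would wipe out all actions of agent~$i$). By $t$-coherence, $go(i)$ is then the unique member of $\sevents[i]\cap\betae{r}{t}=\betag[i]{r}{t}$.

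The only point requiring a sanity check is that the composed event filter $\filtere[B\circ S]{}{}=\filtere[B]{}{}\circ\filtere[S]{}{}$ used in $\transitionfrom[B\circ S,B]{}{}$ does not strip the $go(i)$ that the forward direction depends on: by Def.~\ref{def:filterSyn}, $\filtere[S]{}{}$ removes $go$~events only when some agent~$j$ has no system event at all, and the causal filter $\filtere[B]{}{}$ only removes uncaused receive events, never elements of $\sevents[i]$. Hence once a $go(i)$ survives the adversary phase in a synced round, it survives the whole filtering phase, and no obstacle arises. I expect this to be the entire difficulty; the rest is a direct transcription of Lemma~\ref{lem:synch_go_means_action}.
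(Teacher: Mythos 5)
Your proposal is correct and matches the paper's approach: the paper proves this lemma simply by declaring it analogous to Lemma~\ref{lem:synch_go_means_action}, which is exactly the transcription you carry out (synchronous joint protocols force $\clock\in D$, so $go(i)$ yields a non-empty action set, and the standard action filter gives the converse). Your additional check that $\filtere[B\circ S]{}{}$ cannot strip a $go(i)$ is a harmless bonus — the hypothesis $go(i)\in\betag[i]{r}{t}$ already refers to the post-filter set, so the forward direction never depends on it.
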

\begin{proof}
	Analogous to the proof of Lemma \ref{lem:synch_go_means_action} for synchronous agents.\looseness=-1
\end{proof}

\begin{lemma}[Lock-step Synchronous Brain-in-the-Vat Lemma]
\label{lem:lock_step_synch_compose-fake-freeze}
	Let $\agents=\llbracket 1;n\rrbracket$ be a set of agents with joint protocol $\joinprotocol{}{}^{\mathit{SMC}_{\mathit{BCh}}} = (\agprotocol{1}{},\dots,\agprotocol{n}{}) \in (\agprotocols[MC_{\mathit{BCh}}] \cap \agprotocols[S])$,
	let $\envprotocol{}^{\mathit{SC}_{\agents^2}} \in \envprotocols[\mathit{SC}_{\agents^2}]$ be the protocol of the environment,
    for $\chi \in \extension[\mathit{LSS}]$, where $\chi = ((\envprotocol{}^{\mathit{SC}_{\agents^2}}, \globalstates(0), \transitionfrom[B \circ S, B]{}{}, \EDelC), \joinprotocol{}{}^{\mathit{SMC}_{\mathit{BCh}}})$, let $r \in \system{\chi}$,
	let $i \in \agents$ be an agent, let $t>0$ be a timestamp and let $\adj=[B_{t-1}; \dots ;B_0]$ be an adjustment of extent $t-1$ satisfying
$		B_m = (\rho^m_1,\, \dots,\, \rho^m_n) $
	for all $0 \leq m \leq t-1$ with
    $\rho^{m}_i=\improvednewfakerule{i}{m}{}{}$ and 
    for all $j\ne i$ 
    $\rho^{m}_j \in\{\newfreezerule{},\newffreezerule{j}{}\}$.
If the protocol $\envprotocol{}^{\mathit{SC}_{\agents^2}}$ makes
\begin{compactitem} 
\item agent $i$ gullible,
\item every agent $j\ne i$ delayable and fallible if $\rho^m_j = \newffreezerule{j}{}$ for some $m$, 
\item all remaining agents delayable, 
\end{compactitem} 
then each run $r'\in R\bigl({\tauprotocol{B \circ S, B}{\envprotocol{}^{\mathit{SC}_{\agents^2}}}{\joinprotocol{}^{\mathit{SMC}_{\mathit{BCh}}}}},{r},{\adj}\bigr)$ satisfies the following properties:
    \begin{compactenum}
		\item $r' \in \system{\chi}$.
        \item\label{lem:synch_compose-fake-freeze:i-same} $(\forall m \le t)\  \run[']{i}{m}=\run{i}{m}$;
        \item\label{lem:synch_compose-fake-freeze:j-frozen} $\left(\forall m \le t)(\forall j \neq i \right)\  \run[']{j}{m}=\run[']{j}{0}$;
		\item\label{lem:synch_compose-fake-freeze:i-bad} $(i,1) \in \failed{r'}{1}$ and thus $(i,m) \in \failed{r'}{m'}$ for all $m'\geq m>0$;
		\item\label{lem:synch_compose-fake-freeze:only-i-failed} $\agentsof{\failed{\run[']{}{t}}{}} = \{i\} \cup \{j \ne i \mid (\exists m \leq t-1)\  \rho^m_j = \newffreezerule{j}{}\}$;
		\item\label{lem:synch_compose-fake-freeze:j-no-events} $\left(\forall m < t)\  (\forall j \neq i \right)\  \betae[j]{r'}{m}\subseteq \{\failof{j}\}$. 
		More precisely, $\betae[j]{r'}{m}=\varnothing$ if{f} $\rho^m_j = \newfreezerule{}$ and  $\betae[j]{r'}{m} = \{\failof{j}\} $ if{f} $\rho^m_j = \newffreezerule{j}{}$;
        \item\label{lem:synch_compose-fake-freeze:i-no-correct-events} $(\forall m < t)\  \betae[i]{r'}{m}\setminus\betaf[i]{r'}{m} =\varnothing$;
        \item\label{lem:synch_compose-fake-freeze:all-no-actions} $(\forall m < t) (\forall j \in \agents)\ \betaag{j}{r'}{m}=\varnothing$.
    \end{compactenum}
\end{lemma}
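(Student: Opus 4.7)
The plan is to adapt the proof of Lemma~\ref{lem:synch_compose-fake-freeze} to the lock-step setting, accommodating the composed causal filter $\filtere[B]{}{}$, the broadcast joint-protocol restriction, the synchronous-communication environment restriction, and the reliable delivery admissibility condition $\EDel_{\agents^2}$. The enabling observation is that the refined intervention $\improvednewfakerule{i}{m}{}$ eliminates every outgoing global action of the ``brain'', so its byzantine behaviour stays purely imaginary and thus invisible to the other agents---this is precisely the property that was absent for $\newfakerule{i}{m}{}$ and that allowed the reliable lock-step channels to betray the brain's isolation.

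First I would verify transitionality of $r'$ under $\tauprotocol{B \circ S, B}{\envprotocol{}^{\mathit{SC}_{\agents^2}}}{\joinprotocol{}^{\mathit{SMC}_{\mathit{BCh}}}}$ and then discharge the admissibility condition, together establishing property~(1). Gullibility of $i$ together with delayability/fallibility of the remaining agents guarantees that the event sets prescribed by $\improvednewfakerule{i}{m}{}$, $\newfreezerule{}$, and $\newffreezerule{j}{}$ all lie in the range of $\envprotocol{}^{\mathit{SC}_{\agents^2}}$. Since no intervention prescribes any $go(\cdot)$ before time~$t$, the standard action filter $\filterag[B]{i}{}{}$ empties every action set; the synchronous-agents event filter $\filtere[S]{}{}$ is harmless because it only ever strips $go(\cdot)$ events, which are absent anyway; and the new causal filter $\filtere[B]{}{}$ is also harmless because the prescribed events are solely of the form $\fakeof{i}{E}$, imaginary $\fakeof{i}{\mistakefor{\tick}{A}}$, and sync-errors from $\sevents[i]$, together with at most $\failof{j}$ for other agents---none of which is a receive event. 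The broadcast protocol restriction is vacuous as no actions are issued. For admissibility, $\EDel_{\agents^2}$ is vacuous on rounds $0$\textonehalf{}, \dots, $(t-1)$\textonehalf{}: by the very definition of $\improvednewfakerule{i}{m}{}$ no correct send occurs and every byzantine-send event of the form $\fakeof{i}{\mistakefor{A'}{A}}$ with $A$ a send has been replaced by $\fakeof{i}{\mistakefor{\tick}{A}}$, so no outstanding delivery obligation is ever generated; hence by non-excluding-ness of $\chi$ the tail from time~$t$ onwards can be chosen to complete $r'$ into a $\chi$-consistent run.

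Finally, properties~(2)--(8) follow by direct inspection, essentially verbatim from the synchronous argument. Property~(2) reduces to showing that the update step for $i$ triggered by $\improvednewfakerule{i}{m}{}$ produces the same local view as the original round~$m$\textonehalf{} of~$r$; this rests on the localization function $\sigmaof{\cdot}$ mapping $\fakeof{i}{E}$ to $\mathit{local}(E)$ and $\fakeof{i}{\mistakefor{\tick}{A}}$ to the same local action as the original $\fakeof{i}{\mistakefor{A'}{A}}$ would. Properties (3), (6) and (8) are immediate from the empty action prescriptions and the event sets of $\newfreezerule{}$/$\newffreezerule{j}{}$; (4) holds because $\improvednewfakerule{i}{0}{}$ always injects at least a sync-error from $\sevents[i]\setminus\{go(i)\}$ or an event into $\bevents[i]$, making $(i,1) \in \failed{r'}{1}$; (5) is a straightforward case split on whether any $\rho^m_j = \newffreezerule{j}{}$; and (7) follows because every event in $\betae[i]{r'}{m}$ belongs to $\bevents[i]\cup\{\sleep{i},\hib{i}\}\subseteq\fevents[i]$, so none is a correct event. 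The principal obstacle---and the sole place where the improvement over $\newfakerule{i}{m}{}$ is genuinely needed---is the interaction between the causal filter, the reliable-delivery condition, and property~(3): the original rule would have injected brain-originated sends into the environment's history that reliable communication would then force some honest agent to receive, shattering the frozen-initial-state isolation of the other agents.
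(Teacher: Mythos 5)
Your proposal follows essentially the same route as the paper's proof: both reduce to the synchronous/asynchronous Brain-in-the-Vat argument by observing that the interventions prescribe no $go$ events and no correct events before time~$t$, so that the composed causal filter, the synchronous event filter, the broadcast protocol restriction, the synchronous-communication environment restriction, and $\EDel_{\agents^2}$ are all rendered vacuous, with $\improvednewfakerule{i}{m}{}$ specifically ensuring that no (fake) send enters the environment's history and hence no delivery obligation ever arises to breach the other agents' isolation. The only cosmetic difference is your appeal to non-excludingness to complete the run beyond time~$t$, which the paper does not invoke in this lemma (that step belongs to the lemma's application, as in Theorem~\ref{lem:synch_no-k-occurred}, rather than to the lemma itself).
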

\begin{proof}
	The proof is (similar to Lemma \ref{lem:synch_compose-fake-freeze}) analogous to the original Brain-in-the-Vat Lemma~\cite{KPSF19:FroCos}
	, since by Def.~\ref{def:bitv_interv} of the interventions $\newfreezerule{}$, $\newffreezerule{i}{}$, and $\improvednewfakerule{i}{t}{}$ it holds that
	\begin{equation} \label{eq:lss_adj_no_goes_or_corr_ev}
		\begin{gathered}
			\bigl(\forall r'\in R\bigl({\tauprotocol{B \circ S, B}{\envprotocol{}^{\mathit{SC}_{\agents^2}}}{\joinprotocol{}^{\mathit{SMC}_{\mathit{BCh}}}}},{r},{\adj}\bigr)\bigr)(\forall j \in \agents)(\forall m \in \mathbb{N} \text{ s.t. } 0 \le m < t) \\
			go(j) \notin \betae[j]{r'}{m} \ \wedge \ \betaout[j]{r'}{m} = \varnothing.
		\end{gathered}
	\end{equation}
	By Def.~\ref{def:lss_ext} of the lock-step synchronous agents extension both its set of environment protocols and its admissibility condition from Def.~\ref{def:EDelC} only restrict runs (respectively environment protocols) w.r.t. correct receive events (see \eqref{eq:sc_env_prot} and \eqref{eq:tbc_env_prot}).
	By \eqref{eq:lss_adj_no_goes_or_corr_ev} however correct events do not event occur in any such runs $r'$.
	Furthermore the synchronous agents event filter function by Def.~\ref{def:filterSyn} only additionally removes $go$ events, which by \eqref{eq:lss_adj_no_goes_or_corr_ev} also are irrelevant for such runs $r'$.
	Additionally \eqref{eq:lss_adj_no_goes_or_corr_ev} makes the set of joint protocols superfluous for this lemma, hence the proof from~\cite{KPSF19:FroCos} applies for the lock-step synchronous agents extension as well.
\end{proof}

Here are some new properties unique to the lock-step synchronous extension.

\begin{lemma} \label{lem:lss_ag_send_inst_recv}
	Whenever a correct agent $i \in \agents$ in an agent context $\chi \in \extension[\mathit{LSS}]$ sends a message $\mu$ in round $t$, it sends $\mu$ to all agents and $\mu$ is received by all agents in the same round $t$.
\end{lemma}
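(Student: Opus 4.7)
The plan is to combine three safety properties that $\extension[\mathit{LSS}]$ inherits from its constituent extensions by Lemma~\ref{lem:lss_sat_all_saf}: the broadcast joint-protocol restriction, reliable delivery, and the synchronous-communication timing bound. Each one supplies exactly one of the facts we need, and the conclusion is obtained by lining them up.

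First I would use broadcast to turn a single send into simultaneous multicast to all agents. Since $i$ is correct and performs a send action in round $t$\textonehalf{}, Lemma~\ref{lem:lock_step_synch_go_means_action} gives $go(i)\in\betag[i]{r}{t}$, so the adversarially chosen $X_i\in P_i(r_i(t))$ survives the standard action filter intact. The joint-protocol restriction from Def.~\ref{def:broadcast_extension} (i.e.\ Def.~\ref{def:multicast_extension} with $\mathit{BCh}=(\{\agents\},\dots,\{\agents\})$) together with~\eqref{eq:mc_ag_protocols} forces $\recipient{X_i}{\mu_k}=\agents$ whenever this set of recipients is nonempty. Hence, if $i$ sends any copy $\mu_k$ to some agent, then $\gsend{i}{j'}{\mu}{id(i,j',\mu,k,t)}\in\betaag{i}{r}{t}$ for every $j'\in\agents$ after the labeling phase.

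Next I would invoke reliable delivery to obtain a receive round for each recipient. The admissibility condition $\EDel_{\agents^2}$ from Def.~\ref{def:EDelC} with $C=\agents^2$ guarantees that for every $j'\in\agents$ there is some round $t_r$\textonehalf{} in which $\grecv{j'}{i}{\mu}{id(i,j',\mu,k,t)}\in\betae[j']{r}{t_r}$. Finally I would pin $t_r$ to $t$ by two opposing bounds. On one side, since $\envprotocol{}^{\mathit{SC}_{\agents^2}}\in\envprotocols[\mathit{SC}_{\agents^2}]$, the time-bounded protocol restriction of Def.~\ref{def:tbc_env_prot} with $\delta\equiv 0$ forces $t+0\ge t_r$. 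On the other side, the causal event filter $\filtere[B]{}{}$ from~\eqref{eq:filter_function} admits the receive only when the matching send is either already in $h_\epsilon$ or issued in the same round $t_r$\textonehalf{}, which yields $t\le t_r$. Combined, $t_r=t$, so every agent receives $\mu$ in the same round $t$\textonehalf{} as the send.

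The only real obstacle is the timestamp bookkeeping: correctly distinguishing the send timestamp encoded in the GMI from the round index at which the receive event is scheduled, and checking that the history-indexing convention used by $\EDel_{\agents^2}$, by $\filtere[B]{}{}$, and by the time-bound in Def.~\ref{def:tbc_env_prot} all refer to the same round. Once these indices are aligned, the three safety properties combine directly to give the two claims---broadcast to all agents and instantaneous delivery in round $t$\textonehalf{}---without any further argument.
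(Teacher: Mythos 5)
Your proposal is correct and follows essentially the same route as the paper's proof: the broadcast joint-protocol restriction \eqref{eq:mc_ag_protocols} with \eqref{eq:mc_bc} forces the send to go to all agents, $\EDel_{\agents^2}$ guarantees some delivery round, and the synchronous time bound \eqref{eq:sc_env_prot} pins that round to $t$. Your version is in fact slightly more explicit than the paper's, which argues by contradiction and leaves the causal lower bound $t_r \ge t$ implicit, whereas you state both opposing bounds directly.
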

\begin{proof}
	When a correct agent $i$ sends a message, this is done by executing its protocol (as a fake send initiated by the environment protocol would immediately make this agent faulty).
	From the definition of the joint protocol (Def.~\ref{def:synch_ag_protocols}, (\ref{eq:mc_ag_protocols}) with (\ref{eq:mc_bc})), an agent can only send a message to all agents or no one.
	From the admissibility condition $\EDel_{\agents^2}$ (\ref{eq:EDelC}) and the synchronous communication environment protocol (\ref{eq:sc_env_prot}), it follows that a sent message has to be delivered to the receiving agent during the same round $t$ it was sent.
	Suppose by contradiction that a message, sent in round $t$, is not received by some agent in round~$t$.
	By (\ref{eq:EDelC}), it follows that this message has to be correctly received at some later point in time $t' > t$.
	However by (\ref{eq:sc_env_prot}), a correct receive event can only happen during the same round of its corresponding send event, thus leading to a contradiction.\looseness=-1
\end{proof}


\begin{theorem}\label{thm:lss_virtAware}
	A correct agent $i$ with local history $h_i$ in a lock-step synchronous agents context can infer from $h_i$ the number of synced rounds that have passed.
	Formally, for an agent context $\chi \in \extension[\mathit{LSS}]$, a $\chi$-based interpreted system $\I = (\system{\chi}, \pi)$, a run $r \in \system{\chi}$ and timestamp $t \in \mathbb{N}$,
	$	(\intsys,r,t) \models H_{i}{\nsr{\NSR{r(t)}}}$.
\end{theorem}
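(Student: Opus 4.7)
The plan is to reduce this to the analogous statement for the synchronous agents extension (Theorem~\ref{lem:virtAware}) by showing that all the required ingredients are inherited by $\extension[\mathit{LSS}]$. Specifically, since by Lemma~\ref{lem:lss_sat_all_saf} the lock-step synchronous extension satisfies every safety property of its constituent extensions, in particular $S^S$, I can re-use Lemmas~\ref{lem:lss_virtProt} and~\ref{lem:lss_virtPreAware}, which are the lock-step analogues of the results that drove the proof in the purely synchronous setting.

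First I would unfold the hope modality: $H_i \nsr{\NSR{r(t)}} = \correct{i} \to K_i\bigl(\correct{i} \to \nsr{\NSR{r(t)}}\bigr)$. So I assume $(\intsys,r,t) \models \correct{i}$ (otherwise we are done trivially) and pick any $r' \in \system{\chi}$ and $t' \in \bbbt$ with $r'_i(t') = r_i(t)$ and $(\intsys,r',t') \models \correct{i}$. The goal reduces to showing $\NSR{r'(t')} = \NSR{r(t)}$.

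The key lemma to establish (by a straightforward induction on $t$) is that for any correct agent $i$ in $\extension[\mathit{LSS}]$, the number of occurrences of the local action $\clock$ in $r_i(t)$ equals $\NSR{r(t)}$. The forward direction uses Lemma~\ref{lem:lss_virtProt}: a correct agent receives $go(i)$ (and hence, by the synchronous joint protocol requirement in Def.~\ref{def:synch_ag_protocols}, performs $\clock$) exactly during synced rounds. The reverse direction uses Lemma~\ref{lem:lss_virtPreAware}: every block $\lambda_k$ in $r_i(t)$ containing an action came from a synced round, and by Def.~\ref{def:synch_ag_protocols} each such block must contain $\clock$. Together, synced rounds and $\clock$-containing blocks of $r_i(t)$ are in bijection.

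Applying this bijection to both sides, $\NSR{r(t)}$ equals the number of $\clock$'s in $r_i(t)$, which equals the number of $\clock$'s in $r'_i(t')$ (because $r_i(t) = r'_i(t')$), which in turn equals $\NSR{r'(t')}$ by correctness of $i$ at $r'(t')$. Hence $(\intsys, r', t') \models \nsr{\NSR{r(t)}}$, completing the proof. The main obstacle, such as it is, is being careful with the local-history bookkeeping in Def.~\ref{def:state-update}: one needs to confirm that in a synced round a correct agent always produces a genuine local-history entry (so the $\clock$ count matches the synced-round count), and that in a non-synced round no new entry containing $\clock$ is appended; both follow cleanly from Def.~\ref{def:filterSyn} together with Lemma~\ref{lem:lss_virtProt}. \qed
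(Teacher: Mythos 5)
Your proposal is correct and follows essentially the same route as the paper: the paper's proof simply reduces to Theorem~\ref{lem:virtAware} for synchronous agents (using $\agprotocols[\mathit{MC}_{\mathit{BCh}}] \cap \agprotocols[S] \subseteq \agprotocols[S]$), whose argument is exactly your unfolding of the hope modality followed by the key observation that, for a correct agent, the number of $\clock$ actions in its local history equals the number of synced rounds. You merely spell out in more detail (via Lemmas~\ref{lem:lss_virtProt} and~\ref{lem:lss_virtPreAware}) the bijection that the paper dismisses as ``not hard though tedious.''
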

\begin{proof}
	Analogous to the proof of Theorem~\ref{lem:virtAware} from the synchronous agents extension as $\agprotocols[\mathit{MC}_{\mathit{BCh}}] \cap \agprotocols[S] \subseteq \agprotocols[S]$.
\end{proof}

\begin{theorem} \label{thm:lock_step_synch_gl_int_K_faulty_appendix}
	(Copy of Theorem \ref{thm:lock_step_synch_gl_int_K_faulty})\\
	There exists an agent context $\chi \in \extension[\mathit{LSS}]$, where $\chi = \bigl((\envprotocol{}^{\mathit{SC}_{\agents^2}}, \allowbreak \globalinitialstates, \allowbreak \transitionfrom[B \circ S]{}{}, \allowbreak \EDel_{\agents^2}),\widetilde{\joinprotocol{}}^{\mathit{SMC}_{\mathit{BCh}}}\bigr)$,
	and a run $r \in \system{\chi}$, such that
	for agents $i,j \in \agents$, where $i \ne j$,
	some timestamp $t \in \mathbb{N}$,
	and a $\chi$-based interpreted system $\intsys = (\system{\chi}, \interpretation{}{})$
	\begin{equation*} \label{equ:lock_step_synch_gl_int_K_faulty}
		\kstruct{}{r}{t} \models H_{i}{\faulty{j}}.
	\end{equation*}
\end{theorem}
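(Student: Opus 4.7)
The plan is to design an agent-context in which every activation forces a broadcast of a distinguished message, exhibit a single-round run in which~$j$ alone fails to act, and then argue that the resulting gap in~$i$'s local history cannot be reconciled with~$j$ being correct in any indistinguishable world.

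Fix $\mu\in\msgs$ and define $\widetilde{\joinprotocol{}}^{\mathit{SMC}_{\mathit{BCh}}}\ce(\widetilde{P}_1,\dots,\widetilde{P}_n)$ by
\[
\widetilde{P}_k(h_k)\ce\Bigl\{\{\clock\}\cup\{\send{\ell}{\mu_{|h_k|}}\mid\ell\in\agents\}\Bigr\}
\]
for every $k\in\agents$ and every local state $h_k\in\localstates{k}$, so that $\widetilde{\joinprotocol{}}^{\mathit{SMC}_{\mathit{BCh}}}\in\agprotocols[S]\cap\agprotocols[\mathit{MC}_{\mathit{BCh}}]$ by Defs.~\ref{def:synch_ag_protocols} and~\ref{def:multicast_extension}. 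Choose $\envprotocol{}^{\mathit{SC}_{\agents^2}}\in\envprotocols[\mathit{SC}_{\agents^2}]$ whose range at time~$0$ contains a $0$-coherent set collecting $\sleep{j}$, $go(k)$ for every $k\ne j$, and all the receive events $\grecv{\ell}{k}{\mu}{id(k,\ell,\mu,1,0)}$ for $k\ne j$ and $\ell\in\agents$, and whose range at every later time is $\{\varnothing\}$. By Lemma~\ref{lem:lss_comb_saf} and the fact that no message is ever left undelivered (satisfying~\eqref{eq:EDelC}), there is a run $r\in\system{\chi}$ realising this schedule; fix it and set $t=1$. Round~$0.5$ is synced by Def.~\ref{def:virtRound}; $\sleep{j}\in r_\epsilon(1)\cap\fevents[j]$ gives $(j,1)\in\failed{r}{1}$; and no event from $\fevents[i]$ ever fires, so $(\I,r,1)\models\correct{i}$. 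By Def.~\ref{def:state-update} together with Def.~\ref{def:sig_loc_fun}, the update of $r_i$ at round~$0.5$ appends the single set
\[
\lambda_1\ce\{\clock\}\cup\{\send{\ell}{\mu_1}\mid\ell\in\agents\}\cup\{\recv{k}{\mu}\mid k\in\agents,\ k\ne j\},
\]
which carries $\clock$ but no $\recv{j}{\mu}$.

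Now consider any $(r',t')$ with $r'_i(t')=r_i(1)$ and $(\I,r',t')\models\correct{i}$. Since $r'_i(t')$ has exactly one non-initial entry and that entry contains $\clock$, Lemma~\ref{lem:lss_virtPreAware} supplies a synced round $(m{-}1).5$ in $r'$ for some $m\le t'$ such that the appended set equals $\lambda_1$. If $j$ were also correct at $r'(t')$, $j$ would in particular be correct at round $(m{-}1).5$; Lemma~\ref{lem:lss_virtProt} would then force $go(j)\in\betag[j]{r'}{m-1}$, $\widetilde{P}_j$ would commit $j$ to broadcasting a copy of~$\mu$, and Lemma~\ref{lem:lss_ag_send_inst_recv} would place $\recv{j}{\mu}$ in $\lambda_1$, contradicting the construction of $\lambda_1$. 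Hence $(\I,r',t')\models\faulty{j}$, which gives $(\I,r,1)\models K_i(\correct{i}\to\faulty{j})$ and, combined with $\correct{i}$ at $(r,1)$, the desired $H_i\faulty{j}$.

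The main technical point is the last step: one must convert the purely local observation ``my only active round lacks a message from~$j$'' into a universal statement across all worlds indistinguishable to~$i$. The two ingredients that make this possible are precisely the features distinguishing lock-step synchronous from merely synchronous systems: the $\clock$-based synced-round introspection of Lemma~\ref{lem:lss_virtPreAware} anchors the relevant synced round in any indistinguishable run, while the instantaneous broadcast of Lemma~\ref{lem:lss_ag_send_inst_recv} transforms the hypothesised correctness of~$j$ into a receive event that is necessarily recorded in $i$'s local history.
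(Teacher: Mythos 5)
Your proof is correct and follows essentially the same route as the paper's own (very terse) argument: a joint protocol forcing a broadcast plus $\clock$ on every activation, Lemmas~\ref{lem:lss_virtProt} and~\ref{lem:lss_ag_send_inst_recv} to convert the hypothesised correctness of $j$ into a $\recv{j}{\mu}$ entry in $i$'s sole active-round record, and the absence of that entry to refute $\correct{j}$ in every indistinguishable state — with the difference that you make explicit the environment protocol, the witnessing run, and the Lemma~\ref{lem:lss_virtPreAware}-based anchoring of the synced round, all of which the paper leaves implicit. One small slip: at time $0$ your protocol issues the copy $\mu_{|h_k|}=\mu_0$, so the scheduled receive events must carry GMI $id(k,\ell,\mu,0,0)$ rather than $id(k,\ell,\mu,1,0)$ (and $\lambda_1$ should contain $\send{\ell}{\mu_0}$); as written, the causal filter $\filtere[B]{}{}$ would discard the mismatched receives and $\EDel_{\agents^2}$ would then fail, but the fix is immediate and the indistinguishability argument is unaffected.
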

\begin{proof}
	Suppose the joint protocol is such that for all global histories $h \in \globalstates$
	\begin{equation} \label{eq:lss_sp_proto}
		\begin{gathered}
			\widetilde{\joinprotocol{}}^{\mathit{SMC}_{\mathit{BCh}}}(h) = \{(S_1,\, \dots,\, S_n) \mid \\
			(\forall i \in \agents)(\forall D \in S_i)(\exists \mu \in \msgs)\ \{\send{j}{\mu} \mid (\forall j \in \agents)\} \cup \{\clock\} \subseteq D\}.
		\end{gathered}
	\end{equation}
	meaning that every agent has to perform at least one broadcast in case it gets the opportunity to act.
	By Lemma \ref{lem:lss_virtProt} (agents execute their protocols only during synced rounds), Lemma \ref{lem:lss_ag_send_inst_recv} (whenever a message is sent by a correct agent, all agents receive it during the same round) and (\ref{eq:lss_sp_proto}), it follows that every agent receives at least one message from every correct agent during a synced round.
	\textit{Thus in all states, where $i$ is correct, it received a message from itself, but not from some agent $j$, $j$ has to be faulty.}
\end{proof}

\end{document}